\let\csname equation*\endcsname\relax
\let\csname endequation*\endcsname\relax
\definecolor{thmcolor}{rgb}{0.7,0,0}
\definecolor{democolor}{rgb}{0.15,0.24,0.55}
\definecolor{lightgray}{rgb}{0.65,0.65,0.65}
\definecolor{hypocolor}{rgb}{0.15,0.44,0.85}
\definecolor{lightsalmon}{rgb}{1.0, 0.63, 0.48}
\newtheorem{theorem}{Theorem}
\newtheorem{proposition}{Proposition}
\theoremstyle{definition}
\newtheorem{definition}{Definition}
\DeclareMathAlphabet{\mathb}{OML}{cmm}{b}{it}
\newcommand{\N}{\mathbb{N}}		% N double barre (entiers)
\renewcommand{\S}{\mathbb{S}}		% Q double barre (rationnels)
\newcommand{\R}{\mathbb{R}}		% R double barre (réels)
\newcommand{\syst}[1]{\left \{ \begin{array}{l} #1 \end{array} \right. \kern-\nulldelimiterspace}	% système
\newcommand{\prox}{\text{\normalfont prox}}
\renewcommand\bs[1]{\boldsymbol{#1}}
\begin{document}

\title[Novel approach for multiphoton microscopy imaging]{A Novel Variational Approach for Multiphoton Microscopy Image Restoration: from PSF Estimation to 3D Deconvolution}

\author{Julien Ajdenbaum$^1$, Emilie Chouzenoux$^1$, Claire Lefort$^2$, \\S\'egol\`ene Martin$^1$$^*$, Jean-Christophe Pesquet$^1$}

\address{$^1$ Centre de Vision Num\'erique, Universit\'e Paris-Saclay, Inria, CentraleSup\'elec, Gif-sur-Yvette, France \\ 
$^2$ XLIM Research Institute, UMR CNRS 7252, Universit\'e de Limoges, Limoges, France\\
$^*$ Corresponding author}

\begin{abstract}
 In multi-photon microscopy (MPM), a recent \emph{in-vivo} fluorescence microscopy system, the task of image restoration can be decomposed into two interlinked inverse problems: firstly, the characterization of the Point Spread Function (PSF) and subsequently, the deconvolution (i.e., deblurring) to remove the PSF effect, and reduce noise. 

The acquired MPM image quality is critically affected by PSF blurring and intense noise. The PSF in MPM is highly spread in 3D and is not well characterized, presenting high variability with respect to the observed objects. This makes the restoration of MPM images challenging. Common PSF estimation methods in fluorescence microscopy, including MPM, involve capturing images of sub-resolution beads, followed by quantifying the resulting ellipsoidal 3D spot. In this work, we revisit this approach, coping with its inherent limitations in terms of accuracy and practicality.
We estimate the PSF from the observation of relatively large beads (approximately $1\mu \text{m}$ in diameter). This goes through the formulation and resolution of an original non-convex minimization problem, for which we propose a proximal alternating method along with convergence guarantees.

Following the PSF estimation step, we then introduce an innovative strategy
to deal with the high level multiplicative noise degrading the acquisitions. We rely on a heteroscedastic noise model for which we estimate the parameters. We then solve a constrained optimization problem to restore the image, accounting for the estimated PSF and noise, while allowing a minimal hyper-parameter tuning. Theoretical guarantees are given for the restoration algorithm.

These algorithmic contributions lead to an end-to-end pipeline for 3D image restoration in MPM, that we share as a publicly available Python software. We demonstrate its effectiveness through several experiments on both simulated and real data. 
\end{abstract}

%%%% SECTION 1 %%%%%
\section{Introduction}

Multi-photon microscopy (MPM) is a non-invasive laser imaging technique that selectively induces fluorescence in a thin plane through localized nonlinear excitation while avoiding excitation elsewhere \cite{diaspro2006multi}. This technique enables three-dimensional imaging with infrared light, reaching depths that are at least double compared to traditional single-photon microscopy \cite{gobel2007imaging,centonze1998multiphoton,larson2011multiphoton}. Particularly, MPM contactless and non-invasive nature makes it suitable for imaging diverse objects, spanning from living organisms \cite{zipfel2003nonlinear} to materials \cite{jonard2022multiphoton}. 

Though MPM offers significant advantages, its widespread adoption remains limited due to degradations affecting the images, especially blur and noise, requiring innovative computational solutions to unlock its full potential.
Central to MPM imaging is the Point Spread Function (PSF). This function describes how the imaging system responds to a a Dirac impulse, in practice a single-point light object. In practical terms, this response appears as an extended spot in the image around the original object location, which is responsable for distortions and blurry aspects \cite{young2011effectsa,young2011effectsb,descloux2016aberrations}, particularly in the depth direction. As MPM typically operates as a non-coherent and space-invariant system, it is commonplace to model the captured observation as a convolution of the actual object with a blur kernel. In the following, we will use the terms \emph{PSF} and \emph{blur kernel} interchangeably as they refer to the same mathematical object. A major challenge with the PSF in MPM is its strong dependency on the medium \cite{dong2003characterizing}, implying that it should be individually estimated for each new acquisition. 

In MPM, the restoration process aims to recover the sought (non-degraded) object of interest, $\bar{x}\colon \R^3 \longrightarrow \R$, jointly with an estimation of the Point Spread Function (PSF), $\bar{h}\colon \R^3 \longrightarrow \R$, guided by the following linear observational model, at each location $\bs{u}$ of the 3D volume:
\begin{equation}\label{eq:blind_inverse_pb}
(\forall \bs{u}\in \R^3) \quad y(\bs{u}) = \mathcal{D}(\bar{h} \star \bar{x} + \alpha)(\bs{u}).
\end{equation}
In \eqref{eq:blind_inverse_pb}, $y\colon \R^3 \longrightarrow \R$ stands for the observation, $\alpha \in \R$ represents a background factor, and $\star$ the continuous 3D convolution operation, here to be assumed to be defined. Additionally, $\mathcal{D}$ refers to the noise model. In the computational imaging literature, there are two main ways of addressing the above inverse problem. Some methods directly tackle the \emph{blind} inverse problem represented by \eqref{eq:blind_inverse_pb}, i.e. they simultaneously estimate both $\bar{x}$ and $\bar{h}$ \cite{vorontsov2017new,prato2013convergent,huang2022unrolled,debarnot2021deepblur}. However, this approach can be computationally demanding due to its non-convex nature and potential underdetermined nature.
Another strategy is to break down the problem into two successive inverse problems. In this two-step approach, one first performs a calibration step \cite{krist2011years,hogbom1974aperture,dusch2007three,huang2021probabilistic,samuylov2019modeling}, providing an estimate of the PSF $\bar{h}$ in a controlled situation where $\bar{x}$ is a known (i.e., calibrated) entity. Then, utilizing the estimated PSF, one can infer an estimate of another, more complex, unknown object $\bar{x}$.

PSF calibration in MPM typically involves imaging sub-resolution fluorescent beads in a homogeneous medium \cite{cole2011measuring,doi2018high} or directly in the sample (\emph{in situ}) \cite{lefort2021famous,doi2018high,tiedemann2006image}. The image of such a bead captured with the MPM device appears as a spread spot, which is then fit with a shape function (e.g., Gaussian) to estimate the PSF characteristics \cite{tiedemann2006image,zhang2007gaussian,guo2011simple,chouzenoux2019optimal}. However, this method presents significant practical limitations. Firstly, handling beads as small as $0.2 \mu \text{m}$ in diameter requires special care, as sub-micrometric objects are difficult to even observe with the microscope. As a result, one often injects hundreds of beads into the sample, often leading to beads aggregation that compromises the accurate completion of the PSF calibration protocol. Secondly, observing the impulse response of the instrument on such tiny beads requires the use of high laser power, which can potentially damage the sample itself. These limitations highlight the need for an alternative and a more practical protocol for PSF estimation in MPM.

Following PSF estimation, the subsequent task is image deblurring and denoising.
% , aiming to recover an original image $\bar{\bs{x}}\in \R^N$ from its degraded version $\bs{y} = (y_i)_{1\leq i \leq N} \in \R^N$. This restoration challenge can be mathematically articulated as the inverse problem of determining $\bar{\bs{x}}$ such that:
% \begin{equation}\label{eq:inverse_problem}
% \bs{y} = \mathcal{D}(\bs{H} \ast  \bar{\bs{x}}).
% \end{equation}
% Here, $\bs{H} \ast  \in \R^{N\times N}$ symbolizes the blur operator tied to the microscope's PSF, and $\mathcal{D}\colon \R^N \rightarrow \R^N$ is the noise model. 
Over the years, several computational solutions have been proposed to solve this inverse problem \cite{dao2015model,strohl2015joint,sarder2006deconvolution} in the context of microscopy. A common strategy is to minimize a cost function which balances a data-fitting term with a regularization term \cite{difato2004improvement,monvel2003image,dey2006richardson}. Yet, such an approach often suffers from a time-intensive parameter tuning phase, made even more challenging by subjective decision-making in the absence of a clear ground-truth. Further difficulties emerge from consideration on the nature of the noise in MPM. While most studies assume a standard Gaussian additive noise \cite{doi2018high,danielyan2014denoising}, the noise in MPM images is more appropriately considered as multiplicative, as pointed out in \cite{crivaro2011multiphoton,niu2022boundary}. However, such a noise model can be tedious to deal with numerically at large-scale, often involving minimizing a non-smooth data-fidelity term \cite{zanella2009efficient,chouzenoux2015convex}. While this issue has been tackled in confocal microscopy \cite{vicidomini2009application,zanella2009efficient,bohra2019variance}, its exploration in MPM imaging remains scarce.

%In this work, we introduce a novel approach for estimating the PSF in MPM imaging. Our contribution to the field is threefold.
%Firstly, we propose a new experimental protocol that involves using larger diameter beads (approximately $1 \mu \text{m}$) instead of sub-resolution beads. This modification allows for straightforward manipulation and control of the number of beads in the samples, typically around a dozen, while utilizing a low-to-medium laser power, mitigating the potential for sample damage.
%Secondly, we address the PSF estimation inverse problem with an optimization-based approach, following a similar spirit to that of FIGARO \cite{figaro}. To achieve this, we fit a convolution kernel using a Gaussian shape prior which helps regularizing the problem and estimating the PSF accurately.
%Lastly, to solve the minimization problem effectively, we employ a robust alternating proximal algorithm. The Python code is publicly available\footnote{\url{https://github.com/JulienAjdenbaum/MPM}}, to ensure transparency and reproducibility.
In this work, we present a novel approach for addressing the MPM image restoration inverse problem in an end-to-end fashion.
Our comprehensive restoration pipeline revisits the conventional restoration protocol from acquisition to the final restored outcome. Within this pipeline, we introduce two major advancements:

\begin{enumerate}
\item \textbf{PSF Calibration Step.} Our contributions regarding the estimation of the instrumental PSF are two-fold. 
\begin{itemize}
\item We introduce a novel optimization problem formulation for estimating the a Gaussian-shaped PSF assuming large diameter beads as opposed to sub-resolution beads. Our selected bead size is small enough to accurately capture small-scale aberrations and system imperfections, while effectively mitigating the practical issues associated with sub-resolution beads.
\item Next, we address the PSF estimation inverse problem by deploying an alternating proximal approach whose convergence is proved.
\end{itemize}
\item \textbf{Image restoration Step.}
Our contributions to the image restoration phase encompass two key elements. 
\begin{itemize}
\item 
First, we adopt an additive heteroscedastic noise model that approximates the Poisson-Gaussian noise. We then introduce a quantization-based image analysis method to estimate the noise model parameters.
\item Additionally, we address the parameter tuning issue raised in penalized restoration approaches, by introducing an original constrained formulation of the image restoration inverse problem. Our formulation introduces an interpretable upper bound on the data fidelity term, making the restoration stage stable, reliable, and parameter-free. The problem resolution is then addressed with the local subspace Majorization-Minimization algorithm proposed in \cite{chouzenoux2022local}, for which we show the convergence in our context.
\end{itemize}
\end{enumerate}
We demonstrate the effectiveness of our two-step restoration pipeline on both simulated and real-world data acquired with an Olympus XLPLN25XWMP two-photon microscope. The associated Python code is publicly shared\footnote{\url{https://github.com/SegoleneMartin/biphoton}}, ensuring reproducible research.

The outline is as follows. Section \ref{sec:PSF} introduces our PSF estimation method, and validation experiments on synthetic data. In Section \ref{sec:resto}, we introduce our noise modeling and image restoration approaches, and we assess them on synthetic datasets. Lastly, in Section \ref{sec:mouse}, we present the experimental evaluation of our complete pipeline to the imaging of real mouse muscle data.

%%%% SECTION 2 %%%%%
\section{A novel approach to PSF calibration using large beads}\label{sec:PSF}

In this section, we introduce our novel approach for estimating the PSF from acquisition of calibrated fluorescent beads with a diameter of $1\mu \text{m}$, larger than the microscope resolution. 

\subsection{Preparing the data}
For performing the calibration stage, multiple fluorescent beads with known size, are observed under the microscope in one comprehensive 3D image. The PSF estimation aims at fitting a model on each individual bead. To do so, the initial large image is cropped, to form several volumes of interest each containing an isolated bead. The cropping process we propose consists in applying a standard Wiener filtering to reduce noise, followed by a binarization using a predefined threshold (in practice, a given pourcentage of the maximum intensity), and an automatic connected component labeling algorithm \cite{wu2005optimizing}. Each selected component then defines a cropped region, to which we apply the PSF estimation method. Under the assumption of stationarity of the PSF model within the observed volume, the final PSF estimate can be obtained by averaging the models fitted in each volume interest, to get a more accurate overall estimation.
We now explicit our PSF estimation process, given a volume containing a single bead. An example of image of a polystyrene bead with $1$ $\mu m$ diameter (FluoSphere$^{\text{TM}}$ Carboxylate-Modified Microsphere, manufactured by Thermofisher, product number: F8823), acquired by our Olympus XLPLN25XWMP two-photon microscope is displayed in Figure \ref{fig:bead_image}.%{The bead used is a .} %\textcolor{red}{In all experiments presented in this paper, the beads used were manufactured by (?) and the acquisitions were performed with the Olympus XLPLN25XWMP two-photon microscope.}

\begin{figure}
    \centering
    \includegraphics[scale=0.4]{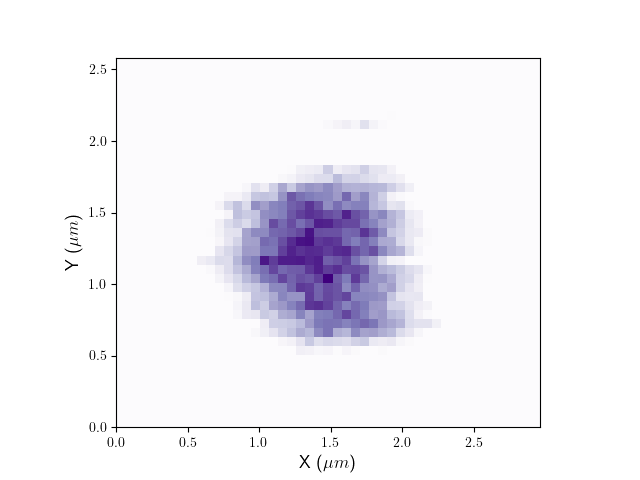} \includegraphics[scale=0.4]{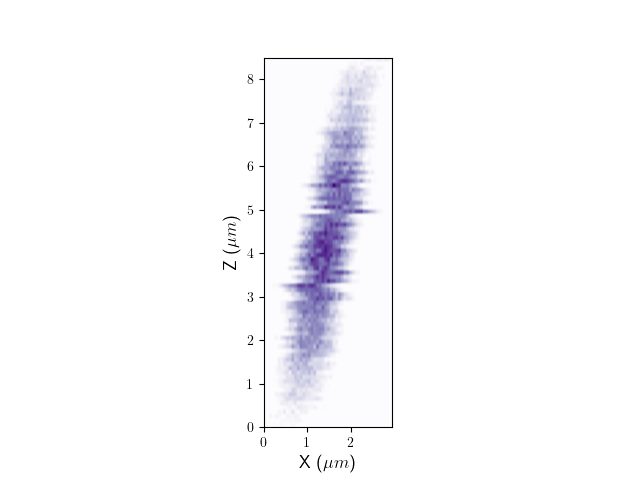}
    \caption{Two-photon acquisition of a polystyrene bead of $1\mu$m of diameter. XY plane section (left) and XZ plane section (right). The acquisition was carried out with an excitation wavelength of $810 \text{nm}$. The voxel size was fixed to $0.049 \times 0.049\times 0.1 \mu$m$^3$.}
    \label{fig:bead_image}
\end{figure}

\subsection{Modeling of the problem}

Since we are in a calibration context, the true bead characteristics are known. In particular, the experimenter selects the bead diameter, and a spherical shape can be assumed. Moreover, the position of the bead can be easily deduced by a basic identification of the centroid of the bead image. This allows us to build a complete model for the observed object, up to a potential multiplicative factor in the pixel intensities.

For the remainder of the paper, let us introduce the following notations: $\Vert \cdot \Vert$ represents the Euclidean norm on $\R^N$, $\mathbf{1}_N$ the unit vector of $\R^N$, $\Vert \cdot \Vert_{\rm F}$ the Frobenius norm on $\R^{N\times N}$, $2^{\mathcal{C}}$ the set of all subsets of a given set $\mathcal{C}$, $\mathcal{S}_3$ the set of symmetric matrices in $\R^{3\times 3}$, $\mathcal{S}_3^+$ the set of symmetric positive semi-definite matrices in $\R^{3\times 3}$, $\mathcal{S}_3^{++}$ the set of symmetric positive definite matrices in $\R^{3\times 3}$, and $\bs{I}_3$ the identity matrix of $\R^3$.

\paragraph{Continuous Modeling}

Let $x \colon \R^3 \longrightarrow \R$ represent a sphere with diameter $\tau$, modeling the bead fluorescence. The bead is assumed to emit with a uniform intensity, so we set, for all $\mathbf{u} \in \R^3$, $x(\bs{u}) = 1$ if $\Vert \mathbf{u} \Vert \leq \tau$, and $x(\bs{u}) = 0$ otherwise. The ground truth florescence intensity is then scaled by a factor $\bar{\beta}_{\mathrm{c}} \in (0, +\infty)$, which is usually unknown.

Let $y\colon \R^3 \longrightarrow \R$ be the image of the bead obtained with the MPM device. The observation model takes the form of \eqref{eq:blind_inverse_pb}. For the sake of simplicity, in this calibration step, we adopt an assumption of an additive i.i.d. Gaussian noise. Thus, we have:
\begin{equation}\label{eq:continuous_model}
(\forall \bs{u} \in \R^3) \quad  y(\bs{u}) = \bar{\alpha} + \bar{\beta}_{\mathrm{c}}(\bar{h} \star x)(\bs{u}) + \nu(\bs{u}),
\end{equation}
where $\nu$ is a function accounting for the Gaussian noise, $\bar{\alpha} \in \R$ is a background term, and $\bar{h} \in \mathrm{L}^1(\R^3)$ is a convolution kernel (i.e., the PSF) such that $\bar{h} (\bs{u}) \geq 0$ almost everywhere on $\R^3$ and satisfying
\begin{equation}
\int_{\R^3} \bar{h} (\bs{u}) \,\mathrm{d}\bs{u} =1.
\end{equation}
Given the observation model \eqref{eq:continuous_model}, the goal is to estimate the unknowns $\bar{\alpha}$, $\bar{\beta}_{\mathrm{c}}$, and $\bar{h}$. 

\paragraph{Discrete Modeling}
% Let $r = (r_i)_{1\leq i \leq d}$ denote the numerical resolution along each axis, representing the pixel size in micrometers of the acquired images. 
MPM acquisitions are done point-by-point, on a voxel grid with a given resolution. Hence, one only has access to a sampling of $y$ on a bounded discrete set $\Omega$ of $\R^3$, paved into $N \in \N$ voxels with mass centers $(\bs{\omega}_n)_{1 \leq n \leq N} \in (\R^3)^N$. The discretized version of the continuous problem \eqref{eq:continuous_model} then becomes:
\begin{equation}\label{eq:discrete_model}
 \bs{y} = \bar{\alpha}\mathbf{1}_N + \bar{\beta}(\bar{\bs{h}} \ast \bs{x})+ \bs{\nu},
\end{equation}
where $\bs{y}= (y(\bs{\omega}_n))_{1 \leq n \leq N} \in \R^N$ (resp. $\bs{x} =(x(\bs{\omega}_n))_{1 \leq n \leq N}\in \R^N$ and $\bs{\nu}=(\nu(\bs{\omega}_n))_{1 \leq n \leq N} \in \R^N$) is the discretization of $y$ (resp. $x$ and $\nu$) on $\Omega$, $ \bar{\bs{h}}=(\bar{h}_n)_{1\leq n \leq N} \in \R^N$ is a discrete convolution kernel satisfying $\bar{\bs{h}} \in \Delta_N$ where $\Delta_N$ is the simplex of $\R^N$ defined as
\begin{equation}\label{eq:def_simplex}
\Delta_N = \left\{\bs{h}=(h_n)_{1\leq n \leq N} \in \R^N\, \big| \, (\forall n \in \{1, \dots, N\})~h_n \geq 0 ~\text{and}~\sum_{n=1}^N h_n = 1\right\},
\end{equation}
and $\ast$ states for the discrete 3D convolution operator with appropriate padding (typically, circular padding). In addition, $\bar{\beta}>0$ is a scaling factor for the discrete model, which differs from the one in the continuous model, $\bar{\beta}_{\mathrm{c}}$. Indeed, observe that $\bar{\bs{h}}$ is not, per se, a discretization of $h$. In particular, an implicit rescaling is assumed to ensure the sum of $\bar{\bs{h}}$ equals $1$, for simplicity. 
The goal is thus to provide estimates of $\bar{\bs{h}}$, and the shift/scaling factors $(\bar{\alpha},\bar{\beta})$, given the knowledge of the measurements $\bs{y}$ and the bead model $\bs{x}$.

\subsection{Proposed minimization problem}

Following~\cite{chouzenoux2019optimal}, we introduce a prior so as to promote a PSF with a Gaussian shape. This leads to solving the following regularized variational problem:
\begin{equation}\label{eq:minimization_pb_PSF}
\underset{\genfrac{}{}{0pt}{}{\alpha \in \mathcal{A}, \, \beta \in \mathcal{B},}{ \bs{h} \in \Delta_N, \, \bs{D} \in \mathcal{S}_3^+}}{\mathrm{minimize}} ~\frac{1}{2} \Vert \bs{y} - \alpha \mathbf{1}_N -  \beta( \bs{h} \ast \bs{x}) \Vert^2 + \lambda \,
\mathcal{KL}\left(\bs{h} \,\Vert \, \zeta\bs{g}(\bs{D} + \epsilon_1 \bs{I}_3)\right) + \epsilon_2 \Vert \bs{D} \Vert_{\rm F}^2.
\end{equation}
Hereabove, $\epsilon_1 > 0$, so for every $\bs{D} \in \mathcal{S}_3^+$, matrix $\bs{D} + \epsilon_1 \bs{I}_3 \in \mathcal{S}_3^{++}$, and $\bs{g}(\bs{D} + \epsilon_1 \bs{I}_3) \in \R^N$ states for the discretization on $\Omega$ of the centered Gaussian density function with inverse covariance $\bs{D} + \epsilon_1 \bs{I}_3$, corresponding to
\begin{equation}\label{eq:def_gaussienne}
(\forall \bs{S} \in \mathcal{S}_3^{++})\,(\forall n \in \{1, \dots, N\}) \quad [\bs{g}(\bs{S})]_n = \sqrt{\frac{ |\bs{S}|}{(2\pi)^3}} \exp\left(-\frac{1}{2} \bs{\omega}_n^\top \bs{S}  \bs{\omega}_n \right).
\end{equation}
Moreover, $\lambda >0$ is a regularization parameter, and $\zeta\in (0, +\infty)$ is a known scaling parameter accounting for the measure of the discrete Gaussian function $\bs{g}(\bs{D} + \epsilon_1 \bs{I}_3)$ on the grid $\Omega$. When the grid is sufficiently fine, which we will assume throughout this discussion, $\zeta$ can be approximated by
\begin{equation}
\zeta \simeq r_X r_Y r_Z,
\end{equation} 
where $r = (r_X, r_Y, r_Z) \in (0, +\infty)^{3}$ represents the dimensions, in micrometers, of the voxels in the captured image (i.e., the image resolution). 
Sets $\mathcal{A} = [\alpha_{-}, \alpha_{+}]$ and $\mathcal{B}=[\beta_{-}, \beta_{+}]$ are real intervals corresponding to known bounds on $\alpha$ and $\beta$, respectively.
 $\mathcal{KL}$ denotes the Kullback-Leibler divergence between two discrete probability distributions, defined as
\begin{align}\label{eq:def_KL_div}
 (\forall (\bs{p}, \bs{q}) \in (\Delta_N \cap (0, +\infty))^2)\quad   \mathcal{KL} \left( \bs{p} \middle\| \bs{q} \right) &= \sum_{n=1}^{N}  p_n \mathrm{log} \left( \frac{p_n}{q_n }\right).
 \end{align}
 Finally $\epsilon_2 \in (0, +\infty)$ is an arbitrarily small penalty parameter.
% ensuring that the cost function in Problem \eqref{eq:minimization_pb_PSF} is lower bounded (and thus, the problem is well defined).
%\textcolor{red}{Pas clair: dire plutôt après (15), que l'existence d'un minimiseur est garantie, en se référant à la Proposition 6.}

 Let us denote, for conciseness:\begin{equation}
   (\forall \bs{h} \in \Delta_N) \, (\forall \bs{D} \in \mathcal{S}_3^+)\quad   \Psi(\bs{h}, \bs{D}) = \mathcal{KL}\left(\bs{h} \,\Vert \, \zeta\bs{g}(\bs{D} + \epsilon_1 \bs{I}_3)\right).
 \end{equation}
 To avoid potential numerical issues with the logarithmic term in $\mathcal{KL}$ divergence, we extend $\Psi$ into a twice continuously differentiable function $\tilde{\Psi}$ on $\Delta_N \times \mathcal{S}_3$, defined as
 \begin{multline}
          (\forall \bs{h} \in \Delta_N) \, (\forall \bs{D} \in \mathcal{S}_3)\quad \tilde{\Psi}(\bs{h}, \bs{D}) = \sum_{n=1}^N \Big( E(h_n) - h_n \ln \zeta  \\+ \frac{h_n}{2} \left(3 \ln(2 \pi) + \Phi(\bs{D}) + \bs{\omega}_n^\top (\bs{D}+ \epsilon_1 \bs{I}_3) \bs{\omega}_n \right)\Big),
 \end{multline}
 where, for any $\bs{D} \in \mathcal{S}_3$ diagonalized as $\bs{D} = \bs{U} \mathrm{Diag}(\bs{s})\bs{U}^\top$ with $\bs{U}$ an orthogonal matrix of $\R^{3 \times 3}$ and $\bs{s} \in \R^3$,
 \begin{align}\label{eq:def_Phi}
     \Phi(\bs{D}) =
       \sum_{i=1}^3  \phi(s_i),
 \end{align}
with $\phi$ defined as
 \begin{equation}
     (\forall t \in \R)\quad  \phi(t) = \syst{\begin{array}{ll}
       -  \ln(t + \epsilon_1) & \text{if } t \geq 0, \\
         - \ln(\epsilon_1) - \epsilon_1^{-1} t + \epsilon_1^{-2} t^2 & \text{otherwise,}
     \end{array}}
 \end{equation}
 % \begin{align}
 %     \Phi(\bs{D}) = \syst{\begin{array}{ll}
 %       - \sum_{i=1}^3 \ln(|s_i + \epsilon_1|) & \text{if } \bs{D} \in \mathcal{S}_3^+, \\
 %         -3 \ln(\epsilon_1) - \epsilon_1^{-1} \bs{s}^\top + \epsilon_1^{-2} \Vert \bs{s} \Vert^2 & \text{otherwise,}
 %     \end{array}}
 % \end{align}
 and 
 \begin{align}
     (\forall u \in \R)\quad E(u) = \syst{\begin{array}{ll}
        u \ln (u)  & \text{if } u >0,\\
        0 & \text{if } u = 0,\\
        +\infty & \text{otherwise.}
     \end{array}}
 \end{align}
In order to model the constraints in Problem \eqref{eq:minimization_pb_PSF}, we introduce, for any non-empty, closed, convex set $C \subseteq \R^n$, 
its indicator function $\iota_C$, namely $\iota_{C}(x) = 0$ if $x\in C$, $\iota_{C}(x) = +\infty$ otherwise. In a nutshell, Problem \eqref{eq:minimization_pb_PSF} is equivalent to minimize, for $(\alpha,\beta,\bs{h},\bs{D}) \in \mathbb{R} \times \mathbb{R} \times \mathbb{R}^N \times \mathcal{S}_3$,
\begin{multline}\label{eq:cost_function}
F(\alpha, \beta, \bs{h}, \bs{D}) = \frac{1}{2} \Vert \bs{y} - \alpha \mathbf{1}_N -  \beta( \bs{h} \ast \bs{x}) \Vert^2 + \lambda \tilde{\Psi}(\bs{h}, \bs{D}) \\ + \iota_{\mathcal{A}}(\alpha)  + \iota_{\mathcal{B}}(\beta)  + \iota_{\Delta_N}(\bs{h})  +  \iota_{\mathcal{S}_3^+}(\bs{D}+ \epsilon_1 \bs{I}_3) + \epsilon_2  \Vert \bs{D} \Vert_F^2.
\end{multline}
It will be established in Proposition \ref{prop:existence_min} that, since $\epsilon_2 \in (0, +\infty)$,
$F$ admits a minimizer.
%where for any non-empty, closed, convex set $C \subseteq \R^n$, $\iota_C$ is the indicator function of $C$, namely $\iota_{C}(x) = 0$ if $x\in C$, $\iota_{C}(x) = +\infty$ otherwise.

\subsection{Minimization algorithm}

The objective function in \eqref{eq:cost_function} is nonconvex. However, it is convex with respect to each variable. A common strategy for such scenarios consists of using alternating minimization techniques. At each iteration, we minimize 
$F$ with respect to a variable while the other variables are held constant. This simple method, also known as Block Coordinate Descent, has previously been utilized for PSF model fitting in microscopy, for instance in \cite{fortun2018reconstruction,li2018accurate}. Yet, closed-form updates are not always available. Moreover, convergence to an optimal solution using this method is only guaranteed when the partial functions on each variable and at each iteration are uniformly strongly convex \cite{luo1993error,beck2013convergence}. This stringent assumption is not met here. To ensure both efficiency and guarantees of convergence, a strategy enriched with proximal tools is preferable, as highlighted for instance in \cite{chouzenoux2019optimal,chouzenoux2016block,Bonettini2018}. The particular structure of \eqref{eq:cost_function} suggests a hybrid proximal alternating scheme (\cite{xu2013block,phan2023inertial,chouzenoux:hal-03591742}) that we present hereafter.

\subsubsection{Preliminaries}

In this section, we give some mathematical definitions that will be useful in the subsequent parts of the paper. Let $\mathcal{H}$ be a Hilbert space endowed with the scalar product $\langle \cdot, \cdot \rangle$.

\begin{definition}[Domain]
\cite[Def.1.4]{bauschke2019convex}
Let $f\colon \mathcal{H} \longrightarrow (-\infty, +\infty]$. The domain of $f$ is the set of all points in $\mathcal{H}$ where $f$ attains finite values. It is defined as
%\begin{equation}
$\mathrm{dom } f = \{ x \in \mathcal{H} \mid f(x) < +\infty \}$.
%\end{equation}
A function $f$ is said to be proper if its domain is nonempty. The set of proper, lower semi-continuous, convex functions of $\mathcal{H}$ is denoted by $\Gamma_0(\mathcal{H})$.
\end{definition}

Next, we introduce the concept of the proximity operator. This tool will be at the core of our iterative algorithm.

\begin{definition}[Proximity operator]\cite[Def.12.23]{bauschke2019convex}
Let $f \in \Gamma_0(\mathcal{H})$. The proximity operator of $f$ at $x \in \mathcal{H}$ is defined as
$
\mathrm{prox}_f(x) = \underset{u \in \mathcal{H}}{\mathrm{argmin}} f(u) + \frac{1}{2} \| u-x \|^2$.
%\end{equation}
In particular, for a non-empty closed convex set $C$, the proximity operator of $\iota_C$ coincides with the orthogonal projection onto $C$.
\end{definition}

One of the foundational concepts of convex analysis is the subdifferential, which generalizes the concept of a derivative.

\begin{definition}[Moreau subdifferential]\cite[Def.16.1]{bauschke2019convex}
Let $f \in \Gamma_0(\mathcal{H})$. The Moreau subdifferential of $f$, denoted by $\partial f$, is %defined as
\begin{align}
\partial f : \mathcal{H} &\longrightarrow 2^{\mathcal{H}} \nonumber\\
x &  \longmapsto \left\{ u \in \mathcal{H} \mid (\forall y \in \mathcal{H}) \langle y-x | u \rangle + f(x) \leq f(y) \right\}.
\end{align}
\end{definition}
When $f$ is both convex and Fr\'echet differentiable, its subdifferential at any point $x$ just contains its gradient.

Finally, we present some useful properties of the subdifferential and the proximity operator.

\begin{proposition}[]\cite[Chap.16]{bauschke2019convex}\label{prop:sub_prox}
Let $f\in \Gamma_0(\mathcal{H})$ and $g\in \Gamma_0(\mathcal{H})$.
\begin{enumerate}[label=(\roman*)]
    \item \label{prop:sub_prox_i}
If $\mathrm{int} (\mathrm{dom}\, g) \cap \mathrm{dom}\, f \neq \varnothing$ or $\mathrm{dom}\, g \cap \mathrm{int} (\mathrm{dom}\, f) \neq \varnothing$, then \begin{equation}
\partial f + \partial g = \partial(f + g).
\end{equation}
\item \label{prop:sub_prox_ii}
The following equivalence holds:
\begin{equation}
    (\forall x \in \mathcal{H}) \quad p = \prox_f(x) \, \iff \, x - p \in \partial f(p).
\end{equation}
\end{enumerate}
\end{proposition}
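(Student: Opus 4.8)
The plan is to prove the two parts separately, deriving the proximity-operator characterization \ref{prop:sub_prox_ii} from the sum rule \ref{prop:sub_prox_i} together with Fermat's rule. For \ref{prop:sub_prox_i}, the inclusion $\partial f + \partial g \subseteq \partial(f+g)$ holds unconditionally and is immediate: given $x \in \mathcal{H}$, $u \in \partial f(x)$ and $v \in \partial g(x)$, I would add the two subgradient inequalities $\langle y - x, u\rangle + f(x) \le f(y)$ and $\langle y - x, v\rangle + g(x) \le g(y)$, valid for all $y \in \mathcal{H}$, to get $\langle y-x, u+v\rangle + (f+g)(x) \le (f+g)(y)$, i.e. $u + v \in \partial(f+g)(x)$. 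No qualification is needed here.

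The substance of \ref{prop:sub_prox_i} is the reverse inclusion $\partial(f+g) \subseteq \partial f + \partial g$, which is where the constraint qualification enters. Fix $w \in \partial(f+g)(x_0)$. After replacing $f$ by $f - \langle w, \cdot\rangle$ (which shifts $\partial f$ by $-w$ and reduces the claim to $w = 0$) and translating so that $x_0 = 0$, the hypothesis becomes: $0$ minimizes $f + g$, and I must produce $v \in \partial g(0)$ with $-v \in \partial f(0)$. Setting $p(y) = f(y) - f(0)$ and $q(y) = g(y) - g(0)$, these are convex, vanish at $0$, and satisfy $p(y) \ge -q(y)$ for all $y$ by optimality. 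I would then separate the convex sets $\mathrm{epi}\, p = \{(y,\xi) : \xi \ge p(y)\}$ and $\{(y,\xi) : \xi \le -q(y)\}$ in $\mathcal{H} \times \R$, whose interiors are disjoint (a common interior point would force $-q(y) > \xi > p(y) \ge -q(y)$), the Hahn--Banach separation theorem supplying a nonzero continuous functional $(v, \rho)$ separating them.

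The hard part is ensuring the separating hyperplane is non-vertical, i.e. $\rho \ne 0$, so that normalizing gives a genuine subgradient $v$ rather than a degenerate vertical functional. This is precisely what the qualification $\mathrm{int}(\mathrm{dom}\, g) \cap \mathrm{dom}\, f \ne \varnothing$ (or its symmetric variant) buys: it forces one of the two convex sets to have nonempty interior at a point lying in the other's domain, so a purely horizontal functional $(v,0)$ could not separate their projections $\mathrm{dom}\, f$ and $\mathrm{dom}\, g$, whence $\rho \ne 0$. Rescaling to $\rho = 1$ and reading off the two halves of the separation inequality then yields $v \in \partial g(0)$ and $-v \in \partial f(0)$, so $0 = (-v) + v \in \partial f(0) + \partial g(0)$, which is the claim.

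For \ref{prop:sub_prox_ii}, fix $x$ and set $h = f + \tfrac12\|\cdot - x\|^2$. Since $f \in \Gamma_0(\mathcal{H})$ and the quadratic term is strongly convex with full domain, $h$ is proper, lower semicontinuous and strongly convex, hence admits a unique minimizer, which by definition is $p = \prox_f(x)$. By Fermat's rule, $p$ minimizes $h$ if and only if $0 \in \partial h(p)$. The quadratic $\tfrac12\|\cdot - x\|^2$ has domain $\mathcal{H}$ and is Fr\'echet differentiable with gradient $\cdot - x$, so the qualification in \ref{prop:sub_prox_i} holds trivially and $\partial h(p) = \partial f(p) + (p - x)$. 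Therefore $0 \in \partial f(p) + (p - x)$, i.e. $x - p \in \partial f(p)$; the converse direction follows by reversing these steps, using that $0 \in \partial h(p)$ characterizes the unique minimizer $p = \prox_f(x)$.
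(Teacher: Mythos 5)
The paper does not actually prove this proposition: it is quoted with a citation to Bauschke--Combettes, Chapter 16, so the relevant benchmark is the textbook's proof, and your route genuinely differs from it. Bauschke--Combettes obtain the sum rule (i) through Fenchel conjugacy: under a qualification condition they establish the exact infimal-convolution identity for $(f+g)^*$ in terms of $f^*$ and $g^*$ (via the Attouch--Br\'ezis theorem) and read off $\partial(f+g)=\partial f+\partial g$; and they prove (ii) directly from the definition of $\prox_f$ by a short manipulation of the subgradient inequality, independently of any sum rule. You instead give the classical Moreau--Rockafellar argument --- reduction to $w=0$, then epigraph/hypograph separation in $\mathcal{H}\times\R$ for the hard inclusion of (i) --- and derive (ii) as a corollary of (i) via Fermat's rule, which is legitimate since the quadratic $\tfrac12\Vert\cdot-x\Vert^2$ has full domain and is continuous, so the qualification holds trivially. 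Both routes are sound: the conjugacy route yields the result under weaker (closed-cone) qualifications and more structural information, while yours is more elementary and self-contained, at the cost of making (ii) depend on (i) where the textbook keeps them independent.

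One point you should tighten in (i): in an infinite-dimensional Hilbert space, Hahn--Banach separation of two convex sets is not available merely because their interiors are disjoint --- you need one of the sets to have nonempty interior, disjoint from the other set. You invoke the qualification only to rule out a vertical separating functional ($\rho=0$), but it is needed one step earlier, to license the separation itself. Concretely, if $\mathrm{int}(\mathrm{dom}\,g)\cap\mathrm{dom}\,f\neq\varnothing$, then $g$ (hence $-q$) is continuous on $\mathrm{int}(\mathrm{dom}\,g)$ --- this uses the nontrivial fact that a function in $\Gamma_0(\mathcal{H})$ is continuous, indeed locally Lipschitz, on the interior of its domain --- so the strict hypograph $\{(y,\xi)\,:\,\xi<-q(y)\}$ contains a nonempty open set; this both justifies applying the separation theorem and, as you then argue, forces $\rho\neq 0$ (with the sign $\rho>0$ fixed by letting $\xi\to+\infty$ in $\mathrm{epi}\,p$). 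With that made explicit, your normalization $\rho=1$ and the readoff $v\in\partial g(0)$, $-v\in\partial f(0)$ go through, and the argument is complete.
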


With these mathematical concepts explained, we are now ready to get into more details of our algorithm.

    \subsubsection{Proposed algorithm}
Guided by the methodology in \cite{xu2013block} and \cite{phan2023inertial}, we opt for an alternating minimization algorithm acting on the variables $\alpha$, $\beta$, $\bs{h}$, and $\bs{D}$. At each iteration of the algorithm, we update a specific variable employing one of the three following distinct schemes applied to the partial function with respect to this variable: an exact update step, a proximal step, or a proximal linearized step (also called forward-backward, or proximal gradient).

% . Given a sequence $(x^{(\ell)})_{\ell\in \N} \in (\R^n)^\N$ and a function $f \colon \R^n \longrightarrow \R$, the schemes correspond to:

% \begin{itemize}
% \item \textbf{Exact step:} This approach updates via:
% \begin{equation}
% x^{(\ell +1)} = \underset{x}{\mathrm{argmin}} f(x).
% \end{equation}
% \item \textbf{Proximal Point step:} Updated as:
% \begin{equation}
% x^{(\ell +1)} = \prox_{\gamma f} (x^{(\ell)}),
% \end{equation}
% where $\gamma \in (0, +\infty)$.

% \item \textbf{Proximal Linearized Step (or Forward-Backward):} For a function $f$ expressed as $f = g+h$, with $g$ being proximal-friendly and $h$ being differentiable with an L-Lipschitz gradient, the update is:
% \begin{equation}
% x^{(\ell +1)} = \prox_{\gamma g} \left( x^{(\ell)} - \gamma \nabla h(x^{(\ell)}) \right),
% \end{equation}
% and $\gamma \in (0, 2/L)$.
% \end{itemize}

Within our framework, the variables $\alpha$ and $\beta$ undergo exact step updates, $\bs{D}$ is updated through a proximal point step, and $\bs{h}$ is updated through a proximal linearized step. We provide the expressions for these updates in the section that follows. Our algorithm, called GENTLE (\textbf{G}aussian k\textbf{E}r\textbf{N}el fi\textbf{T}ting using \textbf{L}arge B\textbf{E}ads), is summarized in Algorithm \ref{algo:MPM_PSF}, where we use the short notation
\begin{equation}\label{eq:def_F_tilde}
(\forall (\alpha,\beta,\bs{h},\bs{D}) \in \mathbb{R} \times \mathbb{R} \times \mathbb{R}^N \times \mathcal{S}_3) \quad \tilde{F}(\alpha, \beta, \bs{h}, \bs{D}) = \lambda \tilde{\Psi}(\bs{h}, \bs{D}) + \iota_{\Delta_N}(\bs{h}),
\end{equation}
and the stepsizes $(\gamma_h,\gamma_D) \in (0,+\infty)^2$. 
{In addition, we denote $\bs{X} \in \R^{N \times N}$ the block Toepliz matrix such that, for all $(\bs{u},\bs{x}) \in (\R^N)^2$, $ \bs{X} \bs{u} = \bs{x} \ast \bs{u}$.}

\vspace{1cm}
\begin{algorithm*}[H]\label{algo:MPM_PSF}
\caption{\textbf{GENTLE}: \textbf{G}aussian k\textbf{E}r\textbf{N}el fi\textbf{T}ting using \textbf{L}arge B\textbf{E}ads}
\textbf{Inputs:} Let $(\alpha^{(0)}, \beta^{(0)}, \bs{h}^{(0)}, \bs{D}^{(0)}) \in \mathcal{A} \times \mathcal{B}  \times \Delta_N \times \mathcal{S}_3^+$, $ (\gamma_h,\gamma_D) \in (0, +\infty)^2$. \\
\For{$\ell=0, 1, \dots$}{
%  $\bs{h}^{(\ell+1)} = \mathrm{prox}_{\gamma_{h}}(h^{(i)} -\alpha \nabla h(h^{(i)}))$\\
$\alpha^{(\ell+1)} = \underset{\alpha \in \R}{\mathrm{argmin}} \, F(\alpha, \beta^{(\ell)}, \bs{h}^{(\ell)}, \bs{D}^{(\ell)}) $\\
$\beta^{(\ell+1)} = \underset{\beta \in \R}{\mathrm{argmin}} \, F(\alpha^{(\ell+1)}, \beta, \bs{h}^{(\ell)}, \bs{D}^{(\ell)}) $\\
%$\zeta^{(\ell+1)} = \prox_{\gamma_{\zeta}F(\alpha^{(\ell +1)}, \beta^{(\ell +1)}, \cdot,\bs{h}^{(\ell)}, \bs{D}^{(\ell)})}(\zeta^{(\ell)})$\\
$\bs{h}^{(\ell +1)} = \prox_{\gamma_{h} \tilde{F}}\left(\bs{h}^{(\ell)} - \gamma_{h} \beta^{(\ell+1)} {\bs{X}^\top} ( \bs{y} - \alpha^{(\ell+1)} \mathbf{1}_N -  \beta^{(\ell+1)} \bs{X} \bs{h}^{(\ell)})  \right)$\\
 $\bs{D}^{(\ell+1)} = \prox_{\gamma_{D}F(\alpha^{(\ell +1)}, \beta^{(\ell +1)},\bs{h}^{(\ell +1)}, \cdot)}(\bs{D}^{(\ell)})$}
 \textbf{Return:} $\alpha$, $\beta$, $\bs{h}$, $\bs{D}$.
\end{algorithm*}

    \subsubsection{Expressions of the updates}

    We now explicit the expressions for the four updates involved at each iteration of GENTLE. 

\begin{proposition}[Update on \texorpdfstring{$\alpha$}{a}]
Let $(\beta, \bs{h}, \bs{D}) \in  \R\times \R^N \times \mathcal{S}_3$. Then the minimizer of $F(\cdot,  \beta, \bs{h}, \bs{D}) $ at $\alpha$ is given by
\begin{equation}
\alpha =  \operatorname{proj}_{\mathcal{A}}\left( \frac{ 1}{N}( \bs{y} - \beta (\bs{h} \ast \bs{x}))^\top \mathbf{1}_N \right),
\end{equation}
where $\operatorname{proj}_{\mathcal{A}} $ is the orthogonal projection on the closed convex set $\mathcal{A}$, {reading $\operatorname{proj}_{\mathcal{A}} (\alpha) = \max\left(\alpha_{-}, \min\left(\alpha, \alpha_{+}\right)\right)$.}
\end{proposition}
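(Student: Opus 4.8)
The plan is to reduce the minimization of $F(\cdot, \beta, \bs{h}, \bs{D})$ to a one-dimensional constrained least-squares problem and then solve it in closed form. First I would note that, with $\beta$, $\bs{h}$, and $\bs{D}$ held fixed, every term in the cost function \eqref{eq:cost_function} is constant in $\alpha$ except the quadratic data-fidelity term and the indicator $\iota_{\mathcal{A}}(\alpha)$. Consequently, minimizing $F(\cdot, \beta, \bs{h}, \bs{D})$ over $\R$ is equivalent to solving
\[
\minimize{\alpha \in \mathcal{A}}{\tfrac{1}{2} \Vert \bs{r} - \alpha \mathbf{1}_N \Vert^2}, \qquad \bs{r} := \bs{y} - \beta(\bs{h} \ast \bs{x}),
\]
where I introduce $\bs{r}$ purely for brevity.

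Next I would make the objective explicit by completing the square. Using $\Vert \mathbf{1}_N \Vert^2 = N$, a direct expansion yields
\[
\tfrac{1}{2}\Vert \bs{r} - \alpha \mathbf{1}_N \Vert^2 = \tfrac{N}{2}\left(\alpha - \alpha^\star\right)^2 + c, \qquad \alpha^\star := \tfrac{1}{N}\,\bs{r}^\top \mathbf{1}_N = \tfrac{1}{N}\big(\bs{y} - \beta(\bs{h} \ast \bs{x})\big)^\top \mathbf{1}_N,
\]
where $c$ is a constant independent of $\alpha$. This displays the partial objective as a strictly convex parabola centered at $\alpha^\star$, the unconstrained minimizer being exactly $\alpha^\star$.

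Finally, I would conclude that the unique minimizer over the closed interval $\mathcal{A} = [\alpha_{-}, \alpha_{+}]$ is the point of $\mathcal{A}$ closest to $\alpha^\star$, that is, $\operatorname{proj}_{\mathcal{A}}(\alpha^\star)$; since $\mathcal{A}$ is an interval, this projection is precisely the clip $\max(\alpha_{-}, \min(\alpha^\star, \alpha_{+}))$, which is the claimed expression. As an alternative route, one could instead invoke Fermat's rule together with Proposition \ref{prop:sub_prox}\ref{prop:sub_prox_ii}, writing the optimality condition $0 \in \partial_\alpha F$ as a projection identity, but the completion of the square is the most transparent argument.

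There is no genuine obstacle in this proof: the computation is a routine one-dimensional least-squares fit. The only point deserving a line of justification is the reduction of ``minimize a strictly convex quadratic over a closed interval'' to ``project the global minimizer onto that interval,'' which follows immediately from the monotonic decrease of the parabola $\alpha \mapsto \tfrac{N}{2}(\alpha - \alpha^\star)^2$ on either side of its vertex $\alpha^\star$.
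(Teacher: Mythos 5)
Your proposal is correct, and it reaches the result by a genuinely different (more elementary) route than the paper. The paper's proof is purely convex-analytic: it writes the first-order condition $0 \in \partial f(\alpha)$ for $f = F(\cdot,\beta,\bs{h},\bs{D})$, splits the subdifferential via Proposition \ref{prop:sub_prox}\ref{prop:sub_prox_i} into the gradient of the quadratic term plus the normal cone $\partial \iota_{\mathcal{A}}(\alpha)$, rescales the resulting inclusion by $1/N$ (legitimate because a normal cone is a cone), and recognizes
$\frac{1}{N}(\bs{y}-\beta(\bs{h}\ast\bs{x}))^\top \mathbf{1}_N - \alpha \in \partial\iota_{\mathcal{A}}(\alpha)$
as exactly the projection characterization of Proposition \ref{prop:sub_prox}\ref{prop:sub_prox_ii} --- i.e., the ``alternative route'' you mention in passing is precisely the paper's argument. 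Your completion of the square avoids this machinery entirely: it exhibits the partial objective as $\tfrac{N}{2}(\alpha-\alpha^\star)^2 + c$ restricted to the interval $\mathcal{A}$, and the final clipping step follows from monotonicity of the parabola on each side of its vertex. What each approach buys: yours is self-contained and makes transparent why the curvature $N$ is irrelevant to the location of the constrained minimizer (projection is invariant under positive rescaling of the quadratic), and it also delivers uniqueness for free via strict convexity; the paper's subdifferential template, by contrast, is the one that scales to the remaining block updates of GENTLE (the $\beta$, $\bs{h}$, and $\bs{D}$ steps), where no closed-form square completion is available and the prox characterization is the natural tool. One cosmetic point common to both arguments: strictly speaking one should assume the fixed variables satisfy $\iota_{\mathcal{B}}(\beta)=\iota_{\Delta_N}(\bs{h})=\iota_{\mathcal{S}_3^+}(\bs{D}+\epsilon_1\bs{I}_3)=0$ (as holds along the iterates of Algorithm \ref{algo:MPM_PSF}), since otherwise the partial function is identically $+\infty$; neither your proof nor the paper's makes this explicit, and it is harmless.
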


\begin{proof}
Let $f= F(\cdot,  \beta, \bs{h}, \bs{D})$. Then, for any $\alpha \in \R$, 
\begin{align}
0 \in \partial f (\alpha) ~& \iff ~(\bs{y} - \beta (\bs{h}  \ast \bs{x}))^\top \mathbf{1}_N - \alpha N \in \partial \iota_{\mathcal{A}}(\alpha) \notag \\
& \iff ~ \frac{1}{N}(\bs{y} - \beta (\bs{h}  \ast  \bs{x}))^\top \mathbf{1}_N - \alpha  \in \partial \iota_{\mathcal{A}}(\alpha) \notag \\
& \iff ~ \alpha = \operatorname{proj}\left( \frac{ 1}{N}( \bs{y} - \beta (\bs{h} \ast \bs{x}))^\top \mathbf{1}_N \right),
 \end{align}
 where we used Proposition \ref{prop:sub_prox}(i) and (ii)
\end{proof}

\begin{proposition}[Update on \texorpdfstring{$\beta$}{b}]
Let $(\alpha, \bs{h}, \bs{D}) \in  \R\times \R^N \times \mathcal{S}_3$, and $\gamma_\beta \in (0,+\infty)$. Then the minimizer of of $ F(\alpha,  \cdot, \bs{h}, \bs{D}) $ is given by
\begin{equation}
 \beta = \operatorname{proj}_{\mathcal{B}} \left( \frac{( \bs{y} -\alpha  \mathbf{1}_N)^\top (\bs{h} \ast \bs{x}) }{\Vert \bs{h} \ast \bs{x}\Vert^2} \right).
\end{equation}
where $\operatorname{proj}_{\mathcal{B}} $ is the orthogonal projection on the closed convex set $\mathcal{B}$, {reading $\operatorname{proj}_{\mathcal{B}} (\beta) = \max\left(\beta_{-}, \min\left(\beta, \beta_{+}\right)\right)$.}
\end{proposition}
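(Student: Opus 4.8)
The plan is to exploit that, viewed as a function of $\beta$ alone with $(\alpha,\bs{h},\bs{D})$ frozen, the cost $F$ collapses to a one-dimensional convex problem. Indeed, in expression \eqref{eq:cost_function} every term except the quadratic data-fidelity $\frac12\Vert \bs{y}-\alpha\mathbf{1}_N-\beta(\bs{h}\ast\bs{x})\Vert^2$ and the indicator $\iota_{\mathcal{B}}(\beta)$ is constant with respect to $\beta$. Writing $f = F(\alpha,\cdot,\bs{h},\bs{D})$, it thus suffices to minimize a convex quadratic subject to a box constraint, and I would follow exactly the template of the preceding Update-on-$\alpha$ proof: characterize the minimizer through Fermat's rule $0\in\partial f(\beta)$ and then recognize the resulting inclusion as the optimality condition of a projection.

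Concretely, first I would set $\bs{v}=\bs{h}\ast\bs{x}$ and $\bs{r}=\bs{y}-\alpha\mathbf{1}_N$ and compute the derivative of the smooth part $q(\beta)=\frac12\Vert \bs{r}-\beta\bs{v}\Vert^2$, namely $q'(\beta)=\beta\Vert\bs{v}\Vert^2-\bs{v}^\top\bs{r}$. Since $q$ is finite (indeed differentiable) everywhere, $\mathrm{int}(\dom q)=\R$ meets $\dom\iota_{\mathcal{B}}=\mathcal{B}\neq\varnothing$, so the sum rule of Proposition \ref{prop:sub_prox}(i) applies and gives $\partial f(\beta)=\{q'(\beta)\}+\partial\iota_{\mathcal{B}}(\beta)$. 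Fermat's rule $0\in\partial f(\beta)$ then reads $\bs{v}^\top\bs{r}-\beta\Vert\bs{v}\Vert^2\in\partial\iota_{\mathcal{B}}(\beta)$.

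To finish, I would divide this inclusion through by $\Vert\bs{v}\Vert^2>0$. Because $\partial\iota_{\mathcal{B}}(\beta)$ is the normal cone to $\mathcal{B}$ at $\beta$, it is invariant under multiplication by a positive scalar, so the inclusion is equivalent to $\frac{\bs{v}^\top\bs{r}}{\Vert\bs{v}\Vert^2}-\beta\in\partial\iota_{\mathcal{B}}(\beta)$. Recognizing the pattern $c-\beta\in\partial\iota_{\mathcal{B}}(\beta)$ and invoking Proposition \ref{prop:sub_prox}(ii) together with $\prox_{\iota_{\mathcal{B}}}=\proj_{\mathcal{B}}$, I obtain $\beta=\proj_{\mathcal{B}}(\bs{v}^\top\bs{r}/\Vert\bs{v}\Vert^2)$, which is the announced formula since $\bs{v}^\top\bs{r}=(\bs{y}-\alpha\mathbf{1}_N)^\top(\bs{h}\ast\bs{x})$; the explicit clamp $\max(\beta_-,\min(\beta,\beta_+))$ is merely the projection onto the interval $\mathcal{B}=[\beta_-,\beta_+]$.

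The one genuine caveat, and the only step needing care, is the division by $\Vert\bs{v}\Vert^2$, which presupposes $\bs{v}=\bs{h}\ast\bs{x}\neq\bs{0}$. I would justify this by noting that $\bs{h}\in\Delta_N$ has unit mass (hence is nonzero) and $\bs{x}$ is the discretization of a nonvanishing bead indicator, so their convolution cannot be identically zero; this also makes $q$, and therefore $f$, strictly convex, guaranteeing a unique minimizer. Everything else, namely the domain qualification for the sum rule, the cone property of $\partial\iota_{\mathcal{B}}$, and the identity $\prox_{\iota_{\mathcal{B}}}=\proj_{\mathcal{B}}$, is routine. I would also remark that the parameter $\gamma_\beta$ appearing in the statement plays no role here, since the update on $\beta$ is exact rather than a proximal-gradient step.
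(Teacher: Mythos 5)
Your proof is correct and follows essentially the same route as the paper, which skips this proof as ``similar to the previous one'': Fermat's rule, the sum rule of Proposition \ref{prop:sub_prox}(i), positive rescaling of the normal-cone inclusion, and the prox characterization of Proposition \ref{prop:sub_prox}(ii), exactly as in the update on $\alpha$. Your explicit justification that $\bs{h}\ast\bs{x}\neq\bs{0}$ (via $\bs{h}\in\Delta_N$ and the nonvanishing bead, so the sum of the circular convolution equals $\sum_n x_n>0$) and your remark that $\gamma_\beta$ is vestigial in the statement are both sound points that the paper leaves implicit.
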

We skip the proof which is similar to the previous one.

%\begin{proof}
%Let us denote $f:= F(\alpha, \cdot, \bs{h}, \bs{D})$. Then, for any $\beta \in \R$, 
%\begin{align}
%0 \in \partial f (\beta) ~& \iff ~ (\bs{y} - \alpha \mathbf{1}_N)^\top(\bs{h} \ast \bs{x}) - \beta \Vert \bs{h} \ast \bs{x} \Vert^2 \in \partial \iota_{\mathcal{B}}(\beta) \notag \\
%& \iff  ~ \frac{(\bs{y} - \alpha \mathbf{1}_N)^\top(\bs{h} \ast \bs{x})}{\Vert \bs{h} \ast \bs{x} \Vert^2} - \beta  \in \partial \iota_{\mathcal{B}}(\beta) \notag \\
%& \iff ~  \beta = \mathrm{p}_{\mathcal{B}} \left( %\frac{( \bs{y} -\alpha  \mathbf{1}_N)^\top (\bs{h} % \end{align}
%\end{proof}

\begin{proposition}[Update on \texorpdfstring{$\bs{h}$}{h}]\label{prop:prox_h}
Let $(\alpha, \beta, \bs{h}', \bs{D}) \in  \R^2 \times \R^N \times \mathcal{S}_3$, and $\gamma_h \in (0,+\infty)$.
Then the proximity operator of $\gamma_h \tilde{F}(\alpha,  \beta, \cdot, \bs{D})$ at $\bs{h}'$ is given by
\begin{equation}
\prox_{\gamma_h \tilde{F}(\alpha,  \beta,\cdot, \bs{D})} (\bs{h}') = (\rho^{-1} W\left(\rho \exp(w_n(\widehat{\mu}))\right)_{1\leq n \leq N},
\end{equation}
where $W$ denotes the Lambert-W function \cite{corless1996lambert}, 
\begin{equation}\label{e:defrho}
\rho = \frac{1}{\lambda \gamma_h},
\end{equation}
and, for every $n \in \{1, \dots, N\}$, $w_n$ is the function defined as
\begin{equation}
(\forall \mu \in \R)\quad w_n(\mu) = -1 - c_n + \rho (h_n' - \mu) + \ln\zeta,
\end{equation}
with 
\begin{equation}
c_n = \frac{1}{2} \left(3\log(2\pi)+ \Phi(D)+  \bs{\omega}_n^\top (\bs{D} + \epsilon_1 \bs{I}_3) \bs{\omega}_n \right).
\end{equation}
Moreover, $\widehat{\mu} \in \R$ is the unique zero of the function
\begin{equation}
(\forall \mu \in \R)\quad \kappa(\mu) = \rho^{-1} \sum_{n=1}^N W(\rho \exp(w_n(\mu))) -1.
\end{equation}
%This zero can be approximated using the Newton's method.
\end{proposition}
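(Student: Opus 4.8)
The plan is to evaluate the proximity operator directly from its variational definition, reduce it to a strictly convex program over the simplex, and solve the resulting coordinatewise optimality conditions with the Lambert-$W$ function.

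First I would simplify the objective. Since $\tilde{F}(\alpha, \beta, \cdot, \bs{D})$ depends on neither $\alpha$ nor $\beta$, and $\gamma_h \iota_{\Delta_N} = \iota_{\Delta_N}$, the definition of the proximity operator yields
\[
\prox_{\gamma_h \tilde{F}(\alpha,\beta,\cdot,\bs{D})}(\bs{h}') = \argmin_{\bs{h} \in \Delta_N} \left( \gamma_h \lambda\, \tilde{\Psi}(\bs{h}, \bs{D}) + \frac{1}{2}\Vert \bs{h} - \bs{h}' \Vert^2 \right).
\]
For fixed $\bs{D}$, the map $\bs{h} \mapsto \tilde{\Psi}(\bs{h}, \bs{D})$ equals $\sum_{n=1}^N (E(h_n) - h_n \ln\zeta + h_n c_n)$ with $c_n$ as in the statement: a separable strictly convex entropy term plus an affine term. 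The objective is therefore proper, lower semicontinuous and strictly convex, and since $\Delta_N$ is nonempty, convex and compact, it admits a unique minimizer $\widehat{\bs{h}}$.

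Next I would write first-order optimality conditions. Because the derivative of $E$ tends to $-\infty$ as its argument tends to $0^+$, the minimizer cannot have a vanishing coordinate, so $\widehat{\bs{h}}$ lies in the relative interior of $\Delta_N$; the nonnegativity constraints are thus inactive and only the affine constraint $\mathbf{1}_N^\top \bs{h} = 1$ is active. Introducing a single multiplier $\mu$ for this constraint and applying Proposition \ref{prop:sub_prox}, the coordinatewise stationarity condition reads $\gamma_h \lambda(\ln h_n + 1 - \ln\zeta + c_n) + (h_n - h_n') + \mu = 0$. Multiplying by $\rho = (\lambda \gamma_h)^{-1}$ and rearranging gives $\ln h_n + \rho h_n = w_n(\mu)$, hence $(\rho h_n)\exp(\rho h_n) = \rho \exp(w_n(\mu))$. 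As the right-hand side is nonnegative, the principal branch of $W$ applies and inverts this relation into $h_n = \rho^{-1} W(\rho \exp(w_n(\mu)))$, which is the claimed expression.

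Finally I would pin down the multiplier through feasibility. Imposing $\sum_{n=1}^N h_n = 1$ on the above is exactly $\kappa(\widehat{\mu}) = 0$. To show this determines $\widehat{\mu}$ uniquely, I would note that $\mu \mapsto w_n(\mu)$ is strictly decreasing (its coefficient of $\mu$ is $-\rho < 0$) and that $W$ is continuous and strictly increasing on $[0, +\infty)$, so $\kappa$ is continuous and strictly decreasing. Its limiting values are $\kappa(\mu) \to +\infty$ as $\mu \to -\infty$ and $\kappa(\mu) \to -1$ as $\mu \to +\infty$ (since then each $w_n(\mu) \to -\infty$ and $W(\rho \exp(w_n(\mu))) \to W(0) = 0$), so the intermediate value theorem furnishes a unique zero. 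The step I expect to require the most care is the interiority argument, which legitimizes dropping the inequality constraints and working with the smooth stationarity equation; once that is secured, the passage to Lambert-$W$ form and the monotonicity argument for uniqueness are routine.
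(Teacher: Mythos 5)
Your proof is correct and follows essentially the same route as the paper's: reduce the proximity operator to a separable entropic minimization over the simplex, attach a single multiplier to the equality constraint, invert the coordinatewise stationarity condition $\ln h_n + \rho h_n = w_n(\mu)$ via the Lambert-$W$ function, and determine $\widehat{\mu}$ from the feasibility condition $\kappa(\widehat{\mu})=0$. The only differences are refinements in justification—you legitimize dropping the nonnegativity constraints through an explicit interiority (entropy-barrier) argument where the paper invokes Slater's condition and a saddle point of the Lagrangian, and you establish uniqueness of the zero of $\kappa$ directly by strict monotonicity and the limits $\kappa(\mu)\to+\infty$ as $\mu\to-\infty$ and $\kappa(\mu)\to-1$ as $\mu\to+\infty$, where the paper cites \cite{chouzenoux2019optimal}—both of which make your argument more self-contained.
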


\begin{proof}
% Let $\bs{h}' \in \mathbb{R}^N$. 
The function to minimize is $\tilde{F}(\alpha,  \beta, \zeta,\cdot, \bs{D})  + \frac{1}{2 \gamma_h} \Vert \cdot - \bs{h}' \Vert^2$. It is equivalent to minimizing the function $f$ defined by
\begin{align}
(\forall \bs{h} \in (0, +\infty)^N) \quad f(\bs{h}) &= \lambda \gamma_h \sum_{n=1}^N \left(  h_n \ln h_n - h_n \ln \zeta + c_n h_n  \right)
+  \iota_{\Delta_N}(\bs{h}) +\frac{1}{2} \Vert \bs{h} - \bs{h}' \Vert^2. 
% &= \lambda \gamma_h \sum_{n=1}^N \left( \xi h_n \ln(h_n) + h_n\left[ \xi c_n+ \xi \ln(\xi)  \right] + \frac{1}{2\lambda \gamma_h}(h_n - h_n')^2\right)
% +  \iota_{\Delta_N}(\bs{h}) ,
\end{align}
The Lagrangian function associated with the minimization of $f$ reads
\begin{multline}
(\forall \bs{h} \in (0, +\infty)^N)(\forall \mu \in \R)\quad  \mathcal{L}(\bs{h}, \mu) = \lambda \gamma_h \sum_{n=1}^N \Bigg(h_n \ln h_n + h_n\left(-\ln \zeta + c_n  \right)\\ +  \frac{1}{2\lambda \gamma_h}(h_n - h_n')^2\Bigg) + \mu \left(\sum_{n=1}^N h_n -1\right).
\end{multline}
Since Slater's condition obviously holds, there exists $\hat{\mu} \in \R$ such that $(\bs{h}, \hat{\mu})$ is a saddle point of $\mathcal{L}$ \cite{bertsekas1995nonlinear}. By Fermat's rule \cite{bauschke2019convex}, $\bs{h}$ is thus obtained by finding a zero of the derivative of $\mathcal{L}(\cdot, \hat{\mu})$. This yields, for every $n \in \{1, \dots, N\}$,
\begin{align}
&\lambda \gamma_h \left(1 + \ln\hat{h}_n + c_n - \ln\zeta  \right)+ \hat{h}_n - h_n' + \hat{\mu} = 0,\notag \\
\iff ~&  \ln(\hat{h}_n)  + \rho \hat{h}_n = w_n(\hat{\mu}),\notag \\
\iff ~& \rho \hat{h}_n \exp(\rho \hat{h}_n) = \rho \exp(w_n(\hat{\mu})),
\end{align}
where $\rho$ is given by \eqref{e:defrho}
%= \left(\lambda \gamma_h\right)^{-1}$ 
and $ w_n(\hat{\mu}) = -1 +\rho( h_n' -  \hat{\mu}) - c_n + \ln\zeta$. 
%\textcolor{red}{Il est embêtant que cette définition ne corresponde pas à (25) où 
%on utilise $h_n$.}

Finally, recalling that the $W$-Lambert function is such that $(\forall z \in \R)~W(z) \exp(W(z)) = z$, we deduce that
\begin{equation}
\hat{h}_n = \rho^{-1} W\left(\rho \exp(w_n(\hat{\mu}))\right).
\end{equation} 
In addition, one can obtain $\hat{\mu}$ 
from the linear equality constraint.
%by canceling the derivative of $\mathcal{L}(\bs{h}, \cdot)$. 
This amounts to finding a zero of the function $\kappa$ defined as
\begin{equation}
(\forall \mu \in \R) \quad \kappa(\mu)= \rho^{-1} \sum_{n=1}^N W\left(\rho \exp(w_n(\mu)\right) - 1.
\end{equation}
It was shown in \cite{chouzenoux2019optimal}, relying on the properties of the $W$-Lambert function, that $\kappa$ admits a unique zero.
\end{proof}
The Lambert-W function appearing in Proposition \ref{prop:prox_h} is commonly in the expression of the proximity operators of entropic functions \cite{lapin2017analysis,cherni2016proximity}. While its evaluation, requiring the solution of a transcendental equation, can be efficiently achieved using a Newton-based method, the composition of $W$ with the exponential can lead to arithmetic overflow for large inputs. To address this, we employ the asymptotic expansion $W(\exp(u)) \approx u - \log(u)$ for $u > 10^2$ \cite{corless1996lambert}.

\begin{proposition}[Update on \texorpdfstring{$\bs{D}$}{D}]
Let $(\alpha, \beta, \bs{h}, \bs{D}) \in  \R^2 \times (0, +\infty)^N \times \mathcal{S}_3$, and $\gamma_D \in (0,+\infty)$. Then the proximity operator of $\gamma_D F(\alpha,  \beta, \bs{h}, \cdot) $ at $\bs{D}$ is given by
\begin{equation}
\prox_{\gamma_D F(\alpha, \beta, \bs{h}, \cdot)}(\bs{D}) = \frac{1}{2} \bs{V} \mathrm{Diag}\left( \left(\max(\mu_i - \epsilon_1 + \sqrt{(\mu_i + \epsilon_1)^2 + 4m}, 0)\right)_{1\leq i \leq d}\right) \bs{V}^\top,
\end{equation}
where $\bs{\mu} = ( \mu_i)_{1 \leq i \leq 3}$ is a vector of eigenvalues of $(2 \epsilon_2 \gamma_D +1)^{-1}\bs{D}- \bs{S}$ and $\bs{V}$ is a $3\times 3 $ orthogonal matrix such that $(2 \epsilon_2 \gamma_D +1)^{-1}\bs{D}- \bs{S} = \bs{V} \mathrm{Diag}(\bs{\mu}) \bs{V}^\top$ with 
\begin{equation}
\bs{S} =  \frac{\gamma_D}{2(2 \epsilon_2 \gamma_D +1)} \lambda \sum_{n=1}^N h_n \bs{\omega}_n \bs{\omega}_n^\top,
\end{equation}
and $m = \frac{1}{2} \gamma_D \lambda$.
\end{proposition}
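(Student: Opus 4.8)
The plan is to evaluate the proximity operator directly from its variational definition, namely as the unique minimizer over $\bs{D}' \in \mathcal{S}_3$ of $\bs{D}' \mapsto \gamma_D F(\alpha, \beta, \bs{h}, \bs{D}') + \tfrac{1}{2}\Vert \bs{D}' - \bs{D}\Vert_{\rm F}^2$. Freezing $\alpha,\beta,\bs{h}$, the only $\bs{D}'$-dependent contributions in $F$ are the spectral term $\tfrac{\lambda}{2}\Phi(\bs{D}')$, the linear term arising from $\tfrac{1}{2}\sum_{n} h_n \bs{\omega}_n^\top \bs{D}' \bs{\omega}_n = \tfrac{1}{2}\langle \bs{D}', \bs{M}\rangle$ with $\bs{M} = \sum_{n=1}^N h_n \bs{\omega}_n\bs{\omega}_n^\top$ (here the multiplier of $\Phi(\bs{D}')$ collapses to $\tfrac{1}{2}$ because $\mathbf{1}_N^\top \bs{h} = 1$), the Frobenius penalty $\epsilon_2 \Vert \bs{D}'\Vert_{\rm F}^2$, and the semidefiniteness constraint. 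First I would merge the two quadratic contributions $\gamma_D \epsilon_2 \Vert \bs{D}'\Vert_{\rm F}^2 + \tfrac{1}{2}\Vert \bs{D}'-\bs{D}\Vert_{\rm F}^2$ into $\tfrac{a}{2}\Vert \bs{D}'\Vert_{\rm F}^2 - \langle \bs{D}', \bs{D}\rangle$ up to a constant, with $a = 2\epsilon_2\gamma_D + 1$, then absorb the linear term $\tfrac{\gamma_D\lambda}{2}\langle \bs{D}', \bs{M}\rangle$ and complete the square. This rewrites the objective, up to an additive constant, as $\tfrac{a}{2}\Vert \bs{D}' - \bs{B}\Vert_{\rm F}^2 + \tfrac{\gamma_D\lambda}{2}\Phi(\bs{D}') + \iota_{\mathcal{S}_3^+}(\bs{D}')$, where $\bs{B} = a^{-1}\bs{D} - \bs{S}$ and $\bs{S}$ is exactly the matrix in the statement. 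Since $\phi$ is convex and continuously differentiable and $\Phi$ is the spectral function built from it, this objective is strictly convex and coercive, which guarantees a unique minimizer.

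The crux of the argument is a spectral reduction. Both $\Phi$ and the constraint $\iota_{\mathcal{S}_3^+}$ are orthogonally invariant, i.e. they depend on $\bs{D}'$ only through its eigenvalues. Writing $\bs{D}' = \bs{W}\mathrm{Diag}(\bs{d})\bs{W}^\top$ and diagonalizing $\bs{B} = \bs{V}\mathrm{Diag}(\bs{\mu})\bs{V}^\top$, the only term coupling the eigenvectors of $\bs{D}'$ to those of $\bs{B}$ is $-2\langle \bs{D}', \bs{B}\rangle$ inside $\Vert \bs{D}'-\bs{B}\Vert_{\rm F}^2$. By von Neumann's trace inequality for symmetric matrices, $\langle \bs{D}', \bs{B}\rangle \le \langle \bs{d}^{\downarrow}, \bs{\mu}^{\downarrow}\rangle$, with equality when $\bs{D}'$ and $\bs{B}$ share a common eigenbasis and their eigenvalues are sorted in the same order. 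Hence, for any prescribed spectrum $\bs{d}$, the objective is minimized by taking $\bs{W} = \bs{V}$, so the minimizer has the form $\bs{V}\mathrm{Diag}(\bs{d}^\star)\bs{V}^\top$ and the matrix problem decouples into three identical scalar problems $\min_{d \ge 0} \tfrac{a}{2}(d - \mu_i)^2 + \tfrac{\gamma_D\lambda}{2}\phi(d)$ over the nonnegative half-line imposed by $\mathcal{S}_3^+$. Establishing this passage from the matrix optimization to the diagonal one is the step I expect to be the main obstacle; I would either invoke the known calculus of proximity operators of spectral (orthogonally invariant) functions, or supply the von Neumann argument together with its matching-order equality case in full.

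It then remains to solve the scalar problem. On the branch $d \ge 0$, where $\phi(d) = -\ln(d+\epsilon_1)$, cancelling the derivative yields a quadratic equation in $d$ whose discriminant simplifies through the identity $(\epsilon_1 - \mu_i)^2 + 4\mu_i\epsilon_1 = (\mu_i + \epsilon_1)^2$, leaving a single admissible root equal to the quantity under the $\max$ in the statement (with $m$ as defined there, up to the global factor $\tfrac{1}{2}$). Strict convexity of the scalar objective, combined with the constraint $d \ge 0$, forces the minimizer to equal this root when it is nonnegative and to saturate at $0$ otherwise, which is precisely the outer $\max(\cdot, 0)$. Reassembling the three eigenvalues through $\bs{V}$ produces $\tfrac{1}{2}\bs{V}\mathrm{Diag}(\cdot)\bs{V}^\top$, matching the claimed expression, and uniqueness of the minimizer identifies it with the sought proximity operator.
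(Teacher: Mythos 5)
Your overall route is sound, and it is worth noting that it supplies what the paper does not: the paper's entire proof is the one-liner ``Direct extension of \cite[Prop.4]{chouzenoux2019optimal}'', so your derivation --- isolating the $\bs{D}'$-dependent terms, merging $\gamma_D\epsilon_2\Vert\bs{D}'\Vert_{\rm F}^2$ with the prox quadratic into $\tfrac{a}{2}\Vert\bs{D}'-\bs{B}\Vert_{\rm F}^2$ with $a=2\epsilon_2\gamma_D+1$ and $\bs{B}=a^{-1}\bs{D}-\bs{S}$, reducing to the eigenvalues via von Neumann's trace inequality (with the ordered-equality case, which is legitimate here since the scalar prox map is nondecreasing), and solving the scalar branch problem with the clamp at $0$ coming from strict convexity --- is exactly the argument the citation leaves implicit, and your $\bs{B}$ and $\bs{S}$ match the statement.

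There is, however, one concrete slip at the final identification. On the branch $d\geq 0$, the stationarity condition is $a(d-\mu_i)=\tfrac{\gamma_D\lambda}{2(d+\epsilon_1)}$, i.e. $(d-\mu_i)(d+\epsilon_1)=m/a$ with $m=\tfrac12\gamma_D\lambda$, whose admissible root is $\tfrac12\bigl(\mu_i-\epsilon_1+\sqrt{(\mu_i+\epsilon_1)^2+4m/a}\bigr)$: the radicand carries $4m/a$, not $4m$. So your claim that the root equals ``the quantity under the $\max$ in the statement (with $m$ as defined there)'' is not right as written; agreement with the displayed formula requires $m=\gamma_D\lambda/\bigl(2(2\epsilon_2\gamma_D+1)\bigr)$, in exact parallel with the factor $(2\epsilon_2\gamma_D+1)^{-1}$ that does appear in $\bs{S}$ and in the matrix being diagonalized. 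This strongly suggests the paper's definition of $m$ is a typo inherited from the $\epsilon_2$-free result of \cite{chouzenoux2019optimal} (numerically harmless since $\epsilon_2=10^{-6}$), but a correct proof must either derive the corrected $m$ or flag the discrepancy rather than assert a match. Two further points you should make explicit: the collapse of the coefficient of $\Phi$ to $\tfrac12$ uses $\mathbf{1}_N^\top\bs{h}=1$, which the hypothesis $\bs{h}\in(0,+\infty)^N$ does not guarantee (it holds along the iterates, and the stated formula tacitly assumes it as well); and you impose $\bs{D}'\in\mathcal{S}_3^+$, which is consistent with Problem \eqref{eq:minimization_pb_PSF} and with the claimed $\max(\cdot,0)$, whereas the cost \eqref{eq:cost_function} literally contains $\iota_{\mathcal{S}_3^+}(\bs{D}+\epsilon_1\bs{I}_3)$ --- under that relaxed constraint, together with the quadratic extension of $\phi$ on $(-\infty,0)$, the minimizer can have eigenvalues in $[-\epsilon_1,0)$ when $\mu_i<-m/(a\epsilon_1)$, so the clamp at $0$ would be false; your reading is the one under which the proposition holds, and it deserves a sentence of justification.
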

\begin{proof}
    Direct extension of~\cite[Prop.4]{chouzenoux2019optimal}.
\end{proof}

\subsubsection{Convergence analysis} 

We now establish the convergence of GENTLE. We first show that the considered minimization problem has at least one solution.
\begin{proposition}\label{prop:existence_min}
    The cost function $F$ defined is \eqref{eq:cost_function} is lower-bounded and admits a minimizer.
\end{proposition}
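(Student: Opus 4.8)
The plan is to apply the direct method of the calculus of variations: establish that $F$ is proper and lower semicontinuous, that it is bounded below, and that it is coercive on its effective domain, so that a minimizing sequence admits a convergent subsequence whose limit is a minimizer. The effective domain is $\mathcal{A} \times \mathcal{B} \times \Delta_N \times \{\bs{D} \in \mathcal{S}_3 : \bs{D} + \epsilon_1 \bs{I}_3 \in \mathcal{S}_3^+\}$, on which the three indicator terms $\iota_{\mathcal{A}}, \iota_{\mathcal{B}}, \iota_{\Delta_N}$ vanish. Since $\mathcal{A}$, $\mathcal{B}$ and $\Delta_N$ are compact, the only variable ranging over an unbounded set is $\bs{D}$, and this is precisely where the argument must be carried out carefully.

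First I would check lower semicontinuity. The data-fidelity term is a polynomial, hence continuous; the Frobenius penalty is continuous; the indicator functions $\iota_{\mathcal{A}}, \iota_{\mathcal{B}}, \iota_{\Delta_N}$ and $\bs{D}\mapsto\iota_{\mathcal{S}_3^+}(\bs{D}+\epsilon_1\bs{I}_3)$ are lower semicontinuous because the underlying sets are closed; and $\tilde{\Psi}$ is lower semicontinuous on $\R^N \times \mathcal{S}_3$, since $E$ is lsc on $\R$, $\bs{D} \mapsto \Phi(\bs{D})$ is continuous (as $\phi$ is continuous and the eigenvalues of a symmetric matrix depend continuously on it), and the remaining terms are continuous. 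As none of these terms takes the value $-\infty$, their sum is a well-defined lsc function, and $F$ is proper because the effective domain is nonempty.

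The crux is lower boundedness together with coercivity in $\bs{D}$, and this is where the assumption $\epsilon_2 > 0$ is essential. On the effective domain I would use $\sum_{n=1}^N h_n = 1$ and $h_n \geq 0$ to show that the entropy part $\sum_n E(h_n)$ is bounded below by $-\ln N$, that the terms proportional to $-\ln\zeta$ and $3\ln(2\pi)$ contribute only a constant, and that $\sum_{n=1}^N \tfrac{h_n}{2}\,\bs{\omega}_n^\top(\bs{D} + \epsilon_1 \bs{I}_3)\bs{\omega}_n \geq 0$ because $\bs{D} + \epsilon_1 \bs{I}_3 \in \mathcal{S}_3^+$. The only remaining, possibly unbounded, contribution is $\tfrac{\lambda}{2}\Phi(\bs{D}) + \epsilon_2\Vert \bs{D}\Vert_{\rm F}^2$. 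Diagonalizing $\bs{D} = \bs{U}\,\mathrm{Diag}(\bs{s})\,\bs{U}^\top$, this equals $\sum_{i=1}^3\big(\tfrac{\lambda}{2}\phi(s_i) + \epsilon_2 s_i^2\big)$, and I would analyze the scalar map $t \mapsto \tfrac{\lambda}{2}\phi(t) + \epsilon_2 t^2$ on $[-\epsilon_1,+\infty)$, the feasible range of each eigenvalue. Since $\phi(t)$ grows at most like $-\ln t$ as $t \to +\infty$ while $\epsilon_2 t^2 \to +\infty$, this scalar map is continuous and tends to $+\infty$, hence is bounded below and coercive; summing the three eigenvalue contributions yields a lower bound on $F$ and shows that $\Vert \bs{D}\Vert_{\rm F} \to +\infty$ forces $F \to +\infty$.

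Finally I would conclude by the direct method: taking a minimizing sequence, compactness of $\mathcal{A} \times \mathcal{B} \times \Delta_N$ together with the coercivity just established confines it to a bounded set, so a subsequence converges; the effective domain is closed, so the limit is feasible; and lower semicontinuity gives that the limit attains the infimum. The main obstacle is the interplay, in the $\bs{D}$ block, between the logarithmic barrier $\Phi$, which diverges to $-\infty$ as an eigenvalue grows, and the quadratic penalty $\epsilon_2\Vert \bs{D}\Vert_{\rm F}^2$; the entire existence result hinges on the quadratic term dominating the logarithm, which is exactly why $\epsilon_2$ must be strictly positive.
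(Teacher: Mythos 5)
Your proof is correct and follows essentially the same route as the paper's: lower semicontinuity plus coercivity in the $\bs{D}$ block (the set $\mathcal{A}\times\mathcal{B}\times\Delta_N$ being compact), with the same key inequality bounding $F$ below by a constant plus $\tfrac{\lambda}{2}\Phi(\bs{D})+\epsilon_2\Vert\bs{D}\Vert_{\rm F}^2$, the constant coming from the entropy being bounded below on the simplex and the quadratic form $\bs{\omega}_n^\top(\bs{D}+\epsilon_1\bs{I}_3)\bs{\omega}_n$ being nonnegative on the feasible set. You merely make explicit what the paper leaves implicit --- the eigenvalue-wise analysis of $t\mapsto\tfrac{\lambda}{2}\phi(t)+\epsilon_2 t^2$ showing the quadratic penalty dominates the logarithmic decay, and the concluding direct-method (Weierstrass) step --- so the two arguments coincide in substance.
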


\begin{proof}
    It is clear that $F$ is lower semi-continuous on $\R \times \R \times \R^N \times \mathcal{S}_3$. Let us now show that $F$ is coercive.
    %on its closed domain
    %$\dom F = \mathcal{A}\times \mathcal{B} \times \Delta_N \times \mathcal{S}_3^+$. 
    Since $\mathcal{A}$, $\mathcal{B}$, and $\Delta_N$ are bounded subsets, it suffices to show that, for any $(\alpha, \beta, \bs{h}) \in \mathcal{A} \times \mathcal{B} \times \Delta_N$, 
    \begin{equation}
        F(\alpha, \beta, \bs{h}, \bs{D}) \underset{\genfrac{}{}{0pt}{}{\Vert \bs{D} \Vert_{\rm F} \rightarrow +\infty}{ \bs{D} \in \mathcal{S}_3^+}}{\longrightarrow} +\infty.
    \end{equation}
The following lower bound holds for $F$:
\begin{align}\label{eq:lower_bound_F}
(\forall \bs{D} \in\mathcal{S}_3)\quad     F(\alpha, \beta, \bs{h}, \bs{D}) &\geq \lambda c_1 + \frac{\lambda}{2} \sum_{n=1}^N h_n \left(3 \ln(2\pi) +  \Phi(\bs{D}) + \bs{\omega_n}^\top (\bs{D} +\epsilon_1 \bs{I}_3) \bs{\omega_n} \right) \nonumber\\ 
&\qquad \qquad + \epsilon_2 \Vert \bs{D} \Vert_F^2\nonumber \\
    & \geq \lambda c_1 + \frac{3\lambda}{2} \ln(2\pi) + \frac{\lambda}{2} \Phi(\bs{D}) + \epsilon_2 \Vert \bs{D} \Vert_F^2, 
\end{align}
where $c_1 = \inf \left\{\sum_{n=1}^N E(h_n) - h_n \ln\zeta\mid \bs{h} \in \Delta_N \right\}$ and  we have used the fact that $\bs{D} +\epsilon_1 \bs{I}_3 \succ 0$. Finally, given the definition of $\Phi$ in \eqref{eq:def_Phi}, it is clear the lower bound in \eqref{eq:lower_bound_F} goes to $+\infty$ when $\Vert \bs{D} \Vert_{\rm F} \rightarrow +\infty$. Therefore, $F$ admits a minimizer.
\end{proof}

Let us now establish the convergence of the iterates generated by Algorithm \ref{algo:MPM_PSF}. We base our analysis on the convergence result stated in \cite[Thm. 2]{xu2013block}. 

\begin{theorem}
    Let $(\alpha^{(0)}, \beta^{(0)}, \bs{h}^{(0)}, \bs{D}^{(0)}) \in \mathcal{A} \times \mathcal{B} \times \Delta_N \times \mathcal{S}_3^{+}$ be an initial point. For every $\ell \in \N$, let $\bs{t}^{(\ell)} = (\alpha^{(\ell)}, \beta^{(\ell)}, \bs{h}^{(\ell)}, \bs{D}^{(\ell)})$ be the sequence generated by Algorithm \ref{algo:MPM_PSF}. Assume that $\gamma_h< \frac{2}{\overline{L}}$ with $\overline{L} = \beta_{+}^2 \max \left\{ (|\operatorname{DFT}(\bs{x})_n |^2)_{1 \leq n \leq N}\right\}$ and $\operatorname{DFT}$ denotes the discrete Fourier transform. Assume that the grid is discretized finely enough such that the true bead $\bs{x}$ has at least one pixel with unit intensity.
    Then $(\bs{t}^{(\ell)})_{\ell \in \N}$ converges to a critical point of the objective function \eqref{eq:cost_function}, $\hat{\bs{t}} = (\hat{\alpha}, \hat{\beta}, \hat{\bs{h}}, \hat{\bs{D}})$.
\end{theorem}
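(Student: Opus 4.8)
The plan is to recognise Algorithm~\ref{algo:MPM_PSF} as an instance of the block-coordinate scheme of \cite[Thm.~2]{xu2013block} and to check, one by one, that its hypotheses hold here. First I would write $F$ in the required form: a smooth block-multiconvex part $f(\alpha,\beta,\bs{h},\bs{D}) = \frac{1}{2}\Vert\bs{y}-\alpha\mathbf{1}_N-\beta(\bs{h}\ast\bs{x})\Vert^2 + \lambda\tilde{\Psi}(\bs{h},\bs{D}) + \epsilon_2\Vert\bs{D}\Vert_{\rm F}^2$ plus the separable convex nonsmooth part $\iota_{\mathcal{A}}+\iota_{\mathcal{B}}+\iota_{\Delta_N}+\iota_{\mathcal{S}_3^+}(\cdot+\epsilon_1\bs{I}_3)$. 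Block convexity (noted after \eqref{eq:cost_function}) follows directly: the data term is quadratic in each of $\alpha,\beta,\bs{h}$, while $\tilde{\Psi}$ is convex separately in $\bs{h}$ (through $E$) and in $\bs{D}$ (through $\Phi$). I would then match the four updates to the three admissible schemes of \cite{xu2013block}: exact minimisation for $\alpha$ and $\beta$, a prox-linear (forward--backward) step for $\bs{h}$, and a proximal-point step for $\bs{D}$.

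Next I would secure the two quantitative ingredients of the abstract Kurdyka--{\L}ojasiewicz (KL) machinery, a sufficient-decrease inequality and a subgradient bound, for each block, the latter coming routinely from the optimality conditions of the updates. For $\bs{D}$ the proximal-point step minimises a $\gamma_D^{-1}$-strongly convex function, so decrease is automatic for any $\gamma_D>0$. For $\bs{h}$ I would combine the descent lemma with the convexity of the proximal part $\tilde{F}(\alpha,\beta,\cdot,\bs{D})$ to obtain an $F$-decrease controlled by $(\gamma_h^{-1}-\overline{L}/2)\Vert\bs{h}^{(\ell+1)}-\bs{h}^{(\ell)}\Vert^2$, which is strictly positive exactly under the stated condition $\gamma_h<2/\overline{L}$. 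The Lipschitz constant of the linearised gradient $\bs{h}\mapsto\beta^2\bs{X}^\top\bs{X}\bs{h}$ is $\beta^2\Vert\bs{X}\Vert^2$, and since $\bs{X}$ implements a circular convolution its squared spectral norm equals $\max_n|\operatorname{DFT}(\bs{x})_n|^2$; bounding $\beta\le\beta_+$ gives precisely $\overline{L}=\beta_+^2\max_n|\operatorname{DFT}(\bs{x})_n|^2$.

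The exact updates on $\alpha$ and $\beta$ require, in the framework of \cite{xu2013block}, a \emph{uniform} strong-convexity modulus of the corresponding subproblems. The $\alpha$-subproblem is $\frac{N}{2}\alpha^2+(\text{linear})+\iota_{\mathcal{A}}(\alpha)$, hence strongly convex with modulus $N$. The $\beta$-subproblem has modulus $\Vert\bs{h}\ast\bs{x}\Vert^2$, which I must bound below uniformly along the iterates. Using $\bs{h}^{(\ell)}\in\Delta_N$ and Parseval's identity, $\Vert\bs{h}\ast\bs{x}\Vert^2$ is, up to the DFT normalisation, $\sum_k|\operatorname{DFT}(\bs{h})_k|^2|\operatorname{DFT}(\bs{x})_k|^2$, which is bounded below by its zero-frequency term $|\operatorname{DFT}(\bs{h})_0|^2|\operatorname{DFT}(\bs{x})_0|^2 = (\sum_n h_n)^2(\sum_n x_n)^2 = (\sum_n x_n)^2$. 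This is exactly where the assumption that $\bs{x}$ has at least one unit-intensity pixel enters: it forces $\sum_n x_n\ge 1$, hence a strictly positive uniform modulus. Boundedness of the whole sequence then follows from the coercivity established in Proposition~\ref{prop:existence_min} together with the monotone decrease of $F$.

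Finally I would establish the KL property of $F$, which upgrades subsequential convergence to convergence of the entire sequence. The building blocks of $F$ are the polynomial data and penalty terms, the entropy $E(u)=u\ln u$, the map $\phi$ (a logarithm glued to a polynomial), and the indicators of the semialgebraic sets $\mathcal{A},\mathcal{B},\Delta_N,\mathcal{S}_3^+$; all are definable in the o-minimal structure $\mathbb{R}_{\exp}$, and finite sums and compositions of definable functions are definable, so $F$ is a KL function at every point. With sufficient decrease, the subgradient bound, boundedness, and the KL property in hand, \cite[Thm.~2]{xu2013block} delivers convergence of $(\bs{t}^{(\ell)})_{\ell\in\N}$ to a single critical point $\hat{\bs{t}}$ of $F$. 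I expect the main obstacle to be the uniform strong convexity of the $\beta$-block, specifically pinning down the spectral lower bound $\Vert\bs{h}\ast\bs{x}\Vert^2\ge(\sum_n x_n)^2$ and tying it to the unit-pixel hypothesis, since this is the one place where a non-generic property of the data, rather than a routine verification, is needed.
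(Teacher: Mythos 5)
Your overall strategy---casting GENTLE in the block framework of \cite[Thm.~2]{xu2013block} and checking its hypotheses one by one---is exactly the paper's, but your opening step contains a genuine flaw: the splitting you state does not satisfy the framework's assumptions. You place $\lambda\tilde{\Psi}(\bs{h},\bs{D})$ inside the smooth coupling term $f$. Since $\tilde{\Psi}$ contains the entropy $E(h_n)$, which equals $+\infty$ for $h_n<0$ and whose derivative $\ln h_n+1$ blows up as $h_n\to 0^+$, your $f$ is neither finite-valued nor differentiable on the whole space, and $\nabla_{\bs{h}}f$ is not Lipschitz on bounded sets that meet the boundary of $\Delta_N$---precisely the regularity that \cite{xu2013block} requires of the coupling function. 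The splitting also mismatches Algorithm~\ref{algo:MPM_PSF} itself: the $\bs{h}$-update linearizes only the quadratic data term and absorbs all of $\tilde{\Psi}$ into the proximal part $\tilde{F}$ of \eqref{eq:def_F_tilde}, so in the block decomposition the $\bs{h}$-dependent entropy terms must live in the separable nonsmooth component. The paper's proof does exactly this: it sets $g_h(\bs{h})=\lambda\sum_{n=1}^N\bigl(E(h_n)-h_n\ln\zeta+\tfrac{3h_n}{2}\ln(2\pi)\bigr)$ and keeps in $f$ only the smooth bilinear coupling $\tfrac{\lambda}{2}\sum_{n=1}^N h_n\bigl(\Phi(\bs{D})+\bs{\omega}_n^\top(\bs{D}+\epsilon_1\bs{I}_3)\bs{\omega}_n\bigr)$, with $g_D=\iota_{\mathcal{S}_3^+}+\epsilon_2\Vert\cdot\Vert_{\rm F}^2$. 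Notice that your own verification of the $\bs{h}$-step (descent lemma on the quadratic term only, with $\tilde{F}(\alpha,\beta,\cdot,\bs{D})$ treated as the proximal part) silently uses this corrected split, so the defect is an internal inconsistency that the paper's decomposition repairs---but as written, your first step would fail.

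The remainder of your argument is sound and in two places deviates from the paper in an interesting way. For the uniform strong convexity of the $\beta$-subproblem, the paper compares $\bs{x}$ with the one-pixel image $\bs{\delta}_i$ to obtain $\Vert\bs{h}\ast\bs{x}\Vert^2\geq\Vert\bs{h}\Vert^2\geq 1/N^2$, whereas you bound $\Vert\bs{h}\ast\bs{x}\Vert^2$ below by its zero-frequency term via Parseval, which (with the $1/N$ normalization, under the circular convolution implicit in the definition of $\overline{L}$) gives $(\sum_n x_n)^2/N\geq 1/N$; both routes hinge on the unit-pixel hypothesis---yours through $\sum_n x_n\geq 1$---and yours in fact yields a sharper modulus. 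Your identification of $\overline{L}$ as $\beta_+^2\max_n|\operatorname{DFT}(\bs{x})_n|^2$ via the spectral norm of the circulant operator, and the boundedness of the iterates from monotone decrease plus the coercivity of Proposition~\ref{prop:existence_min}, coincide with the paper. Finally, your o-minimality argument for the Kurdyka--\L{}ojasiewicz property (definability of the polynomial terms, of $u\ln u$, of $\phi$, and of the semialgebraic indicator sets in $\mathbb{R}_{\exp}$) is more explicit than the paper, which omits this verification and refers to an analogous proof in \cite{chouzenoux2019optimal}; for full rigor you would still want to note that $\Phi$ is definable because the ordered-eigenvalue maps are semialgebraic.
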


\begin{proof}
    We show that the assumptions required by \cite[Thm. 2]{xu2013block} are satisfied. 
\begin{itemize}
    \item \textbf{Splitting of the objective function}. We first split the objective function $F$ into 1) a coupling term $f\colon \R \times \R \times \R^N \times \mathcal{S}_3$ which is block convex, differentiable with a Lipschitz continuous gradient on bounded subsets, 2) a separable sum of functions $g_\alpha \in \Gamma_0(\R)$, $g_\beta \in \Gamma_0(\R)$, $g_h \in \Gamma_0(\R^N)$ and $g_D \in \Gamma_0(\mathcal{S}_3)$ such that:
    \begin{equation}
        F(\alpha, \beta, \bs{h}, \bs{D}) = f(\alpha, \beta, \bs{h}, \bs{D}) + g_\alpha(\alpha) + g_\beta(\beta) + g_h(\bs{h}) + g_D(\bs{D}).
    \end{equation}
To do so, we define 
\begin{multline}
    (\forall (\alpha, \beta, \bs{h}, \bs{D}) \in \R \times \R \times \R^N \times \mathcal{S}_3) \\
    f(\alpha, \beta, \bs{h}, \bs{D}) = \frac{1}{2} \Vert \bs{y} - \alpha \mathbf{1}_N -  \beta( \bs{h} \ast \bs{x}) \Vert^2 + \frac{\lambda}{2}  \sum_{n=1}^N h_n \left( \Phi(\bs{D}) + \bs{\omega}_n^\top (\bs{D} + \epsilon_1 \bs{I}_3) \bs{\omega}_n\right),
\end{multline}
\begin{equation}
    g_\alpha = \iota_\mathcal{A}, \quad g_\beta = \iota_\mathcal{B}, \quad g_D=\iota_{\mathcal{S}_3^+} + \epsilon_2 \Vert \bs{D} \Vert_{\rm F}^2,
\end{equation}
and
\begin{equation}
    (\forall \bs{h} \in \R^N)\quad g_h(\bs{h}) = \lambda \sum_{n=1}^N \left( E(h_n) - h_n \ln(\zeta) + \frac{3 h_n}{2} \ln(2 \pi)\right). 
\end{equation}
%Then, it is clear that these functions satisfy the desired properties.

\item \textbf{Properties of the partial functions}. 
We first need to check that the partial functions with respect to $\alpha$ and $\beta$ are strongly convex along the iterates $(\bs{t}^{(\ell)})_{\ell \in \N}$, with modulus independent of $\ell$. It is clear that it is the case for the partial function with respect to variable $\alpha$. On the other hand, for any $\ell \in \N$, let us denote $F_\mathsf{B}^{(\ell)}$ the function $\beta \mapsto F(\alpha^{(\ell+1)}, \beta, \bs{h}^{(\ell)}, \bs{D}^{(\ell)})$. Then, since the bead $\bs{x}$ has at least one pixel value equal to 1, say at $\omega_i$, $i\in \{1, \dots, N\}$,
\begin{align}
    \nabla^2 F_\mathsf{B}^{(\ell)} (\beta) &= \Vert \bs{h}^{(\ell)} \ast \bs{x}\Vert^2 \nonumber\\
    &\geq \Vert \bs{h}^{(\ell)} \ast \bs{\delta}_i \Vert^2 \nonumber \\
    &= \Vert  \bs{h}^{(\ell)} \Vert^2, 
\end{align}
with $\bs{\delta}_i \in \R^N$ the image corresponding to one illuminated pixel with intensity equal to 1 at $\omega_i$, and 0 elsewhere. The inequalitiy holds because $\bs{h}^{(\ell)} \in (0,+\infty)^N$. Moreover, since $\Vert \cdot \Vert \geq \frac{1}{N}\Vert \cdot \Vert_1$ and $\bs{h}^{(\ell)}\in \Delta_N$, we deduce that $\nabla^2 F_\mathsf{B}^{(\ell)} (\beta) \geq \frac{1}{N^2}$.
Therefore, $F_\mathsf{B}^{(\ell)}$ is $\frac{1}{N^2}$-strongly convex. Secondly, for the proximal linearized step on variable $\bs{h}$, the step-size $\gamma_h$ has to satisfy the condition:
\begin{equation}
    \gamma_h < \frac{2}{L},
\end{equation}
where $L$ is the Lipschitz constant of the gradient of the differentiable component $\bs{h} \longmapsto \frac{1}{2} \Vert \bs{y} - \alpha \mathbf{1}_N -  \beta( \bs{h} \ast \bs{x}) \Vert^2$. It can easily been shown that $L \leq \overline{L}$. 

\item \textbf{Boundedness of the sequence.} Let us demonstrate that the sequence $(\bs{t}^{(\ell)})_{\ell \in \N}$ is bounded. 
% It is clear that the sequences $(\alpha^{(\ell)})_{\ell \in \N}$, $(\beta^{(\ell)})_{\ell \in \N}$ and $(\bs{h}^{(\ell)})_{\ell \in \N}$ are bounded. There remains to show that $(\bs{D}^{(\ell)})_{\ell \in \N}$ is bounded. 
Since GENTLE alternates exact, proximal and proximal linearized updates, the sequence $(F(\bs{t}^{(\ell)}))_{\ell \in \N}$ is non-increasing. In addition, the objective function is coercive as seen in Proposition \ref{prop:existence_min}. This implies that $(\bs{t}^{(\ell)})_{\ell \in \N}$ is bounded.

\item \textbf{Kurdyka-\L{}ojasiewicz's inequality.} Lastly, the objective function $F$ must satisfy the so-called Kurdyka-\L{}ojasiewicz's inequality \cite{kurdyka1998gradients,bolte2007lojasiewicz}. This property is satisfied by a wide range of functions in the context of image processing. In the present case, we omit the proof that $F$ satisfies the Kurdyka-\L{}ojasiewicz's inequality as a very similar proof to the one in \cite{chouzenoux2019optimal} could be driven.
\end{itemize}
 Finally, applying \cite[Thm. 2]{xu2013block}, we deduce that $(\bs{t}^{(\ell)})_{\ell \in \N}$ converges to a critical point of the objective function $F$.
{In practice, we assess the convergence using the stopping criterion $\Vert \bs{t}^{(\ell+1)} -\bs{t}^{(\ell)}\Vert \leq \varepsilon$ with $\varepsilon=10^{-7}$.}

% for all $\ell \in \N$, 
% \begin{align}
% f(\bs{t}^{(\ell)}) &\geq \frac{\lambda}{2}  \sum_{n=1}^N h_n^{(\ell)} \left( \Phi(\bs{D}^{(\ell)}) + \bs{\omega}_n^\top (\bs{D}^{(\ell)} + \epsilon_1 \bs{I}_3) \bs{\omega}_n\right), \nonumber \\
% &\geq \frac{\lambda}{2} \Phi(\bs{D}^{(\ell)}) + \epsilon_1 \Vert \bs{\omega}_n \Vert^2, \nonumber\\
% & \geq - 
% \end{align}

%     $L$ is the Lipschitz constant of the gradient of the differentiable term
% \begin{equation}
%     \bs{h} \longmapsto \frac{1}{2} \Vert \bs{y} - \alpha \mathbf{1}_N -  \beta( \bs{h} \ast \bs{x}) \Vert^2,
% \end{equation}
\end{proof}

%%%% SECTION 3 %%%%%
\subsection{Validation of the PSF calibration method}\label{sec:validation_PSF}

In this section, we present experiments conducted to validate the proposed PSF calibration method GENTLE. These experiments have been designed to test the performance of our method both in simulated scenarios and in real-life settings. In all our experiments, the regularization parameter $\lambda$ is determined through an empirical grid search. Specifically, we select the $\lambda$ that minimizes the criterion $\Vert \bs{y} - \hat{\alpha} - \hat{\beta} \bs{g}(\hat{\bs{D}} + \epsilon_1 \bs{I}_3) \ast \bs{x} \Vert^2$, given estimates $(\hat{\alpha},\hat{\beta},\hat{\bs{D}})$ produced by GENTLE for a certain $\lambda$. {The other hyper-parameters are set to $(\alpha_{-}, \alpha_{+}) = (0, 1)$, $(\beta_{-}, \beta_{+}) = (0, 3)$, and $\epsilon_1=\epsilon_2= 10^{-6}$.}

\subsubsection{Experiments in simulated scenarios}\label{sec:exp_simu_beads}

We simulate a 3D synthetic bead image $\bs{x} \in \R^N$ of $1 \mu \text{m}$ of diameter on a regularly spaced grid with size $N = 40 \times 40 \times 80$ and voxel dimension 
$0.05 \times 0.05 \times 0.1 \mu \text{m}^3$ (which is a typical resolution grid in MPM). The observation $\bs{y} \in \R^N$ is then simulated as $\bs{y} = \bar{\alpha} + \bar{\beta} \bar{\bs{h}} \ast \bs{x} + \bs{\nu}$, for fixed values of $ \bar{\alpha} \in [0, +\infty)$, $\bar{\beta} \in [0, +\infty)$, $\bar{\bs{h}} \in \Delta_N$ and $\bs{\nu} \in \mathbb{R}^N$ the realization of a zero-mean Gaussian noise, with standard deviation chosen so as to obtain a given input signal-to-noise ratio (SNR).

We choose generalized exponential distributions shapes for the ground truth PSF $\bar{\bs{h}}$. 
This family of distributions not only includes Gaussian ones, but also effectively captures those with shorter or longer tails. In microscopy, the PSF often deviates from a ideal Gaussian form, optical aberrations being a notable factor influencing this shape variation. \cite{stallinga2010accuracy}.
These distributions are uniquely defined by a matrix $\bar{\bs{S}} \in \mathcal{S}_3^+$ and a parameter $\eta>0$, as
\begin{equation}
    \bar{\boldsymbol{h}} \propto  \left(\exp\left\{-\frac{1}{2} \left(\boldsymbol{\omega}_n^\top (\bar{\bs{S}})  \boldsymbol{\omega}_n \right)^{\eta/2}\right\} \right)_{1\leq n \leq N},
\end{equation}% + \epsilon_1 \boldsymbol{I}_N
where $\bar{\bs{h}}$ is normalized so that $\bar{\bs{h}}\in \Delta_N$. When $\eta=2$, we retrieve a Gaussian distribution.
The matrix $\bar{\bs{S}}$ can be designed to represent a kernel with a specific inclination and width. Formally, $\bar{\bs{S}}$ can be decomposed in an orthonormal basis so that it only depends on the second and third Euler angles $(\bar{\theta}, \bar{\varphi}) \in [0, \pi]\times [-\pi, \pi]$ and on the eigenvalues $\bar{\bs{s}} \in (0, +\infty)^3$ for each principal direction. Indeed, for symmetry reasons, the first Euler angle, corresponding to a rotation around the vertical axis, has no effect so can be set to 0. We thus adopt the notation $\bar{\bs{S}} = \bs{S}(\bar{\theta}, \bar{\varphi},\bar{\bs{s}})$. GENTLE provides an estimate $\hat{\bs{D}}$ such that $\hat{\bs{D}} + \epsilon_1 \bs{I}_N$ approaches $\bs{S}(\bar{\theta}, \bar{\varphi},\bar{\bs{s}})$. On a desktop machine having an Intel i7-4790, 4 CPU cores and 16 GB of RAM, the PSF estimation for one bead crop of size $50 \times 50 \times 90$ takes approximately $1$ hour to reach the stopping criterion $\Vert \bs{h}^{(\ell+1)} -\bs{h}^{(\ell)}\Vert \leq 10^{-6}$.

We draw a comparison of our method GENTLE with another method, which will refer to as Nonlinear Least Squares (NLS) and which has been extensively used in MPM \cite{moraes2008improving,kirshner2013psf,zhu2013efficient}. It corresponds to considering directly the following non-regularized and non-multiconvex problem:
\begin{equation}
\underset{\genfrac{}{}{0pt}{}{\alpha \in \mathcal{A}, \, \beta \in \mathcal{B}, }{\theta \in [0, \pi] ,\, \varphi \in [-\pi, \pi], \,\bs{s} \in (0, +\infty)^3}}{\mathrm{minimize}} ~\frac{1}{2} \Vert \boldsymbol{y} - \alpha \mathbf{1}_N -  \beta  \boldsymbol{g}(\bs{S}(\theta, \varphi, \bs{s})) \ast \boldsymbol{x} \Vert^2, \end{equation}
where $\boldsymbol{g}(\bs{S}(\theta, \varphi,  \bs{s}))$ is the normalized discrete Gaussian kernel on $\Omega$ defined with the Euler angles $(\theta, \varphi) $ and the eigenvalues $\bs{s}$.
This problem can be tackled using a Levenberg-Marquardt solver \cite{marquardt1963algorithm}. Note that, despite its popularity, NLS has its shortcomings, notably it lacks convergence guarantees. %Moreover, our tests show that NLS struggles to adapt to PSFs that deviate from a Gaussian shape. 

Our results are reported in Figure \ref{fig:comparison_GENTLE_NLS} using the {Percent root-mean-square difference (PRD) metric, defined as $100\times \cdot\Vert \hat{\alpha} + \hat{b}(\hat{\bs{h}}\ast \bs{x}) - \bar{\alpha} - \bar{b}(\bar{\bs{h}}\ast \bs{x})  \Vert \big/ \Vert  \bar{\alpha} + \bar{b}(\bar{\bs{h}}\ast \bs{x}) \Vert$. In the case of GENTLE, variable $\hat{\bs{h}}$ is obtained as an output, while for NLS, we set $\hat{\bs{h}} = \boldsymbol{g}(\bs{S}(\hat{\theta},\hat{\varphi}, \hat{\bs{s}}))$.} {For this experiment, the true Euler angles were set to $(\bar{\theta}, \bar{\varphi}) = (5\pi/6, \pi/6)$, the true eigenvalues to $\bar{\bs{s}} = (138.6, 138.6, 3.2)$, and the voxels dimensions to $0.049\times 0.049 \times 0.1 \mu$m$^3$. The results were averaged over 10 random noise realisations.} 
As one can observe, NLS performance degrades significantly for values of the model exponent $\eta$ lower or larger than $2$, i.e., when there is a mismatch between the assumed Gaussian model and the true one. By contrast, although promoting Gaussian shapes using a regularization strategy,  GENTLE manages to provide accurate and robust estimations even in the presence of non-Gaussian generalized exponential PSF shapes. 

%Signal-to-Noise Ratio (SNR) metric\footnote{Defined as $10 \cdot \log_{10}\left( \frac{\| u^* \|^2_2}{\| u^* - u \|^2_2} \right)$, for $u^*$ the clean image and $u$ the degraded image.}

\begin{figure}[H]
\includegraphics[scale=0.5]{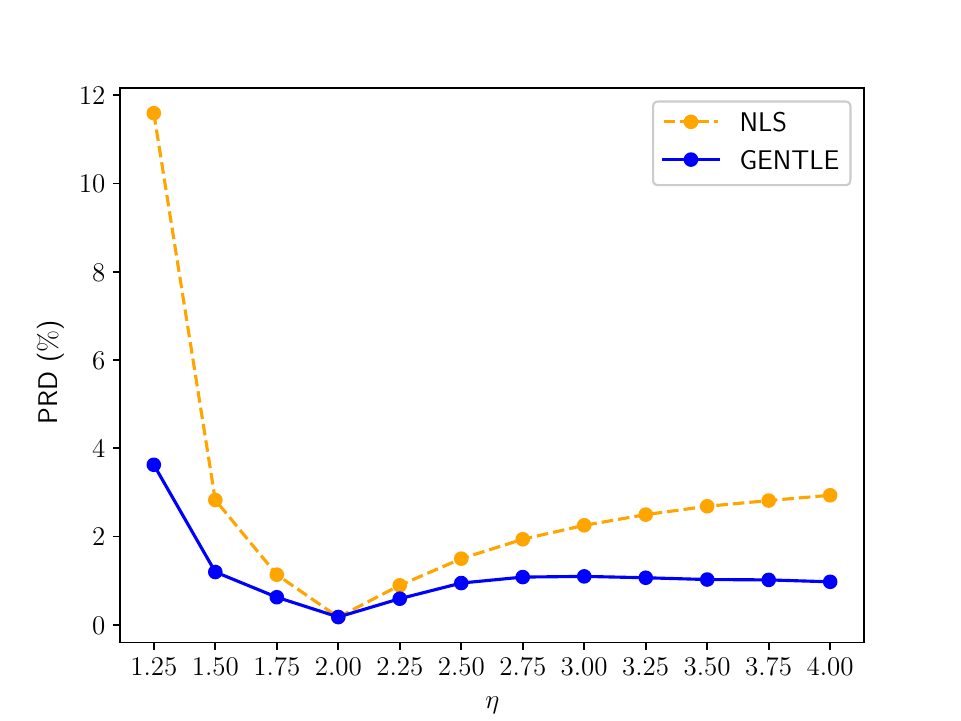}
\caption{Percent root-mean-square difference (PRD) as a function of the parameter $\eta$, for both GENTLE and NLS, for a noise level corresponding to SNR=10dB.\label{fig:comparison_GENTLE_NLS}}
\end{figure}

\subsubsection{Experiments on real beads in homogeneous medium}\label{subsubsec:exp_beads_opt}

The next step of our validation process consists in testing our method on real beads acquired under optimal conditions, namely, beads within in a homogenous medium. Such setting is typical in MPM device calibration, although it usually involves sub-resolution beads instead of $1\mu \text{m}$ diameter beads. The PSF depends on the system optical configuration, including the lens properties and acquisition settings \cite{diaspro2001confocal,cole2011measuring}.
We evaluate the PSF estimation using the Full Width at Half Maximum (FWHM) as a metric, as it is commonly done in the microscopy community. Measured in micrometers for our study, the FWHM represents the width of a bell-shaped curve when it is at half peak value.
 In the ideal case of multiphoton acquisitions where the optics are assumed ``perfect'', the values of the FWHM along each principal axis $X'$, $Y'$ and $Z'$ are~\cite{diaspro2001confocal}:
\begin{equation}\label{eq:FWHM_theo_XY}
    \text{FWHM}_{X'} \simeq \text{FWHM}_{Y'} \simeq \frac{0.7 \lambda_{\text{em}}}{\text{NA}},
\end{equation}
\begin{equation}\label{eq:FWHM_theo_Z}
    \text{FWHM}_{Z'} \simeq\frac{2.3 \lambda_{\text{em}} n_r}{\text{NA}^2},
\end{equation}
where $\lambda_{\text{em}}$ is the emission wavelength, $n_r$ is the refractive index of the immersion medium and $\text{NA}$ the numerical aperture. The above formula will serve as a reference to check the consistency of GENTLE results. 

For our experiment, we imaged a distilled water solution ($n_r = 1.33$) containing fluorescent polystyrene microspheres with a diameter of $\tau = 1\mu \text{m}$, marked with a yellow-green fluorophore emitting at 515 nm. The acquisition was performed 
% with the two-photon Olympus XLPLN25XWMP microscope, 
with an excitation wavelengh $\lambda_{\text{exc}} = 810$ nm, a numerical aperture $\text{NA} = 1.05$, and a voxel size of $0.037\times 0.037 \times 0.05 \,\mu \text{m}^3$. 
% According to \eqref{eq:FWHM_theo_XY}, with such experimental parameters, the optimal FWHMs of the PSF are $0.34\mu\text{m}$ in the X and Y axis, and $1.43\mu \text{m}$ in the depth direction. 

In the obtained image, we were able to select four individual volumes of interest containing isolated beads, using our cropping procedure, on which we applied GENTLE with $\epsilon_1=\epsilon_2= 10^{-6}$, so yielding estimates $(\hat{\alpha},\hat{\beta},\bs{H},\hat{\bs{D}})$.
Then, performing a singular value decomposition of $\hat{\bs{D}} + \epsilon_1 \mathbf{I}_3$ and using the trigonometry formulas from \cite{depriester2018computing}, we deduce the Euler angles $(\hat{\theta}, \hat{\varphi}) \in [0, \pi]\times [-\pi, \pi]$ and the eigenvalues $\hat{\bs{s}}= (\hat{s}_{X'}, \hat{s}_{Y'}, \hat{s}_{Z'}) \in (0, +\infty)^3$ along each principal axis (where re-ordering is performed to align the axis with the Cartesian grid $(X,Y,Z)$)
%\textcolor{red}{Quelle base orthonormale ?}
% and use the trigonometry formulas from \cite{depriester2018computing} to deduce the Euler angles $(\hat{\theta}, \hat{\varphi}) \in [0, \pi]\times [-\pi, \pi]$ and the eigenvalues $\hat{\bs{s}}= (\hat{s}_{X'}, \hat{s}_{Y'}, \hat{s}_{Z'}) \in (0, +\infty)^3$ along each principal axis (where re-ordering is performed to align the axis with the Cartesian grid $(X,Y,Z)$). 
The  FWHM along each axis $A \in \{X',Y',Z'\}$ is obtained as $\text{FWHM}_A =  2 \sqrt{2 \ln(2) / \hat{s}_A}$.
%\textcolor{red}{To compute the eigenvalues $\hat{\bs{s}}= (\hat{s}_{X'}, \hat{s}_{Y'}, \hat{s}_{Z'}) \in (0, +\infty)^3$ along each principal axis and the Euler angles $(\hat{\theta}, \hat{\varphi}) \in [0, \pi]\times [-\pi, \pi]$ of the inverse covariance matrix $\hat{\bs{D}} + \epsilon_1 \mathbf{I}_3$, we decompose $\hat{\bs{D}} + \epsilon_1 \mathbf{I}_3$ in an orthonormal basis and use the trigonometry formulas from \cite{depriester2018computing}.}

As Figure~\ref{fig:slice_bead_results} demonstrates, the estimated FWHM and angles are consistent across the four beads, validating the assumption of a stationary PSF in the observed field. Moreover, the average FWHM values are close to the theoretical expectations computed with \eqref{eq:FWHM_theo_XY} and \eqref{eq:FWHM_theo_Z}, equal to $0.34\mu\text{m}$ along the axial axes, and $1.43\mu \text{m}$ in the depth direction, which validates our approach.

\begin{figure}[H]
\centering
\includegraphics[width=0.45\linewidth]{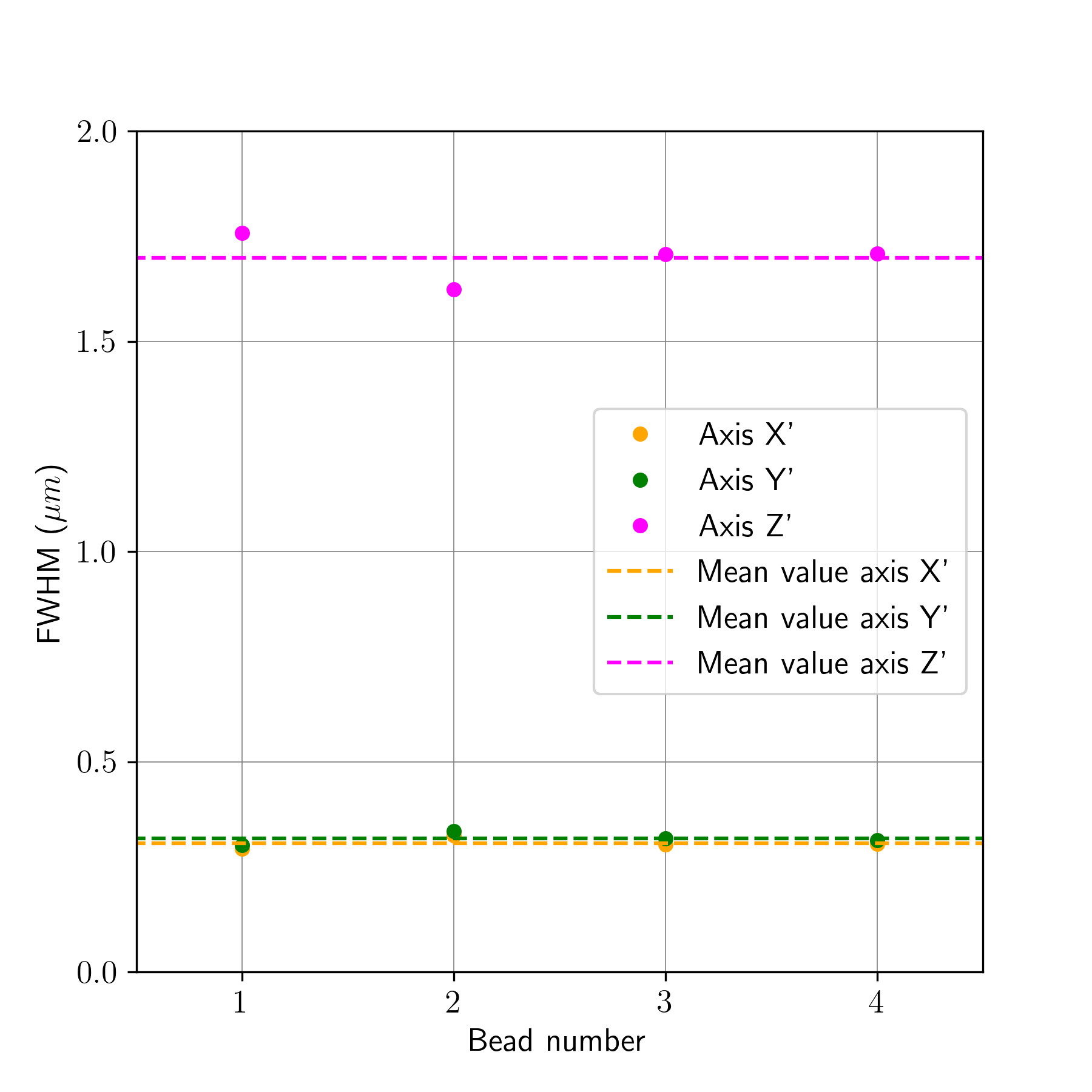} 
\includegraphics[width=0.45\linewidth]{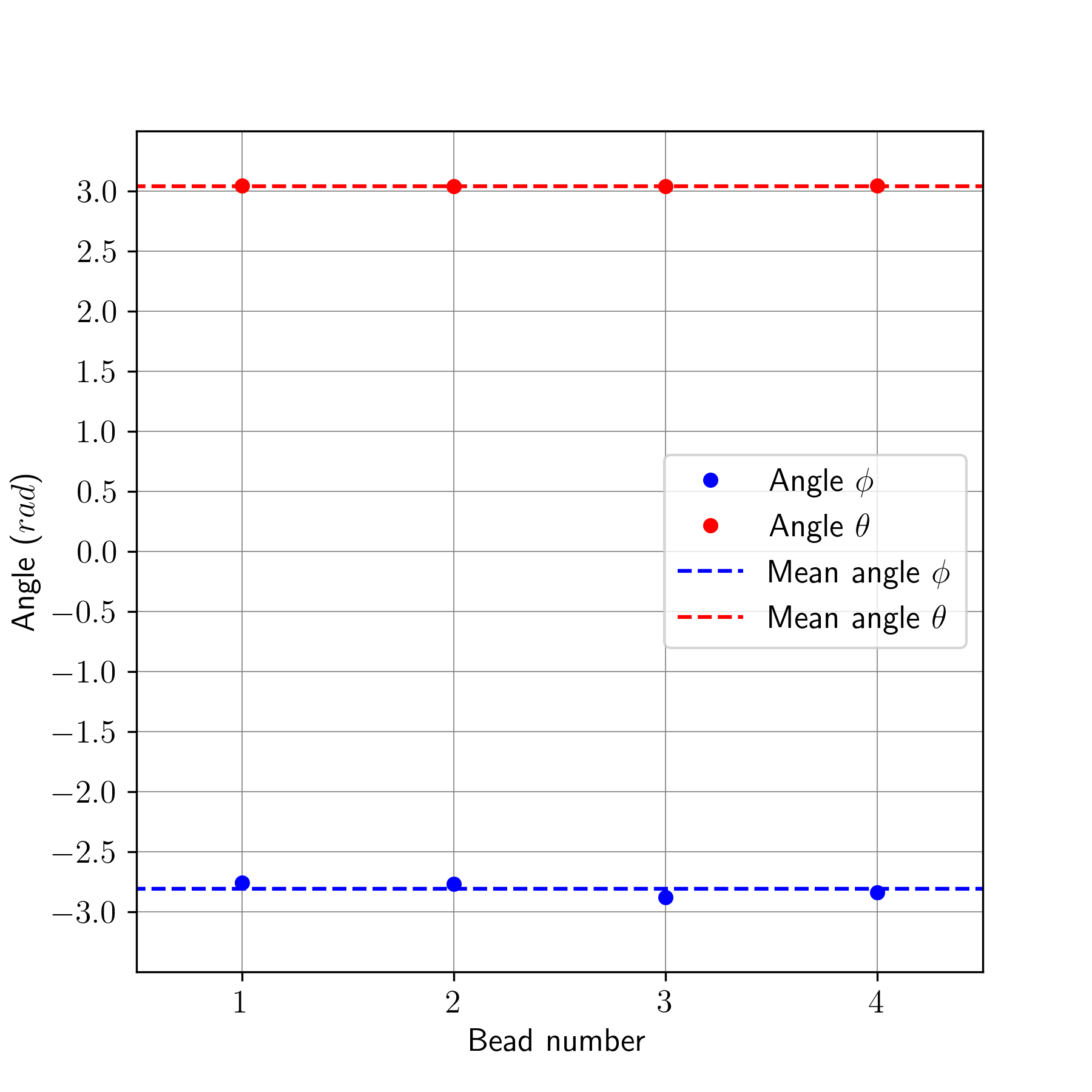}
\caption{Estimated FWHM (left) in the 3 principal component directions $\{X', Y',Z'\}$, and associated Euler angles (right), for the four isolated beads.}
\label{fig:slice_bead_results}
\end{figure}

\section{Proposed solution for the inverse problem}
\label{sec:resto}
We now shift our attention to the image restoration phase, where a non-calibrated and potentially complex, object, denoted by $\bar{\bs{x}} \in \mathbb{R}^M$, is observed, through the multi-photon microscope. The (degraded) observation is denoted $\bs{y} \in \mathbb{R}^M$. Note that the problem dimension $M$ usually differs from the dimension associated with the prior PSF estimation problem, typically with $M >N$, as the object of interest is typically much more spread than a micrometric bead. The goal is to accurately reconstruct an estimate of  $\bar{\bs{x}} \in \mathbb{R}^M$ given $\bs{y} \in \mathbb{R}^M$, and our knowledge of the PSF, deduced from the approach from Section \ref{sec:PSF}. Note that, in a practical scenario, the calibration phase could have been performed at another grid resolution, which would require an interpolation step, before using the estimated PSF from GENTLE algorithm. For the sake of simplicity, we opt here for a simple (though common) setting where the PSF $\bs{h} \in \Delta_N$ has been estimated under the same voxel resolution than the acquisition of the new object. GENTLE method also provides the background parameter, $\alpha \in \mathbb{R}$. The inverse problem thus reads,
\begin{equation}\label{eq:inverse_pb_deconvolution}
    y = \mathcal{D}(\bs{H} \bar{\bs{x}} + \alpha),
\end{equation}
where $\mathcal{D}\colon \mathbb{R}^M \rightarrow \mathbb{R}^M$ represents the noise model and $\bs{H} \in \R^{M\times M}$ is the linear operator such that, for every $\bs{u} \in \R^M$, $ \bs{H} \bs{u} = \bs{h} \ast \bs{u}$. Here, the convolution product is performed with zero-padding. While the calibration step is usually performed in ideal acquisition conditions, where the Gaussian additive noise assumption is sufficient, the situation differs in the image restoration stage (for example, \emph{in vivo} acquisitions might require low laser power, and as such, leads to low-photon counting). It then becomes necessary to introduce a more realistic noise model, to guarantee high quality restored images. 
%It is worth noting that in the present section, our selection of noise model diverges from the one in the PSF estimation phase. Instead of the Gaussian additive i.i.d. noise - adopted primarily for its convenience - here, we opt for a more realistic multiplicative noise model. 
In the following we introduce the considered noise model, we propose a direct method for estimating noise parameters, as well as a variational approach paired with an algorithmic solution to address the inverse problem \eqref{eq:inverse_pb_deconvolution}.
%, under such noise model.

    \subsection{A heteroscedastic noise model for MPM}

Many restoration techniques for MPM imaging assume  a Gaussian homoscedastic noise (i.e., with constant variance across the whole image) \cite{lefort2021famous,danielyan2014denoising,doi2018high}. While such assumption simplifies computational needs, it might not always align with the reality of fluorescence microscopy imaging. The Gaussian assumption generally holds true for high-exposure scenes with abundant photons per the central limit theorem. In contrast, multiphoton or confocal imaging often exhibits a Poisson or Poisson-Gaussian noise profile \cite{crivaro2011multiphoton}. In \cite{van1997quantitative}, the authors illustrate the benefits of adopting a Poisson noise model over a Gaussian one in terms of restoration quality. However, handling such noise types can be computationally intensive, as it requires minimization of the non-smooth Poisson data-fidelity function \cite{zanella2009efficient,carlavan2011sparse,chouzenoux2015convex}. The mixed Poisson-Gaussian noise model describes the corrupted image $\bs{y}=(y_m)_{1\leq m \leq M}$ as
\begin{equation}\label{eq:poisson_gaussian}
(\forall m \in \{1, \dots, M\})\quad y_m \sim \alpha + a \mathcal{P}(a^{-1} [\bs{H} \bar{\bs{x}}]_m)+  \mathcal{N}(0, b),
\end{equation}
with some noise parameters $a \in (0, +\infty)$ and $b \in (0, +\infty)$. Such noise consists of the mixture of a multiplicative noise with mean equal to the intensity of the pixel and a variance proportional to the intensity of the pixel up to the fixed scale $a$, and an additive Gaussian noise identically distributed in each pixel, corresponding to a background noise with variance $b$. However, as shown in \cite{chouzenoux2015convex}, the resulting likelihood term is complicated, involving intractable series that need to be approximated, at the price of the computational time.

In the present work, we consider instead an additive heteroscedastic (i.e., with variance depending on the pixel) noise model approximating the Poisson-Gaussian noise, in the line of the method proposed in \cite{foi2008practical,foi2009clipped} for low-exposure natural images. To transform the Poisson-Gaussian noise in \eqref{eq:poisson_gaussian} into a fully additive one, we use the following well-known result: if $X$ is a random variable following a Poisson distribution $\mathcal{P}(\theta)$, with $\theta >0$, then 
\begin{equation}
\frac{1}{\sqrt{\theta}}(X - \theta) \underset{\theta \rightarrow +\infty}{\overset{\mathcal{L}}{\longrightarrow}} \bar{X},
\end{equation}
where $\bar{X} \sim \mathcal{N}(0, 1)$ and $\mathcal{L}$ denotes the convergence in distribution. In other words, for large values of $\theta$, the approximation $\mathcal{P}(\theta) \simeq \mathcal{N}(\theta, \theta)$ is valid. Therefore, assuming the multiplicative and background noises are independent, the final model of noise we consider is Gaussian, with a variance varying linearly with the intensity of the pixel, i.e.,
\begin{equation}
(\forall m \in \{1, \dots, M\})\quad y_m = \alpha + [\bs{H} \bar{\bs{x}}]_m  + w_m,
\end{equation}
where $(w_m)_{1\leq m \leq M}$ are independent variables sampled according to
\begin{equation}\label{eq:noise_model}
(\forall m \in \{1, \dots, M\})\quad w_m \sim \mathcal{N}(0, \sigma^2([\bs{H} \bar{\bs{x}}]_m)),
\end{equation}
%\textcolor{blue}{Ok pour le carré manquant ici, je crois que dans la suite c'est bien sans le carré, je n'ai donc pas modifié.}
and $\sigma$ is the function corresponding to the standard deviation, defined as
\begin{equation}\label{eq:linear_var}
(\forall t \in \R)\quad \sigma(t) = \syst{ \begin{array}{l l} \sqrt{a t + b}  & \text{if } t\geq 0,  \\ 0& \text{otherwise. } \end{array}}
\end{equation}
The noise model \eqref{eq:noise_model} depends on two parameters $a$ and $b$, for which we present hereafter an estimation strategy.

    \subsection{Estimation of the noise parameters}

We propose in Algorithm \ref{algo:noise_params} our procedure for estimating heteroscedastic Gaussian noise parameters in 3D images. {Our approach deviates from that proposed in \cite{foi2008practical}. In \cite{foi2008practical}, the segmentation into non-overlapping level sets it performed through a wavelet decomposition of the noisy image, which we found too computationally intensive and unnecessary in our MPM context. Moreover, the protected toolbox associated with \cite{foi2008practical} is not tailored for 3D images.}
\vspace{0.5cm}

\begin{algorithm}[H]
\SetAlgoLined
\noindent\textbf{Step 1: 3D image smoothing.}
Let $\bs{y} \in \R^M$ the observed 3D image. Convolve $\bs{y}$ with a normalized uniform kernel of size $s \times s \times s$, with $s\in \N^*$, yielding a smoothed image $\bs{y}_s$.
  
\noindent\textbf{Step 2: Volume segmentation.}
\begin{enumerate}
    \item Let $J \in \N^*$. Segment $\bs{y}_s$ into $J$ distinct parts using the Lloyd-Max estimator \cite{lloyd1982least} for optimal quantization, so as to minimize the mean squared error between the segmented image and $\bs{y}_s$.
    \item Denote
    $\ell_1 \leq \ell_2 \leq \dots \leq \ell_J$ the resulting $J$ intensity levels.
    \item Denote $S_1, S_2, \dots, S_J$ the segmented parts of the image, corresponding to the intensity intervals $([\ell_{j-1}, \ell_j])_{1 \leq j \leq J}$.
\end{enumerate}  

\noindent\textbf{Step 3: Intensity and variance estimation.}
For each segmented zone $(S_j)_{1 \leq j \leq J}$:
\begin{enumerate}
    \item Estimate the intensity $\hat{I}_j$ within $S_j$ as:
    \begin{equation}
    \hat{I}_j = \frac{1}{|S_j|} \sum_{m \in S_j} y_m,
    \end{equation}
    with $|S_j|$ denotes the number of pixels (i.e., the cardinal) in $S_j$.
    
    \item Estimate the variance $\hat{\sigma}^2_j$ within $S_j$ as:
    \begin{equation}
    \hat{\sigma}_j^2 = \frac{1}{|S_j|} \sum_{m \in S_j} \left( y_m - \hat{I}_j \right)^2.
    \end{equation}
\end{enumerate}

\noindent\textbf{Step 4: Linear regression.}
Obtain the coefficients $(a, b)$ by conducting a least squares linear regression on the paired values $(\hat{I}_j, \hat{\sigma}_j^2)$ to fit Relation~\eqref{eq:linear_var}.

\caption{Estimation of parameters $a$ and $b$ \label{algo:noise_params}}
\end{algorithm}

    \subsection{A constrained parameter-free deconvolution framework}

After the noise modelling step, we move to the resolution 
of the inverse problem \eqref{eq:inverse_pb_deconvolution}. A standard approach is to minimize a cost function composed of a data fidelity term, $f\colon \R^M \longrightarrow \R $, in conjunction with a regularization function, $g\colon \R^M \longrightarrow \R$, which can be expressed as
\begin{equation}\label{eq:pb_restoration_standard}
\begin{aligned}
  &\underset{\bs{x} \in [0,+\infty)^M}{\text{minimize}} \;&&f(\bs{x}) + \chi g(\bs{x}),
  %\\
  %& \text{subject to}
  %&& \bs{x}\in [0,+\infty)^N,
\end{aligned}
\end{equation}
where $\chi>0$ denotes a regularization parameter. The data fidelity term is designed to ensure that the reconstructed data aligns with the observation model. The regularization function introduces specific \emph{a priori} characteristics to the reconstructed image, such as smoothness. 
% This function not only imposes desired traits on the image but also often ensures uniqueness of the solution to the optimization problem. 
The role of the regularization parameter is to balance the two terms. Its tuning can be based on image quality metrics such as the Mean Square Error or the SNR. However, when the ground truth or reference images are unavailable, this strategy becomes infeasible. Thus, image restoration often involves a time-consuming adjustment of the regularization parameter, with the final decision largely influenced by the user's expertise and subjective judgment. {While various statistical methods exist for estimating this parameter \cite{antoni2023bayesian,vatankhah2014regularization}, they are usually too computationally intensive for 3D data.} 

Instead, we propose to tackle the restoration challenge using a constrained approach (sometimes refereed to as the discrepancy principle). This strategy is reminiscent of those described in \cite{carrillo2012sparsity,afonso2010augmented,harizanov2013epigraphical}. The data fidelity term is constrained so that it does not exceed a known (or easily estimated) value. Our formulation reads
\begin{equation}\label{eq:pb_restoration_const}
\begin{aligned}
  &\underset{\bs{x} \in \R^M}{\text{minimize}} \;&&g(\bs{x}),\\
  & \text{subject to}
  &&f(\bs{x}) \leq B ~\text{and}~\bs{x}\in [0,+\infty)^M,
\end{aligned}
\end{equation}
where $B >0$ is a bound, either predetermined or deduced from the data. A main advantage of this formulation is that, by assuming an appropriate choice for the data fidelity function $f$, it is often feasible to derive a statistical-based upper bound for $f$, giving a good first trial value for $B$. It is worth noting that, if $f$ and $g$ are both convex, then for each $B > 0$, there usually exists a corresponding $\chi > 0$ such that \eqref{eq:pb_restoration_const} and \eqref{eq:pb_restoration_standard} are equivalent.
Let us now explicit our choices for $f$, $g$,
and $B$.

    \subsection{Choices for the data-fidelity and regularization functions}

Following our noise model \eqref{eq:noise_model}, we set
\begin{equation}
  (\forall \bs{x}\in \R^M) \quad  f(\bs{x})=\Vert \bs{W}(\bs{H} \bs{x}-\bs{y} + \alpha) \Vert^2.
\end{equation}
with $\bs{W} \in \mathcal{S}_M^+$ a weighting matrix accounting for the heterodasticity of the noise. Here, we take
\begin{equation}
\bs{W} = \mathrm{Diag}\left\{ \left( \frac{1}{\sigma([\bs{H} \bar{\bs{x}}]_m)}\right)_{1\leq m \leq M} \right\}.
\end{equation}
In practice, since {$\bs{H} \bar{\bs{x}}$ is unknown, we approximate it by $\bs{y}_s$, a denoised version of $\bs{y}$ as defined in Algorithm \ref{algo:noise_params}. }
Because of the large number of pixels,
the law of large numbers makes the following approximation valid:
$
f(\bs{x}) \approx M
$, which suggests setting $B = M$ in \eqref{eq:pb_restoration_const}.

For the regularization function $g$, we propose a re-weighted smooth total variation term, adjusted according to the voxel size along the axes $X$, $Y$, and $Z$. Specifically, given a smoothing parameter $\delta >0$, function $g$ is defined as
\begin{equation}\label{eq:def_stv}
(\forall \bs{x} \in \R^M) \quad g(\bs{x})= \sum_{m=1}^M \sqrt{ \delta + \frac{1}{r_X}(\bs{G}_X \bs{x})_m^2 + \frac{1}{r_Y}(\bs{G}_Y \bs{x})_m^2 + \frac{1}{r_Z}(\bs{G}_Z \bs{x})_m^2},
\end{equation}
with $\bs{G}= [\bs{G}_X^\top\,|\,\bs{G}_Y^\top\, |\,\bs{G}_Z^\top]^\top \in \R^{3M\times M}$,
%\textcolor{red}{Avec la notation employée, on s'attendrait plutôt à $M\times 3M$.}, 
where matrices $\bs{G}_X$, $\bs{G}_Y$, and $\bs{G}_Z$ represent discrete gradient operators along the axes $X$, $Y$, and $Z$, respectively.

    \subsection{Restoration algorithm}

The resulting optimization problem \eqref{eq:pb_restoration_const} reads as the minimization of a smoothed convex function under convex constraints. An effcient Majorization-Minimization strategy \cite{sun2017majorization},
entitled P-MMS,
%$^{\text{loc}}$,
was recently proposed in \cite{chouzenoux2022local} 
to address such a class of optimization problem.
Specifically tailored for large-scale constrained image processing problems such as ours, this algorithm features rapid execution times while providing theoretical convergence guarantees.

The central idea behind P-MMS
%$^{\text{loc}}$ 
is to use the external penalty principle to cope with the constraints. Namely, for the constraints $f(\bs{x}) \leq  B$ and $\bs{x} \in [0, +\infty)^M$, we introduce corresponding penalty functions, denoted by $R_1$ and $R_2$:
\begin{equation}
    (\forall \bs{x} \in \R^M) \quad R_1(\bs{x}) = \mathrm{d}^2_{\mathcal{B}(\bs{0}_M, B)} \left(\bs{W}(\bs{H} \bs{x}-\bs{y}+\alpha)\right),
\end{equation}
and 
\begin{equation}
     (\forall \bs{x}\in \R^M) \quad R_2(\bs{x}) = \mathrm{d}^2_{[0, +\infty)^M} \left(\bs{x}\right),
\end{equation}
where $\mathrm{d}_C$ is the distance to the set $C \subseteq \R^M$. Note that the gradient of $R_1$ and $R_2$ are easily computable since the projections onto the ball $\mathcal{B}(\bs{0}_M, B)$ and the nonnegative orthant $[0, +\infty)^M$ are closed form.
A key assumption for P-MMS
%$^{\text{loc}}$ 
to be applicable is that every function in the problem admits a tangent quadratic upper bound at every point. 
In mathematical terms, for a differentiable function $\psi\colon \R^M \longrightarrow \R$, this requirement corresponds to the existence, for every $\bs{x}' \in \R^M$, of a curvature matrix $\bs{A}_\psi(\bs{x}') \in \mathcal{S}_M^+$, such that
\begin{equation}
(\forall \bs{x}\in \R^M)\quad \psi(\bs{x}) \leq \psi(\bs{x}') + \nabla \psi(\bs{x}')^\top (\bs{x}-\bs{x}') + (\bs{x}-\bs{x}')^\top \bs{A}_\psi(\bs{x}') (\bs{x}-\bs{x}').
\end{equation}
The existence of such quadratic upper bounds can be readily derived for each function $g$, $R_1$, and $R_2$ in our problem, using the descent Lemma \cite[Lemma 2.64]{bauschke2019convex} or the half-quadratic majorization formula \cite[Lemma 1]{chouzenoux2013majorize}. This leads the following valid curvature matrices:
\begin{equation}\label{eq:def_A_Psi}
\bs{A}_g(\bs{x}) = \bs{G}^\top \mathrm{BDiag}\left\{ \left( \left(\delta + \frac{(\bs{G}_X \bs{x})_m^2}{r_X} + \frac{(\bs{G}_Y \bs{x})_m^2}{r_Y} + \frac{(\bs{G}_Z \bs{x})_m^2}{r_Z} \right)^{-1/2} \bs{I}_3 \right)_{1\leq m \leq M} \right\} \bs{G},
\end{equation}
\begin{equation}
    \bs{A}_{R_1}(\bs{x}) = \bs{H}^\top   \bs{W}^\top \bs{W}\bs{H}, \quad \text{and} \quad  \bs{A}_{R_2}(\bs{x}) = 2 \bs{I}_M,
\end{equation}
where $\mathrm{BDiag}$ stand for block diagonal matrix. Then, for any penalty parameter $\gamma>0$, the penalized function
\begin{equation}
    F_{\gamma}(\bs{x}) = g(\bs{x}) + \gamma (R_1(\bs{x}) + R_2(\bs{x}))
\end{equation}
admits a quadratic tangent majorant at $\bs{x}$ uniquely defined by its curvature matrix:
\begin{equation}\label{eq:def_A_global}
    \bs{A}_{F_\gamma}(\bs{x}) = \bs{A}_{g}(\bs{x}) + \gamma (\bs{A}_{R_1}(\bs{x}) + \bs{A}_{R_2}(\bs{x})).
\end{equation}
We present in Algorithm~\ref{algo:P-MMS} the P-MMS approach, to solve Problem~\eqref{eq:pb_restoration_const}. At each iteration of the inner algorithm, $\mathbf{x}_{k+1}$ is updated within the affine space defined by the directions $\{-\nabla F_{\gamma_j}(\mathbf{x}_k), \mathbf{x}_k - \mathbf{x}_{k-1}\}$, employing a Majorization-Minimization approach to determine the multidimensional step-size.
A local version P-MMS$^{\text{loc}}$ of this algorithm was developed which implements a trust-region strategy to accelerate further convergence. We redirect readers to \cite{chouzenoux2022local} or to its publicly available code, for more details. The P-MMS algorithm benefits from the convergence guarantees given in Theorem \ref{thm:convergence_PMMS}.

\begin{algorithm}
\caption{P-MMS \label{algo:P-MMS}}
Inputs: $(\gamma_j)_{j\in \N} \in (\R^+)^{\N}$, $(\varepsilon_j)_{j\in \N} \in  (\R^+)^{\N}$, $\bs{x}_0 \in \R^M$.\\
%Output: solution to $(\mathcal{P})$.\\
\For{$j=0,1, \dots$}{
\tcp{find an (approximated) minimizer of $F_{\gamma_j}$}
Set initial point $\mathbf{x}_0$,\\
\For{$k=1, \dots$}{
Construct subspace directions $\bs{D}_k = [-\nabla F_{\gamma_j}(\mathbf{x}_k), \mathbf{x}_k - \mathbf{x}_{k-1}]$,\\
Compute $\bs{A}_{F_{\gamma_j}}(\mathbf{x}_k)$ according to \eqref{eq:def_A_global},\\
$\bs{u}_{k}= -\left[\bs{A}_{F_{\gamma_j}}(\mathbf{x}_k)\right]^{\dagger} \bs{D}_k^\top \nabla F_{\gamma_j}(\mathbf{x}_k)$,\\
$\mathbf{x}_{k+1}= \mathbf{x}_k + \bs{D}_k \bs{u}_{k}$,\\
\If{$\Vert \nabla F_{\gamma_j}(\mathbf{x}_{k+1}) \Vert < \varepsilon_j$}{exit loop \tcp{stop inner algorithm if given precision $\varepsilon_j$ on the norm of the gradient is reached}
return $\mathbf{x}_{k+1}$.}}
$\bs{x}_j = \mathbf{x}_{k+1}.$}
\textbf{return } $\bs{x}_{j}$.
\end{algorithm}

\begin{theorem}[Convergence of the P-MMS algorithm]\label{thm:convergence_PMMS}
Assume the sequence of parameters $(\varepsilon_j)_{j\in \N}$ satisfies, for every $j \in \N$, $\varepsilon_j >0$ and $\lim_{j \rightarrow +\infty} \varepsilon_j = 0$. Also assume that $(\gamma_j)_{j\in \N}$ is a nondecreasing sequence of positive reals and $\lim_{j \rightarrow +\infty} \gamma_j = +\infty$.
Then, the sequence $(\bs{x}_j)_{j\in \N}$ generated by Algorithm \ref{algo:P-MMS} is bounded and any of its cluster point is a solution to Problem \eqref{eq:pb_restoration_const}.
\end{theorem}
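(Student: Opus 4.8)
The plan is to treat Algorithm~\ref{algo:P-MMS} as an exterior (quadratic) penalty scheme and to prove convergence by the classical penalty argument, adapted to the fact that the inner loop returns only an approximate stationary point. I first record the structural facts I will need. The total penalty $P := R_1 + R_2$ is nonnegative, convex and differentiable, and $P(\bs{x}) = 0$ if and only if $\bs{x}$ is feasible for \eqref{eq:pb_restoration_const}, i.e. $f(\bs{x}) \le B$ and $\bs{x}\in[0,+\infty)^M$; I denote this (assumed nonempty) feasible set by $\mathcal{C}$ and set $g^\star := \inf_{\bs{x}\in\mathcal{C}} g(\bs{x})$, which is finite since $g$ is bounded below by some $g_{\min}$. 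For each $j$, $F_{\gamma_j} = g + \gamma_j P$ is convex (as $g$, $R_1$, $R_2$ are), differentiable, and coercive (established below), hence admits a minimizer; moreover the quadratic-majorant structure exhibited in \eqref{eq:def_A_Psi}--\eqref{eq:def_A_global} lets the inner P-MMS iterations of \cite{chouzenoux2022local} drive $\Vert\nabla F_{\gamma_j}\Vert$ to zero, so the stopping test is met in finite time and $\bs{x}_j$ is well defined with $\Vert \nabla F_{\gamma_j}(\bs{x}_j)\Vert \le \varepsilon_j$.

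The key inequality comes from convexity. For any feasible $\bs{z}\in\mathcal{C}$ we have $P(\bs{z})=0$ and $F_{\gamma_j}(\bs{z}) = g(\bs{z})$, while convexity gives $F_{\gamma_j}(\bs{z}) \ge F_{\gamma_j}(\bs{x}_j) + \langle \nabla F_{\gamma_j}(\bs{x}_j), \bs{z}-\bs{x}_j\rangle$. Using Cauchy--Schwarz and the gradient tolerance, this yields
\begin{equation}\label{eq:penalty_key}
g(\bs{x}_j) + \gamma_j P(\bs{x}_j) \le g(\bs{z}) + \varepsilon_j \Vert \bs{x}_j - \bs{z}\Vert,
\end{equation}
valid for every feasible $\bs{z}$. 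I will use \eqref{eq:penalty_key} with a single fixed $\bs{z}_0\in\mathcal{C}$ for boundedness, and with varying $\bs{z}$ for the optimality conclusion.

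Boundedness of $(\bs{x}_j)_j$ is the crux and the step I expect to be the main obstacle, precisely because we control only the gradient and not the exact minimum: the right-hand side of \eqref{eq:penalty_key} carries the term $\varepsilon_j\Vert\bs{x}_j\Vert$, which must be absorbed by the left-hand side. I would do this by showing that $\underline{F} := g + \gamma_0 P$ (recall $\gamma_j \ge \gamma_0 > 0$) grows at least linearly, $\underline{F}(\bs{x}) \ge a_1\Vert\bs{x}\Vert - a_2$ for some $a_1>0$, $a_2\in\R$. This rests on the geometry of the problem: $g$ dominates $\Vert\bs{G}\bs{x}\Vert$ up to a positive constant and hence grows linearly in every direction outside $\ker\bs{G} = \R\mathbf{1}_M$ (the constant images), while along that constant direction $R_1$ (as $\bs{W}\bs{H}\bs{x}$ leaves the ball) and $R_2$ (as components turn negative) grow quadratically. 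Granting this bound and $P\ge 0$, \eqref{eq:penalty_key} with $\bs{z}_0$ gives $(a_1-\varepsilon_j)\Vert\bs{x}_j\Vert \le g(\bs{z}_0) + a_2 + \varepsilon_j\Vert\bs{z}_0\Vert$; since $\varepsilon_j\to 0$, we have $a_1-\varepsilon_j\ge a_1/2$ for $j$ large, the remaining finitely many iterates being finite, so $(\bs{x}_j)_j$ is bounded and $\eta_j := \varepsilon_j\Vert\bs{x}_j-\bs{z}\Vert \to 0$.

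With boundedness in hand the conclusion is standard. From \eqref{eq:penalty_key} and $g \ge g_{\min}$ we get $\gamma_j P(\bs{x}_j) \le g(\bs{z}_0) - g_{\min} + \eta_j$, so $P(\bs{x}_j) \to 0$ because $\gamma_j\to+\infty$ with bounded numerator; likewise, for each fixed feasible $\bs{z}$, $\limsup_j g(\bs{x}_j) \le g(\bs{z})$, whence $\limsup_j g(\bs{x}_j)\le g^\star$ upon taking the infimum over $\mathcal{C}$. Now let $\bar{\bs{x}}$ be any cluster point, $\bs{x}_{j_k}\to\bar{\bs{x}}$: continuity of $P$ gives $P(\bar{\bs{x}})=0$, i.e. $\bar{\bs{x}}\in\mathcal{C}$, and continuity of $g$ gives $g(\bar{\bs{x}}) \le \limsup_j g(\bs{x}_j) \le g^\star$, while feasibility forces $g(\bar{\bs{x}})\ge g^\star$. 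Hence $g(\bar{\bs{x}})=g^\star$ and $\bar{\bs{x}}$ solves \eqref{eq:pb_restoration_const}. As an alternative to this self-contained argument, once convexity, differentiability, the quadratic majorants \eqref{eq:def_A_global}, coercivity of each $F_{\gamma_j}$, and the identity $\{P=0\}=\mathcal{C}$ are verified, one may directly invoke the master convergence theorem of \cite{chouzenoux2022local}.
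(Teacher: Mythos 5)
Your proof is correct, but it takes a genuinely different route from the paper's. The paper's proof of Theorem~\ref{thm:convergence_PMMS} is a bare verification of the hypotheses of the master convergence theorem of \cite{chouzenoux2022local} --- differentiability of $g$, $R_1$, $R_2$; coercivity, convexity and the Kurdyka--\L{}ojasiewicz property of each $F_\gamma$; lower-boundedness and continuity of the curvature map \eqref{eq:def_A_global} --- after which that cited theorem, which packages both the inner subspace-MM dynamics and the outer penalty scheme, does all the work. You instead re-derive the outer penalty analysis from scratch: the inexact-stationarity inequality $g(\bs{x}_j)+\gamma_j P(\bs{x}_j)\le g(\bs{z})+\varepsilon_j\Vert\bs{x}_j-\bs{z}\Vert$ obtained from convexity of $F_{\gamma_j}$ and the gradient stopping test, a linear-growth bound on $g+\gamma_0 P$ to absorb the $\varepsilon_j\Vert\bs{x}_j\Vert$ term and obtain boundedness, then $P(\bs{x}_j)\to 0$ and $\limsup_j g(\bs{x}_j)\le g^\star$ to conclude that every cluster point is feasible and optimal. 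What your route buys: it makes explicit where $\gamma_j\uparrow+\infty$, $\varepsilon_j\downarrow 0$, convexity and coercivity are each used; it dispenses entirely with the K\L{} property, which a convex penalty argument never needs; and it delivers attainment of $g^\star$ at cluster points without presupposing that Problem \eqref{eq:pb_restoration_const} has a solution. Two points worth tightening. First, your linear-growth bound follows for free from coercivity plus convexity (compact sublevel sets give $F_{\gamma_0}(\bs{x})\ge m+\Vert\bs{x}-\bs{x}^*\Vert/R$ outside a ball around a minimizer $\bs{x}^*$), so the kernel-of-$\bs{G}$ geometry is needed only to establish coercivity itself, which is the same fact the paper asserts --- and note the boundary conditions of $\bs{G}$ determine whether $\ker\bs{G}=\R\mathbf{1}_M$ or is trivial, though either case is fine. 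Second, your argument is self-contained only for the outer loop: finite termination of the inner loop, i.e.\ that the subspace-MM iterations drive $\Vert\nabla F_{\gamma_j}\Vert$ below $\varepsilon_j$, is not proved by you; there you, like the paper, must lean on the convergence theory of \cite{chouzenoux2022local}, and this is precisely where the curvature and K\L{} hypotheses the paper checks come into play. (A small inherited quibble: for $\{R_1=0\}$ to coincide with $f(\bs{x})\le B$, the ball radius should be $\sqrt{B}$ rather than $B$; this is the paper's notation, not an error of yours.)
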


\begin{proof}
    It suffices to check that the assumptions for \cite[Theorem 2]{chouzenoux2022local} are satisfied. 
    \begin{itemize}
        \item The functions $g$, $R_1$ and $R_2$ are differentiable,
        \item For every $\gamma >0$, the function $F_\gamma$ is coercive, convex, and satisfies the K\L{} property.
        \item For every $\gamma >0$, the curvature function defined in \eqref{eq:def_A_global} is lower bounded independently of $\bs{x}$ and is continuous.
        %\textcolor{red}{Sois plus précise: pour chaque $\gamma$ ou uniformément?}
    \end{itemize}
This concludes the proof.
\end{proof}

    \subsection{Validation of the restoration method on simulated data}\label{sec:val_restoration}

To illustrate the performance of our heteroscedastic constrained formulation for the MPM image restoration tast, we first conducted an experiment using simulated data. %The core advantage of the constrained formulation is its ability to eliminate the need for tuning a regularization parameter.

For this experiment, we chose an image of a fly brain from \cite{chalvidal2023block} as the object of interest $\bar{\bs{x}}$. This image, with dimensions $M=128\times 128 \times 40$, was artificially degraded through a convolution operator $\bs{H}$ mimicking the effect of a normalized 3D Gaussian kernel with inverse covariance matrix parameterized by the angles $\theta=5\pi/6$ rad, $\varphi=0$ rad, and eigenvalues $\bs{s} = (50, 50, 20)$. Moreover, heteroscedastic noise was introduced based on the noise model presented in \eqref{eq:noise_model}, with $a=0.01$ and $b=10^{-5}$.
The values for the noise and the blur parameters were chosen according to the typical magnitudes one may encounter in MPM. In MPM, noise levels are inherently high, and the PSF width is significant, especially in the depth direction. The voxels dimensions were configured to $0.05 \times 0.05 \times 0.05\mu$m$^3$.

To provide a comprehensive evaluation, we compare our approach against the more traditional regularized approach in \eqref{eq:pb_restoration_standard}, where we choose the data-fidelity function $f$ to fit homoscedastic Gaussian noise, setting $f(\bs{x})= \Vert \bs{H} \bs{x}-\bs{y} + \alpha\Vert^2$, and the smoothed TV regularization function $g$ defined in \eqref{eq:def_stv}.

For both models, the restoration was performed running the P-MMS algorithm. The parameter $\delta$ was set to $0.1$ and, for the constrained problem, the bound $B$ to the number of pixels, i.e., $B=M$. 
%\textcolor{red}{Mais cette borne fait-elle sens si on prend un modèle homoscédastique?} \textcolor{blue}{Non ça ne fait sens que pour un bruit heteroscedastique. C'est pour cela qu'on compare ici l'approche ( heteroscedastique + contrainte) avec l'approche (homoscedastique + regularisation)}
The penalty parameters $(\gamma_j)_{j\in \N}$ in Algorithm \ref{algo:P-MMS} were set to $\gamma_j = (2j)^{2}$ and the precision parameters $(\varepsilon_j)_{j\in \N}$ were chosen as $\varepsilon_j = 10^5\big/(\gamma_j)^{0.75}$. The algorithm was initialized with $\bs{x}_0 = \bs{y}$. Using the same configuration as previously (a desktop machine having an Intel i7-4790, 4 CPU cores, and 16 GB of RAM), the deconvolution on the synthetic fly brain image takes $2$ min and $10$ seconds to reach the final SNR (up to 2 digits).

Figure \ref{fig:convergence_check} displays the SNR as a function of the iteration number of the P-MMS algorithm for the constrained formulation, confirming the theoretically convergence established in Proposition \ref{thm:convergence_PMMS}, and illustrating the fast stability after few dozens of iterations only. Figure \ref{fig:comparison_reg_const} captures the performance of the two methodologies. It showcases the SNR of the restored images using the standard regularized approach as a function of the regularization parameter $\chi$, as well as the SNR achieved using our parameter-free method. A notable observation is the important fluctuation in SNR values for the regularized method as $\chi$ gets closer to its optimal value. It is crucial to highlight that this optimal value of $\chi$ remains unknown in the real-world context of Multiphoton Microscopy (MPM), as the SNR metric is not available. In contrast, our constrained methodology consistently delivers superior SNR, underscoring the interest of a reliable noise model. The absence of any tuning parameter is furthermore a crucial advantage in real-life applications.
Finally, Figure \ref{fig:simulation_restorations} complements our quantitative analysis with visual comparisons. It displays the 10th slice in the XY plane of (a) the original image, (b) the degraded image with blur only, (c) the degraded image with blur and noise, and the reconstructions obtained using (d) the regularized least-squares approach for the value of $\chi$ maximizing the output SNR, (e) Richardson-Lucy's deconvolution method \cite{richardson1972bayesian} and (f) our proposed constrained approach assuming Gaussian heteroscedastic noise. Our approach yields superior visual restoration compared to the commonly used Richardson-Lucy deconvolution method in microscopy \cite{ingaramo2014richardson,holmes1989richardson} and achieves similar visual results to the regularized least-squares approach, with the added advantage of being parameter-free.

\begin{figure}
    \centering
\includegraphics[scale=0.5]{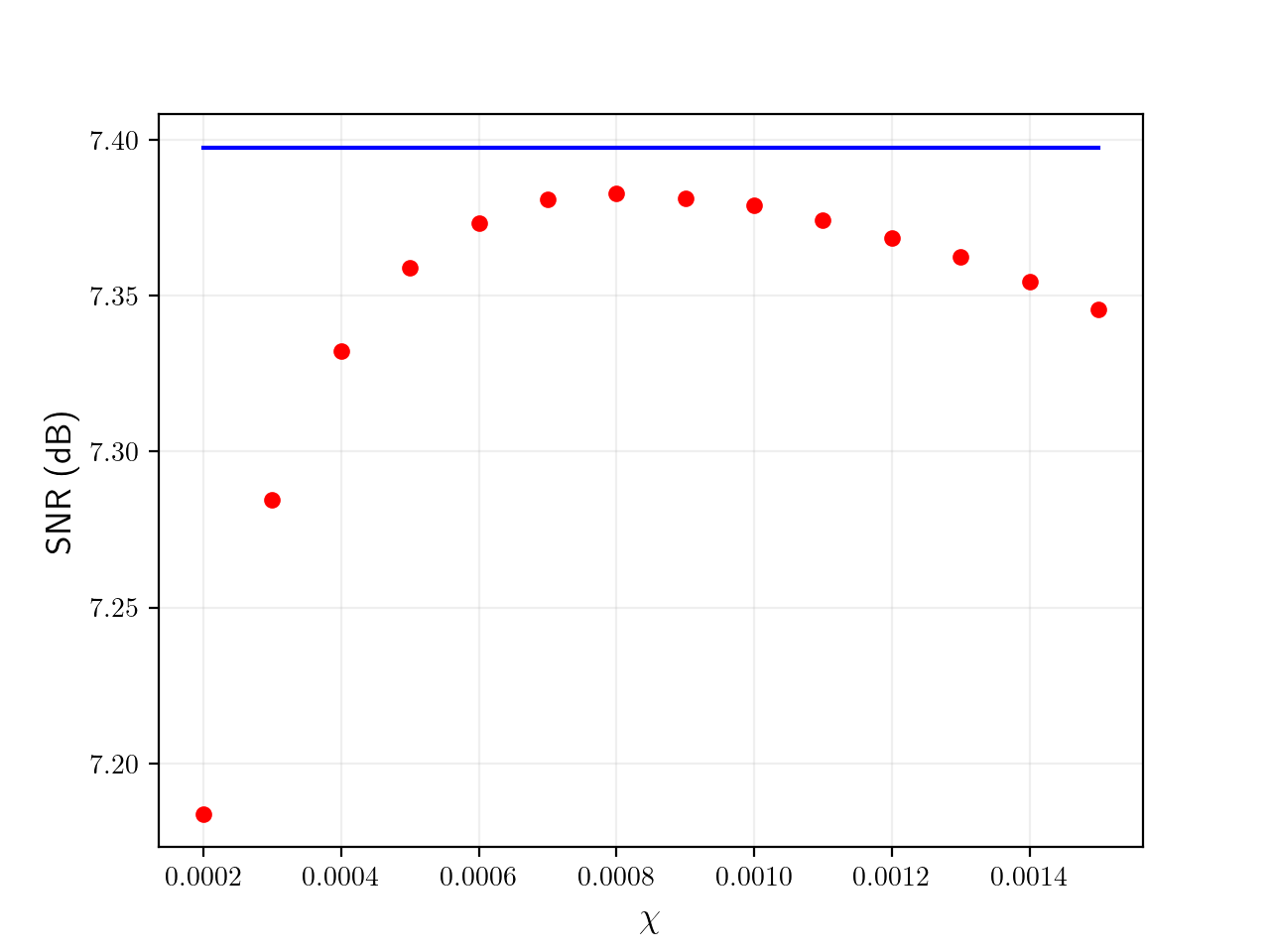}
    \caption{Comparison of the penalized least-squares approach (red dots) with the proposed constrained approach assuming a Gaussian heteroscedastic noise model (blue line), in terms of restored image SNR, in dB.}
    \label{fig:comparison_reg_const}
\end{figure}

\begin{figure}
    \centering
    \includegraphics[scale=0.5]{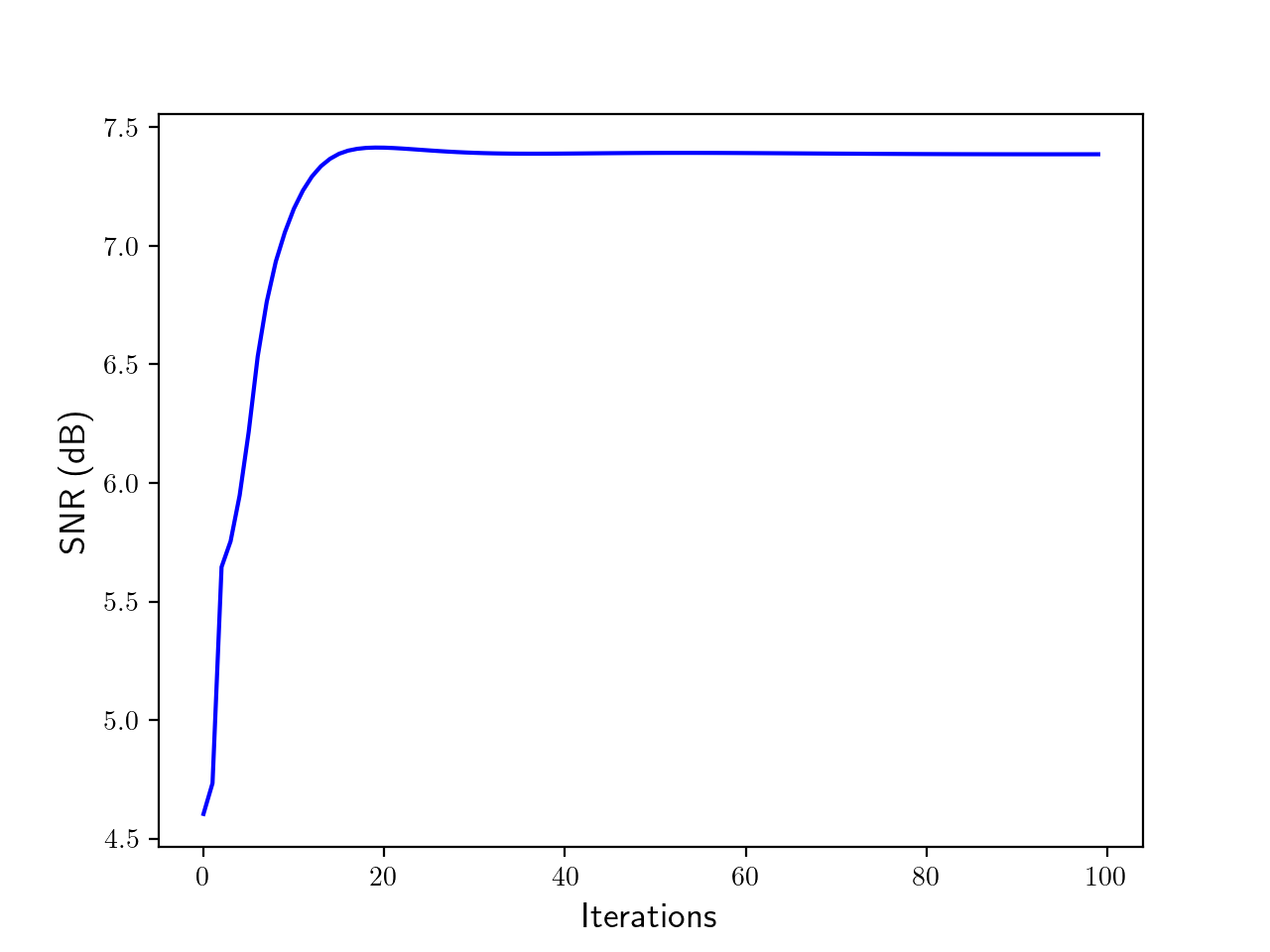}
    \caption{SNR evolution along the iterations $j$ in Algorithm \ref{algo:P-MMS} for the resolution of the proposed constrained formulation.}
    \label{fig:convergence_check}
\end{figure}

\begin{figure}
\centering
\begin{minipage}[c]{0.3\textwidth}
  \centering
 \includegraphics[width=1.0\textwidth]{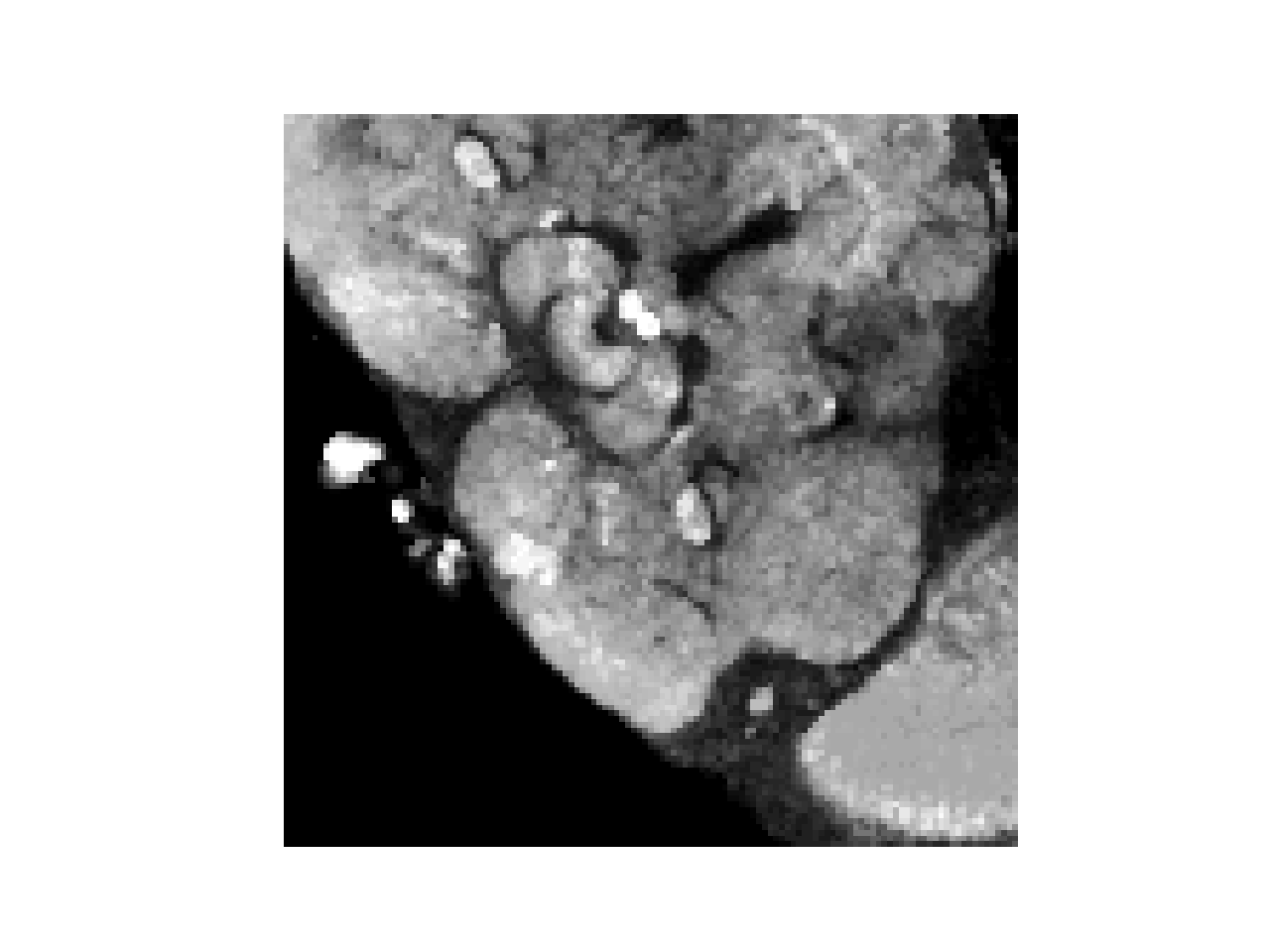} \subcaption[]{Original  \\  ~ \label{fig:1a}}
\end{minipage}%
\begin{minipage}[c]{0.3\textwidth}
  \centering
 \includegraphics[width=1.0\textwidth]{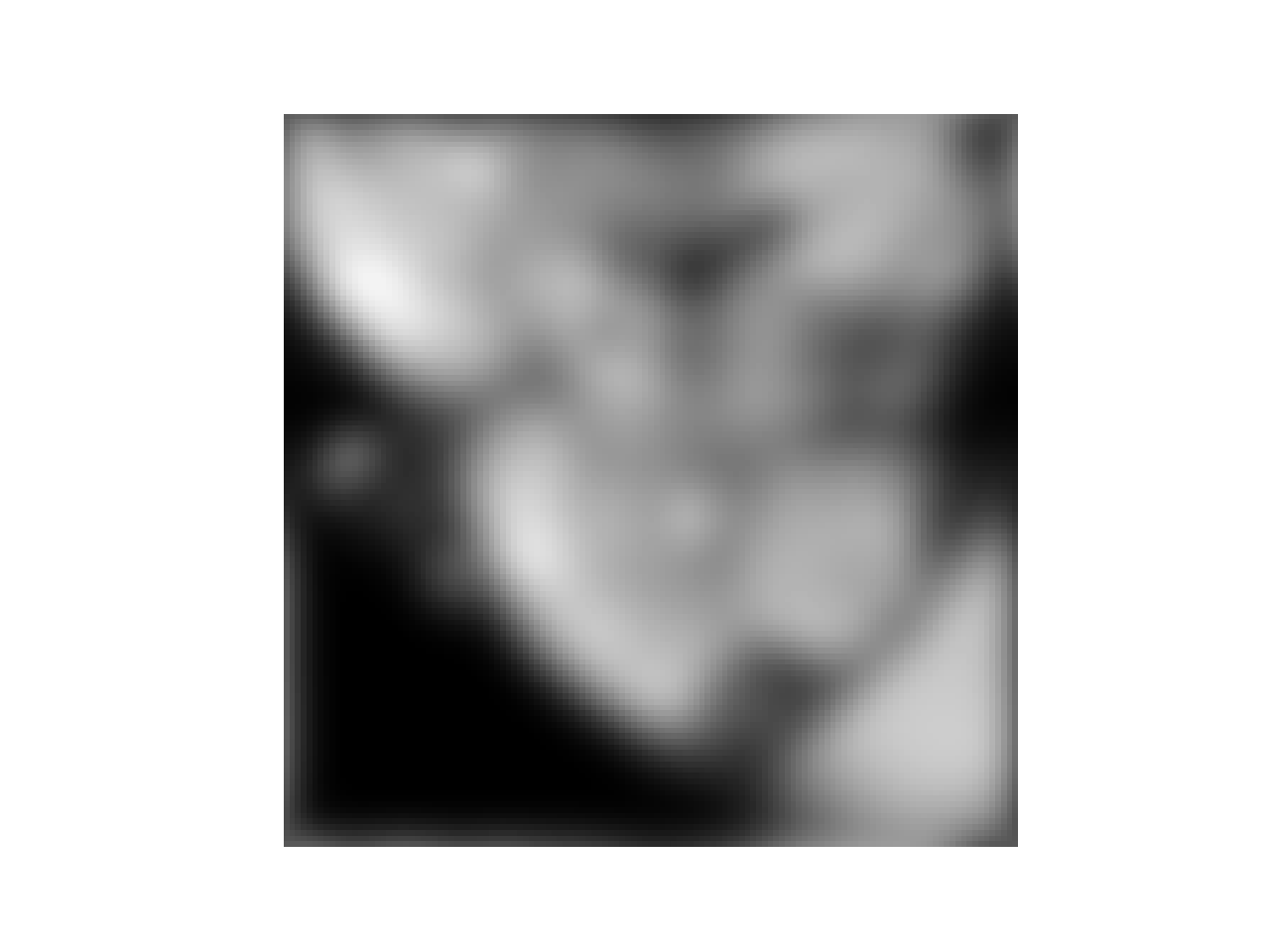} \subcaption[]{Degraded with blur \\ (SNR=$5.79$dB) \label{fig:1b}}
\end{minipage}
\begin{minipage}[c]{0.3\textwidth}
  \centering
 \includegraphics[width=1.0\textwidth]{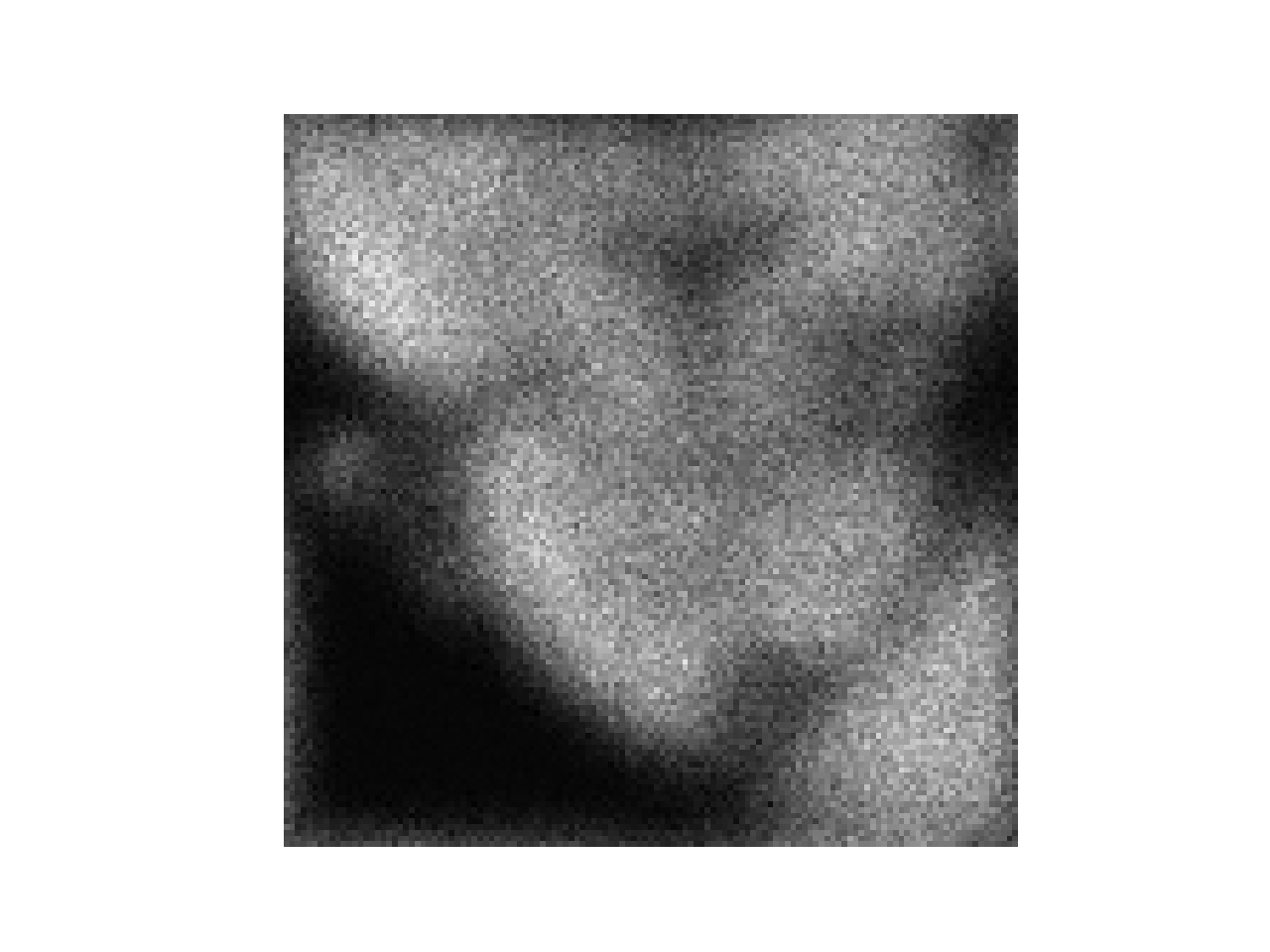} \subcaption[]{Degraded with blur \\ and noise (SNR=$5.26$dB) \label{fig:1c}}
\end{minipage}%

\begin{minipage}[c]{0.3\textwidth}
  \centering
 \includegraphics[width=1.0\textwidth]{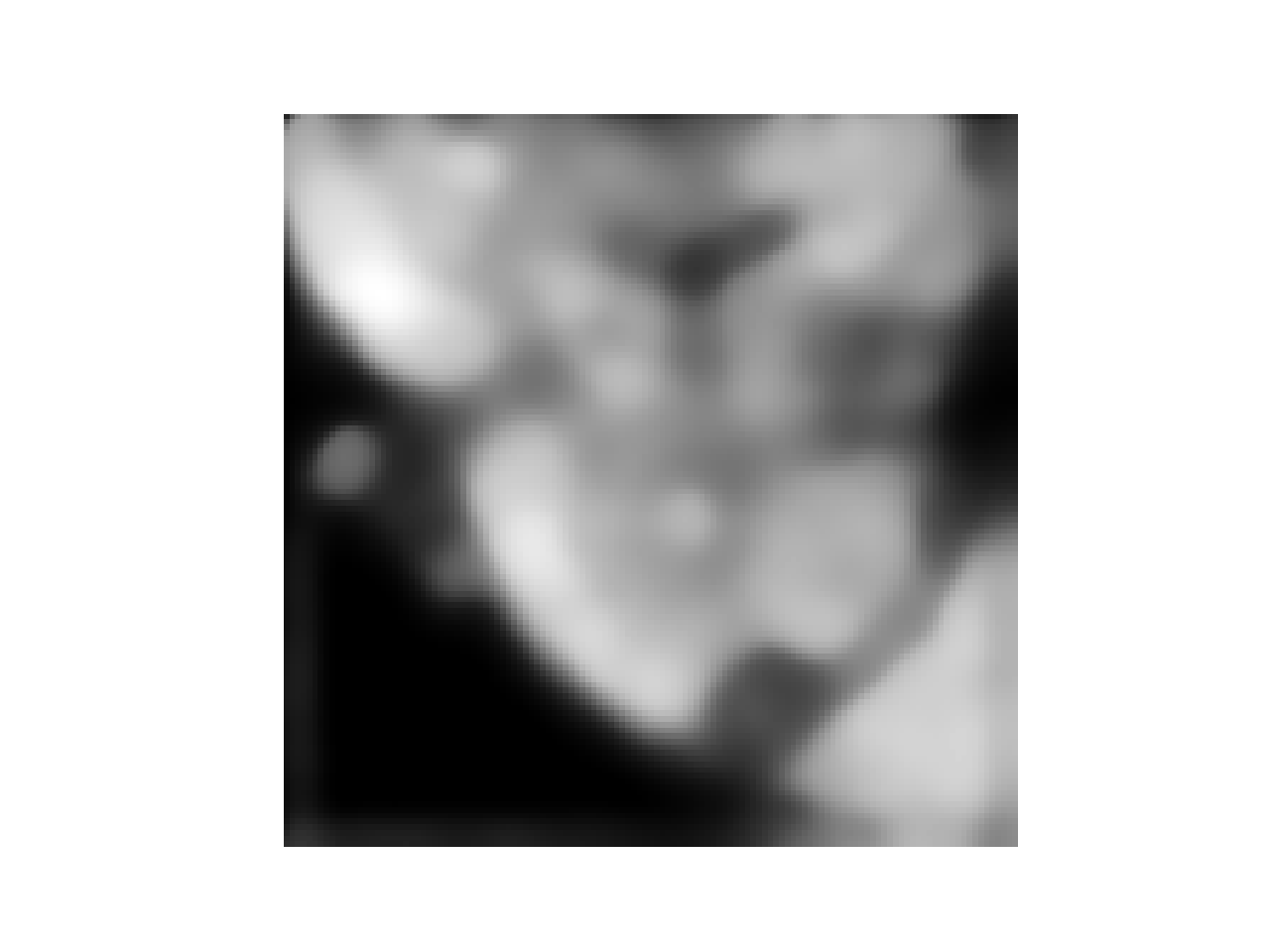} \subcaption[]{Regularized least\\-squares (SNR=$7.38$dB) \label{fig:1d}}
\end{minipage}%
\begin{minipage}[c]{0.3\textwidth}
  \centering
 \includegraphics[width=1.0\textwidth]{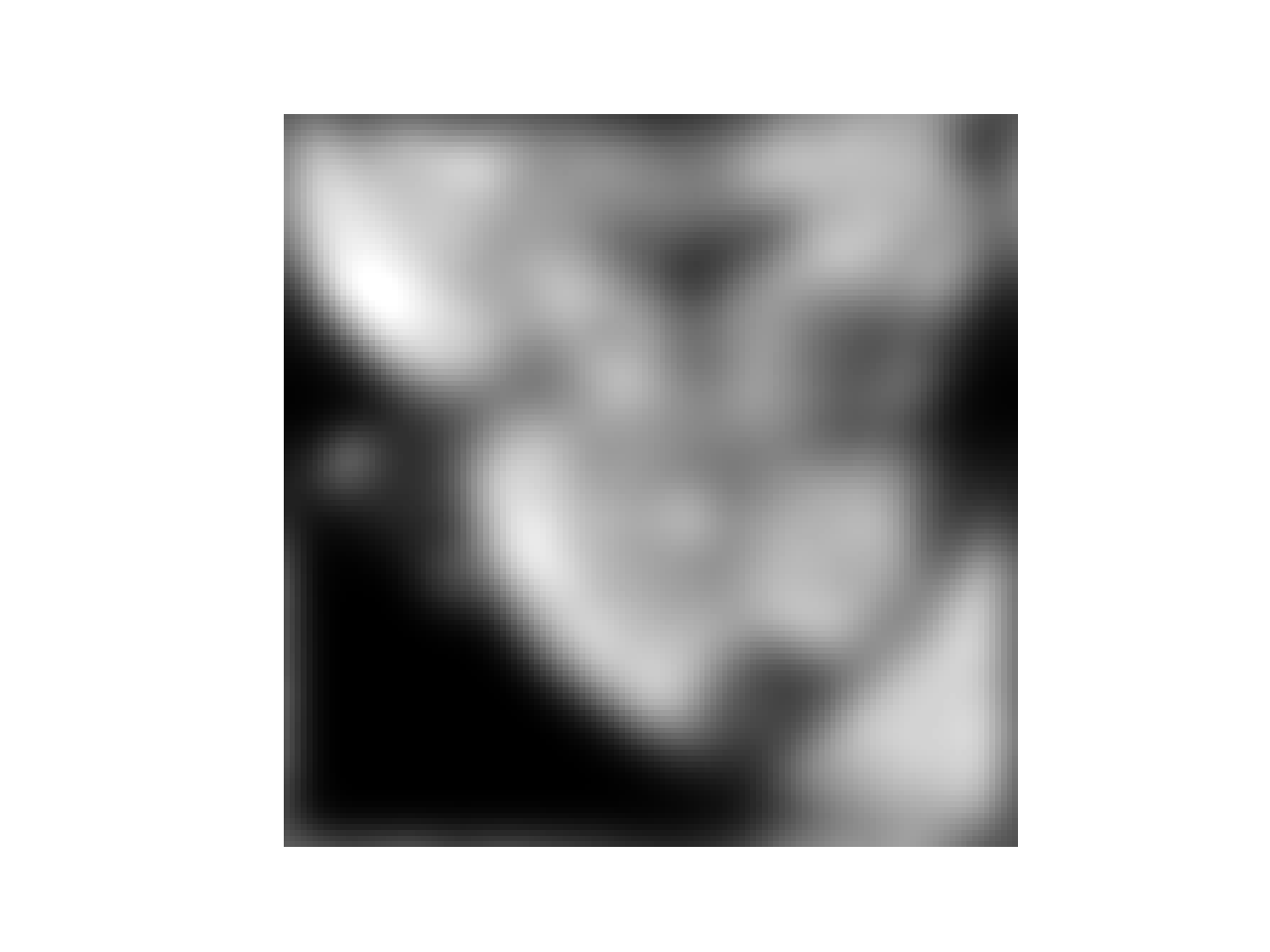} \subcaption[]{Richardson-Lucy \\(SNR=$5.92$dB) \label{fig:1e}}
\end{minipage}%
\begin{minipage}[c]{0.3\textwidth}
  \centering
 \includegraphics[width=1.0\textwidth]{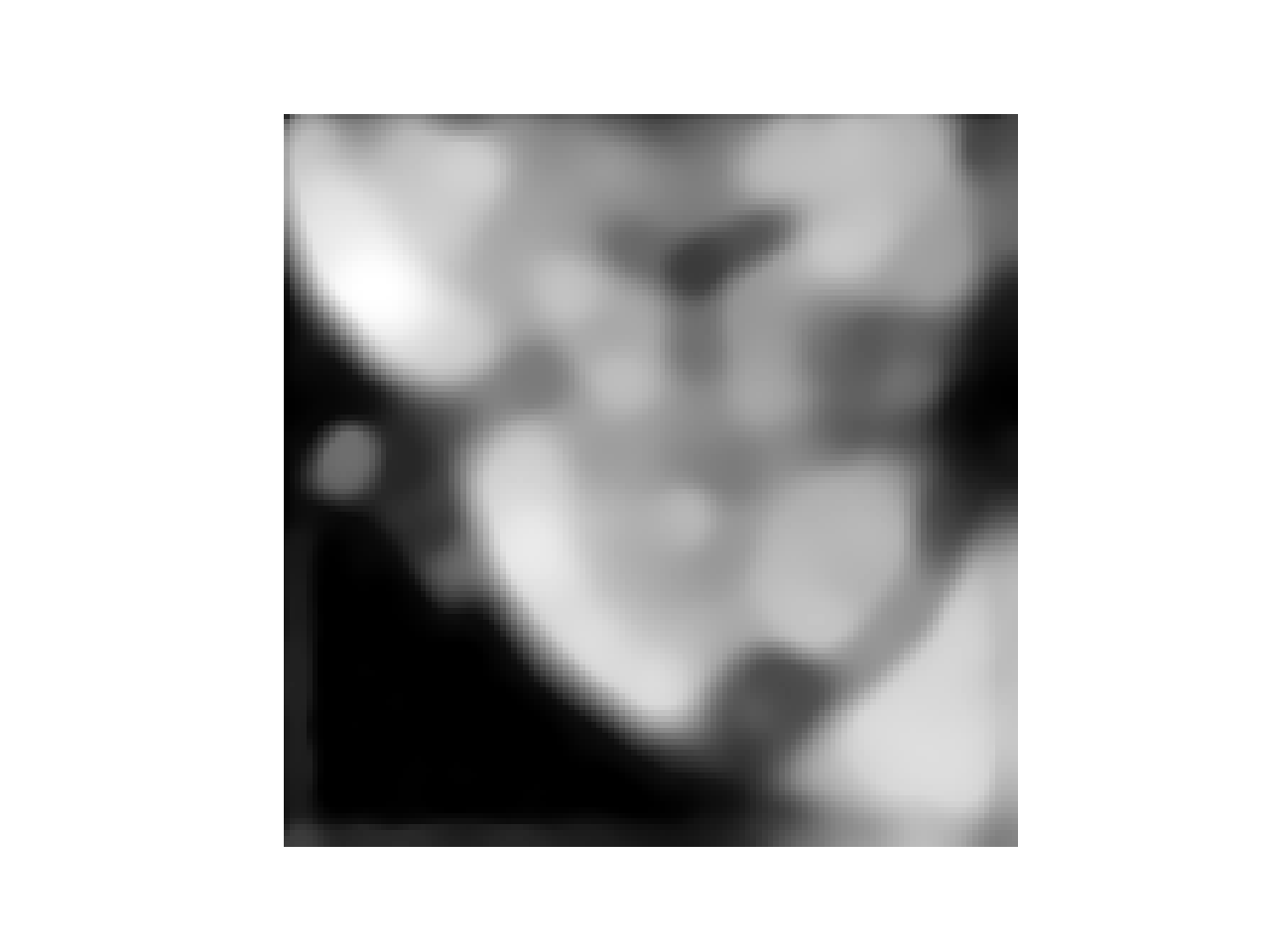} \subcaption[]{Ours \\(SNR=$7.40$dB) \label{fig:1f}}
\end{minipage}%
    \caption{Deconvolution results for the synthetic fly brain image. Slice 10-th is displayed. SNR values are computed on the whole volume.}
    \label{fig:simulation_restorations}
\end{figure}

\section{Application of the proposed pipeline to muscle tissue imaging}
\label{sec:mouse}
In this section, we apply the entire pipeline, encompassing both PSF estimation and deconvolution, to the restoration of a mouse muscle volume, characterized by its prominent myosin filaments, from MPM acquisitions. {All mice were bred and housed in Limoges University’s animal facility under controlled conditions (20$^\circ$C, 12 hours light/12 hours dark cycle) with free access to standard mouse chow and tap water. Experimental procedures were carried out in accordance with the European legislation on animal experimentation (Directive 2010/63/UE), and approved by Ethical Committee no.033 (APAFIS \#1903-2015091612088147 v2). Phenotypic and molecular analyses were performed on 12-week-old male animals.}

    \subsection{Presentation of the experimental setup}

A mouse muscle sample was immersed in a water solution. As a reference for the PSF estimation phase, polystyrene microspheres of $1 \mu \text{m}$ diameter were integrated along the muscle's perimeter. We employed a bi-channel raw acquisition technique, combining both the Second Harmonic Generation signal from the myosin and the two-photon fluorescence from the microspheres. Thus, the PSF is calibrated in the same field of view, and under the same resolution, than the one where we observe the myosin organization. Such experimental setting is somewhat ideal, as the conditions for recording the PSF match precisely those for capturing the image of the muscle sample. Given that the muscle assembly acts as an absorbing and scattering medium that distorts the optical wavefront, recording the PSF in the location where the image is restored enhances restoration accuracy. The microscopy acquisition was carried out with an excitation wavelength of $810 \text{nm}$. We used the same device than in our previous experiments, that is a water immersion objective tailored for multiphoton acquisitions, specifically the Olympus XLPLN25XWMP with a numerical aperture $ \text{NA} = 1.05$. The selected voxel dimensions for the imaging process were $0.049 \times 0.049 \times 0.1 \mu \text{m}^3$. The resulting volume was of dimension $M = 840 \times 840 \times 180$.

    \subsection{Results for PSF calibration step}

We first conducted the PSF estimation on the channel of the acquired image that contains the beads, that is we perform an \emph{in situ} PSF calibration within an heterogeneous medium. Given these conditions, we anticipate a notably higher dispersion compared to the optimal setting discussed in Section \ref{subsubsec:exp_beads_opt}.
We fitted a convolution kernel on 8 cropped volumes, containing isolated bead images, using GENTLE (Algorithm \ref{algo:MPM_PSF}). The hyper-parameters of the PSF estimation model \eqref{eq:minimization_pb_PSF} were set identically to Section \ref{sec:validation_PSF}.
% The regularization parameter $\lambda$ in \eqref{eq:minimization_pb_PSF} was determined through a grid search, selecting the $\lambda$ that minimizes the criterion $\Vert \bs{y} - \hat{\alpha} - \hat{\beta} \bs{g}(\hat{\bs{D}} + \epsilon_1 \bs{I}_3) \ast \bs{x} \Vert^2$. We set $\epsilon_1=\epsilon_2= 10^{-6}$. 
Figure \ref{fig:FWHM_muscle} presents a comparison of the estimated FWHMs along the principal axis and Euler angles for the 8 beads, using the GENTLE method.
Estimations across different beads again exhibit consistency, confirming the PSF stationarity assumption in this experiment.
 In addition, we display in Figure \ref{fig:intensity_profiles} the intensity profiles of the observed beads and of their reconstitution $\hat{\alpha} + \hat{\beta}\hat{\boldsymbol{h}} \ast \boldsymbol{x}$, where $\boldsymbol{x}$ is the theoretical bead. The GENTLE method accurately fits the observation across all three axes and, in particular, fits asymmetrical shapes more effectively than NLS, thanks to its robust formulation. 
\begin{figure}[H]
\centering
\includegraphics[width=0.4\linewidth]{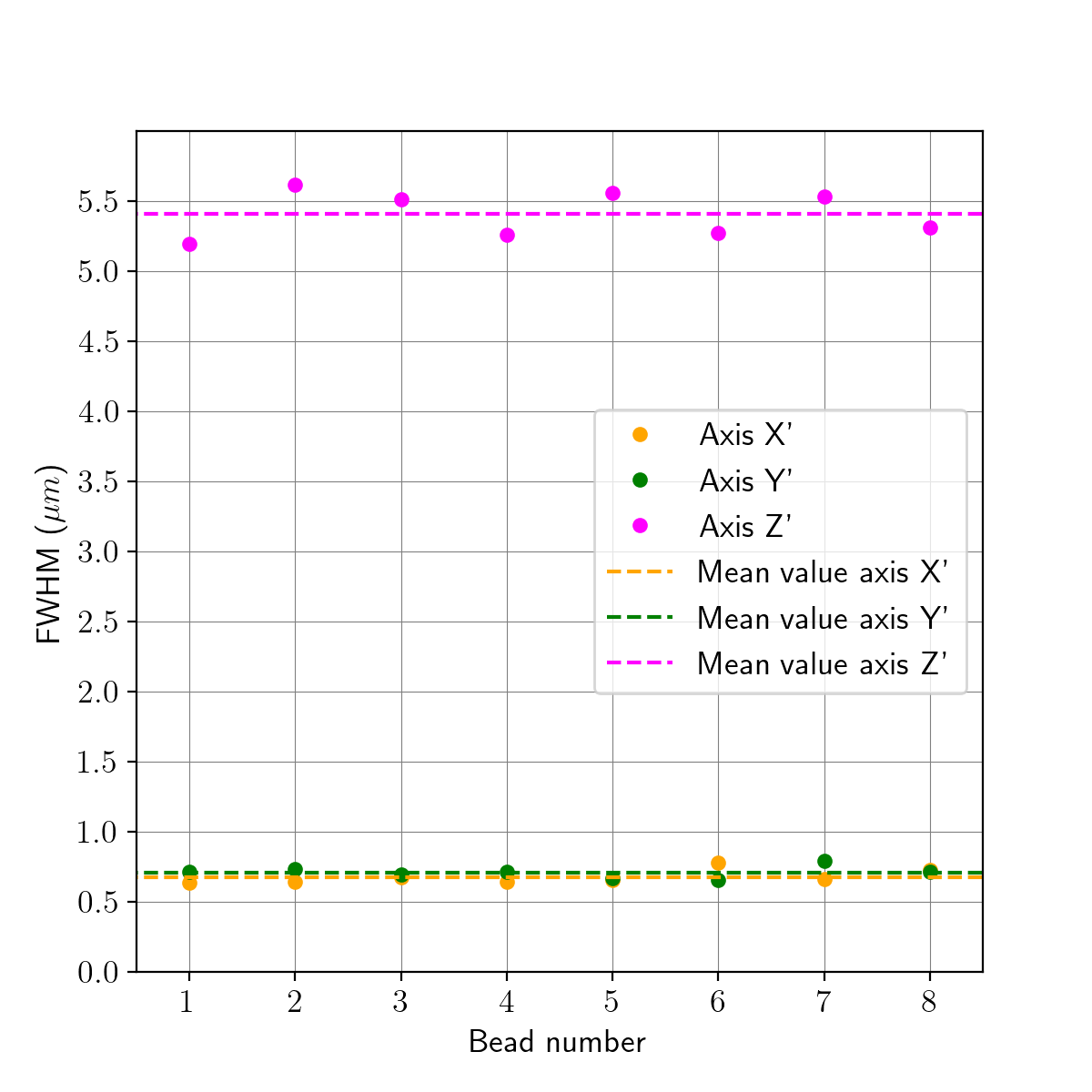} 
\includegraphics[width=0.4\linewidth]{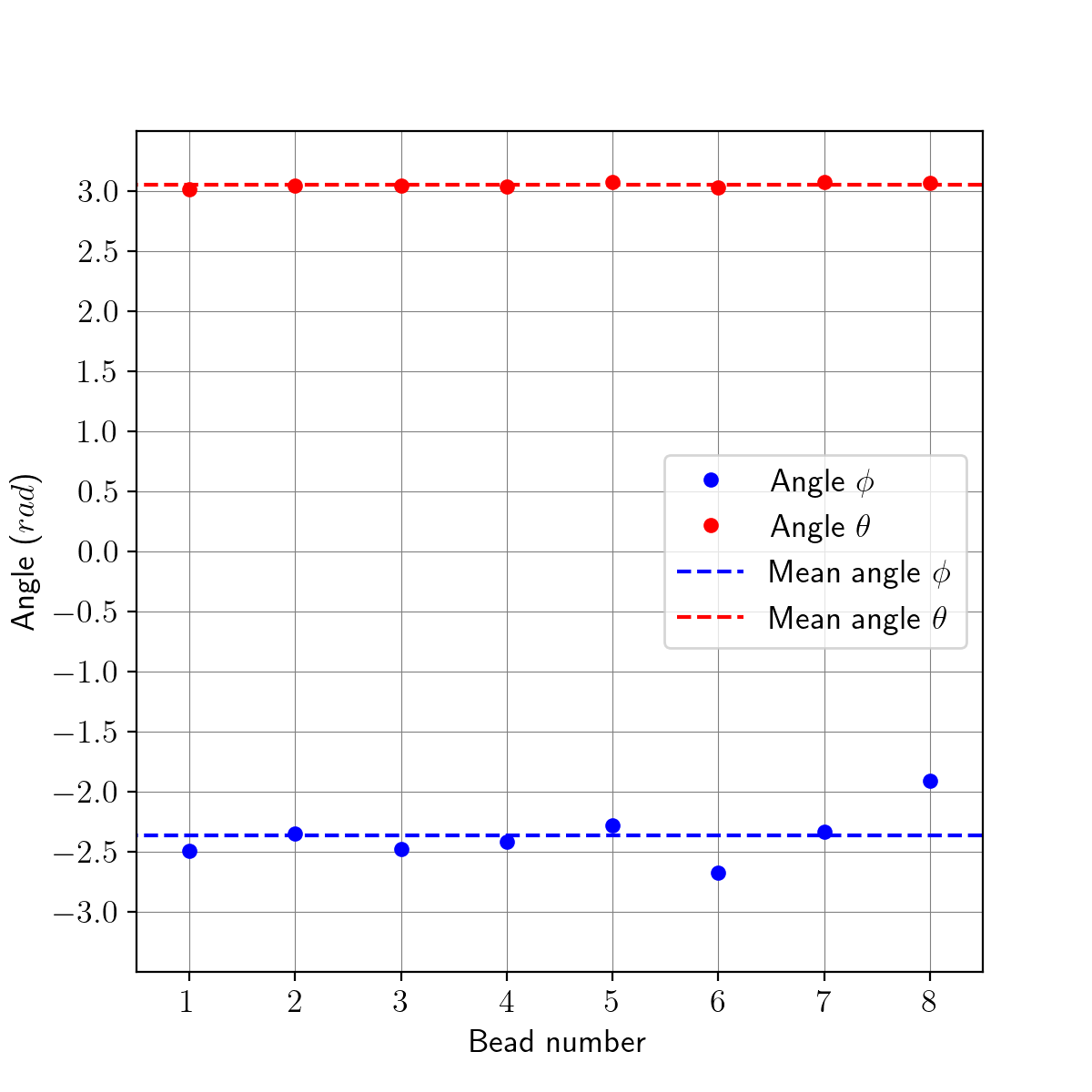}
\caption{Estimated FWHM (left) in the 3 principal component directions denoted by $X'$, $Y'$ and $Z'$, and angles (right) for the eight isolated beads.}
\label{fig:FWHM_muscle}
\end{figure}

\begin{figure}[H]
\centering
\includegraphics[scale=0.3]{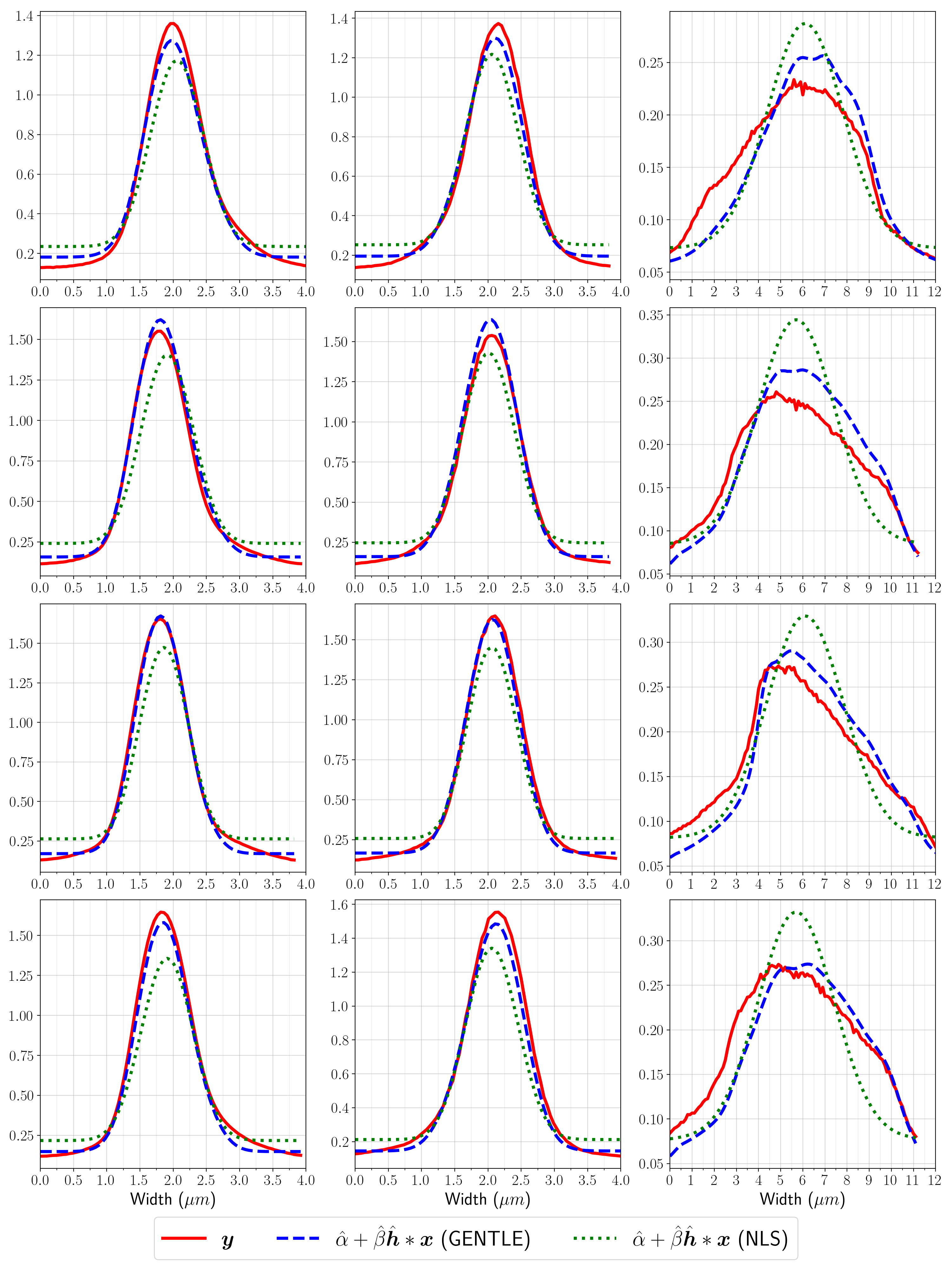}
\caption{Profiles along axis X (left), Y (center) and Z (right) for 4 of the 8 isolated beads (rows). We compare the intensity profiles of the observation $\boldsymbol{y}$ (red), the reconstruction $\hat{\alpha} + \hat{\beta}\hat{\boldsymbol{h}} \ast \boldsymbol{x}$ obtained with GENTLE (blue) and the reconstitution obtained with NLS (green). The profiles for the remaining 4 beads, omitted for space reason, are consistent with the displayed ones.
}\label{fig:intensity_profiles} 
\end{figure}

    \subsection{Results for the restoration step}
%\textcolor{red}{On oscille entre des verbes au présent et au passé.}\textcolor{blue}{Ok j'ai unifié.}
In the restoration phase, we first determined the noise parameters $a$ and $b$ across the entire volume using the methodology described in Algorithm \ref{algo:noise_params} with $s=5$ and $J=25$. Figure \ref{fig:noise_params_muscle} displays the linear regression corresponding to the fourth step of the algorithm. The close alignment of the points confirms the appropriateness of the heteroscedastic noise model proposed.

Subsequently, the standard deviations $(\sigma([\bs{H} \bar{\bs{x}}]_m))_{1 \leq m \leq M}$ were approximated using \eqref{eq:linear_var} with the estimated $a$, $b$, and a smoothed version of the image $\bs{H} \bar{\bs{x}}$ corresponding to the convolution of $\bs{y}$ with a $3\times 3 \times 3$ uniform kernel. 
We solved Problem \eqref{eq:pb_restoration_const} by setting $B = M$ and $\delta=0.1$, using Algorithm \ref{algo:P-MMS}. The penalty parameters $(\gamma_j)_{j\in \N}$ were set to $\gamma_j = (1.5j)^{1.2}$ and the precision parameters $(\varepsilon_j)_{j\in \N}$ were chosen as $
\varepsilon_j = 10^5\big/(\gamma_j)^{0.5}$.

In Figure \ref{fig:restoration_muscle}, we display slices from a section of the raw image alongside corresponding slices from the restored images obtained using GENTLE, as well as with various state-of-the-art methods described hereafter. Note that comparisons include only 3D non-blind, or 2D blind deconvolution approaches, due to the lack of publicly available 3D blind methods.
\begin{itemize}
\item The Huygens software, which is commercially licensed and popular in the MPM community. This software performs deconvolution of 3D volumes based on the theoretical PSF, as outlined in Equations \eqref{eq:FWHM_theo_XY} and \eqref{eq:FWHM_theo_Z} in our paper.
\item BlindDeconv, the blind deconvolution method from \cite{krishnan2011blind}, designed primarily for 2D images and applied here to each 2D slice in the XY plane.
\item The Richardson-Lucy deconvolution algorithm from \cite{richardson1972bayesian}, applied using the PSF estimated by our method.
\item SplitGrad \cite{chouzenoux2012majorize}, a regularized adaptation of the Richardson-Lucy algorithm, run with our estimated PSF.
\end{itemize}
For methods requiring a regularization parameter (namely, BlindDeconv and SplitGrad), we selected the best output based on subjective visual improvement, as there is no ground truth for our dataset. These results are shown in Figure \ref{fig:restoration_muscle}. 

The all-in-one restoration with the commercial Huygens software appears unconvincing because neither the noise nor the blur has been effectively reduced. As expected, the 2D blind deconvolution method struggles with deconvolution in the Z-direction. While Richardson-Lucy and SplitGrad yield visually reasonable outcomes. In contrast with the image restored with our GENTLE method, their ability to clearly reveal myofibrils (round structures at the fiber level) is limited. Additionally, Figure \ref{fig:3D_visualization} provides a 3D visualization of both the raw and restored images, and Figure \ref{fig:profiles_z_restoration} shows their profiles along the Z-axis, demonstrating effective denoising and deconvolution by our GENTLE pipeline.
% the presence of myofibrils appears clearly in the optical plan (XY) and in the optical axis (YZ). The myofiber structure is following the (OY) axis. 

\begin{figure}
    \centering
    \includegraphics[scale=0.5]{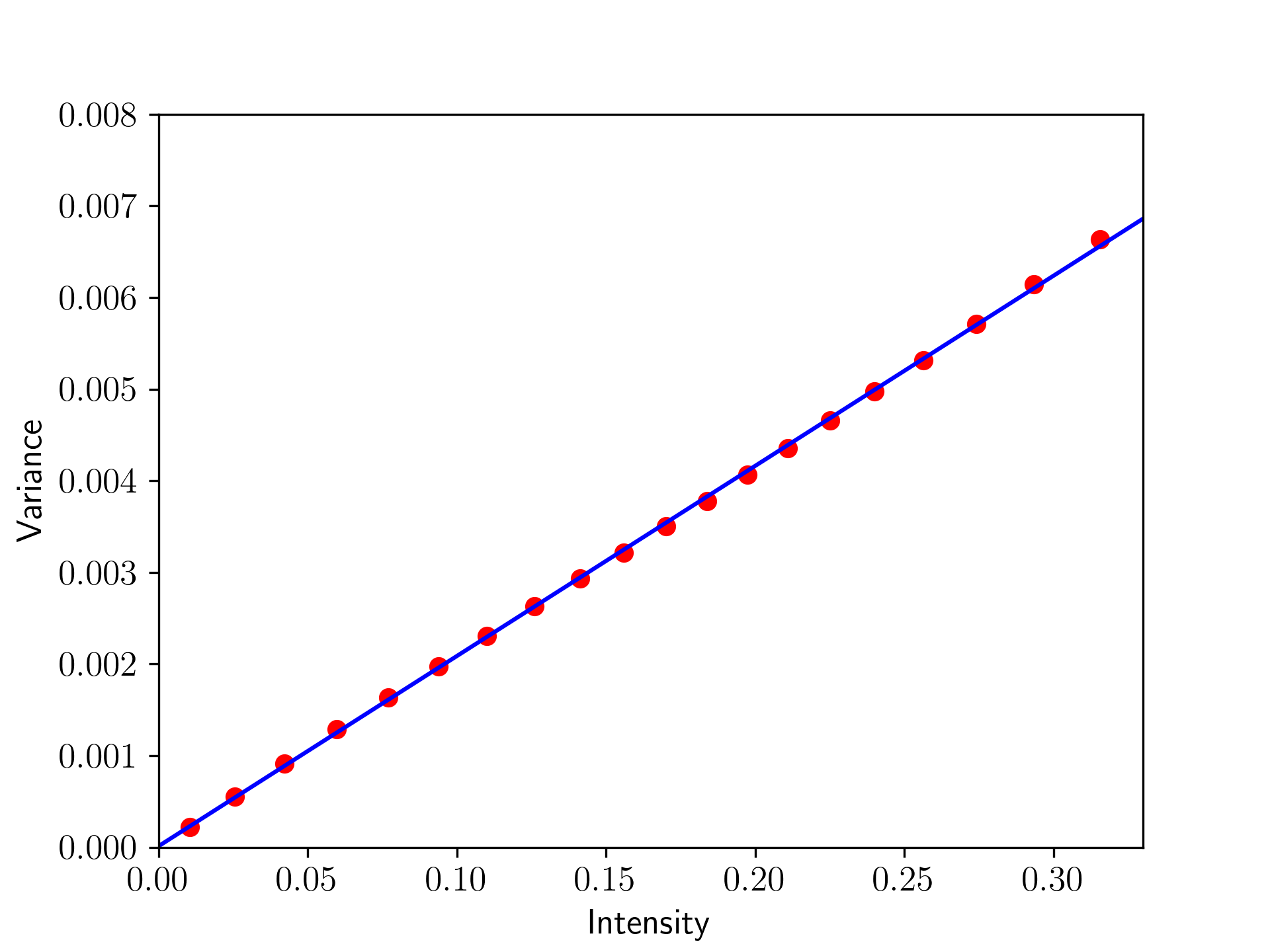}
    \caption{Linear regression (blue line) performed on the couples of values $(\hat{I}_j, \hat{\sigma}_j^2)$ (red dots) for the estimation of noise parameters $(a, b) \in \R^2$ on the mouse muscle 3D image. }
    \label{fig:noise_params_muscle}
\end{figure}

% \begin{figure}
%     \centering
%         \begin{tabular}{ll}
%            \subfigure[]{\includegraphics[scale=0.45]{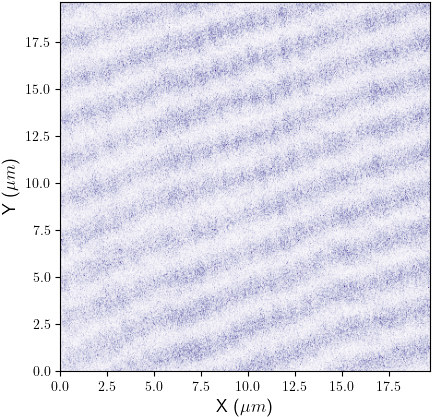}}  &  \subfigure[]{\includegraphics[scale=0.45]{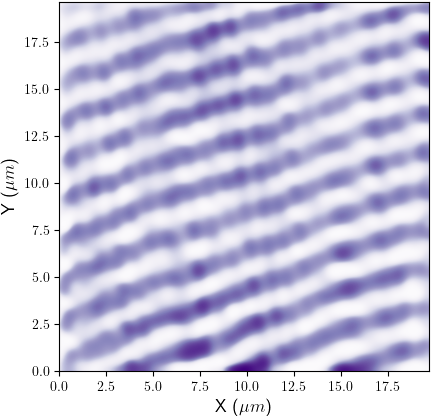}} \\
%             \subfigure[]{\includegraphics[scale=0.45]{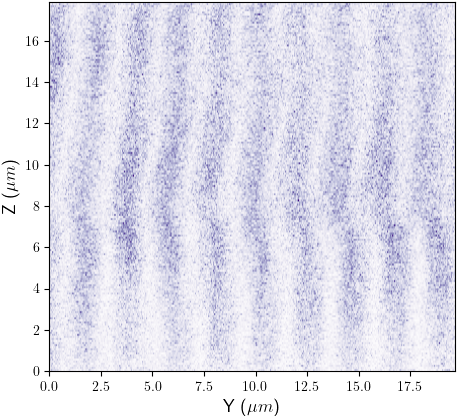}}  &  \subfigure[]{\includegraphics[scale=0.45]{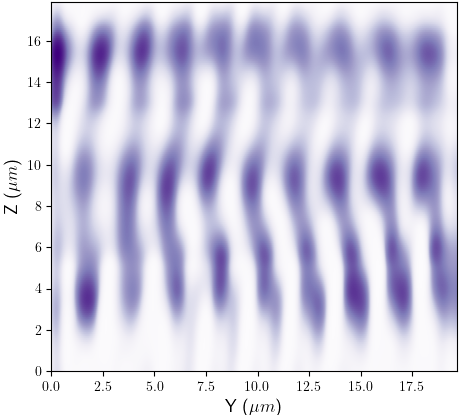}} \\
%              \hspace{1cm}\subfigure[]{\includegraphics[scale=0.25]{3Dbrute_Purples.png}} \hspace{0.5cm}  & \hspace{0.5cm} \subfigure[]{\includegraphics[scale=0.25]{3Drestored__delta0.1_eta1.0_Purples.png}}  
%         \end{tabular}
%       \\
%     \caption{Raw image (left column) and restored image (right column) of a mouse muscle sample. For both 3D images, we display the same slice in the plane XY (first row) and YZ (second row), as well as a 3D visualization (third row).}
%     \label{fig:restoration_muscle}
% \end{figure}

\begin{figure}
    \centering
        \begin{tabular}{@{}l@{}l@{}}
           \includegraphics[scale=0.3]{im_brute_XY.png}  & \includegraphics[scale=0.3]{im_brute_YZ.png}   \\
           \includegraphics[scale=0.3]{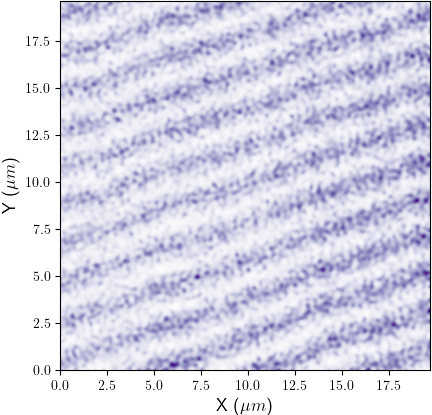} &  \includegraphics[scale=0.3]{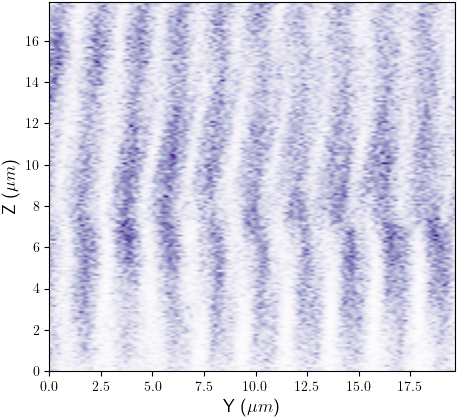} \\
           \includegraphics[scale=0.3]{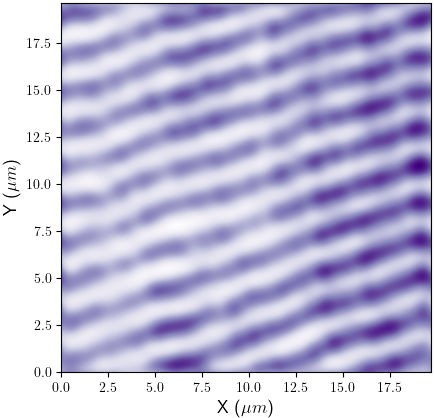} &  \includegraphics[scale=0.3]{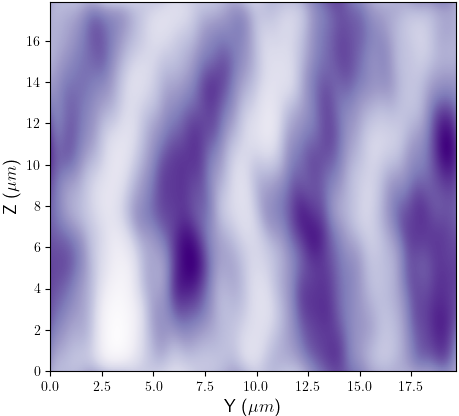} \\
           \includegraphics[scale=0.3]{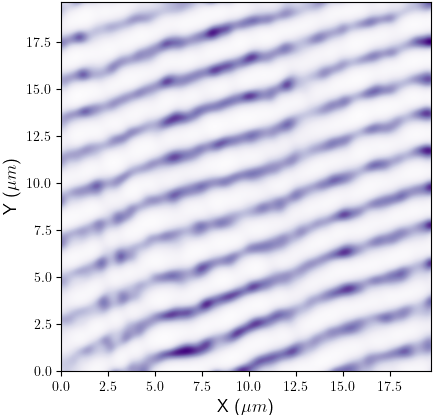} &  \includegraphics[scale=0.3]{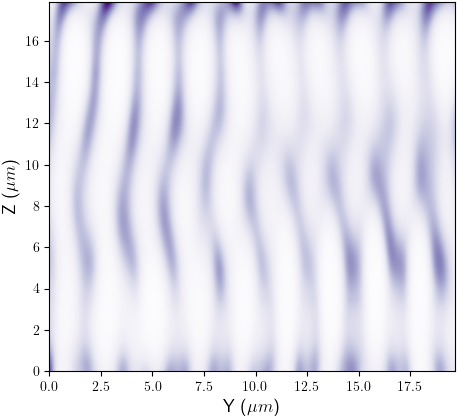} \\
           \includegraphics[scale=0.3]{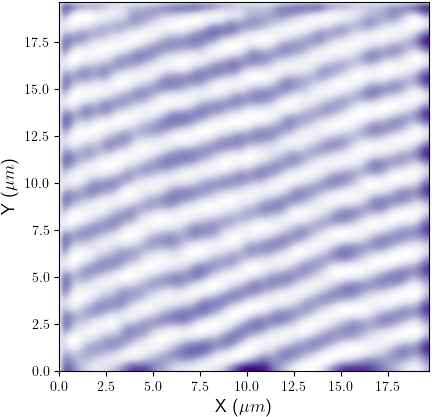} &  \includegraphics[scale=0.3]{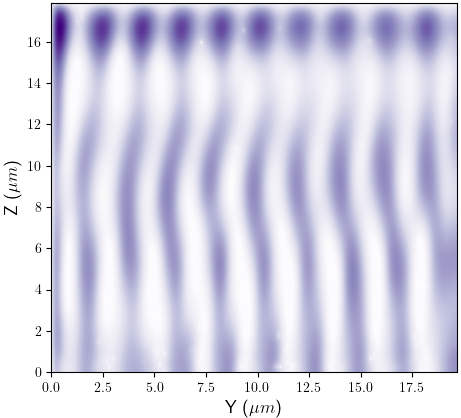} \\
            \includegraphics[scale=0.3]{im_restored_delta0.1_eta1.0_XY.png} &  \includegraphics[scale=0.3]{im_restored_delta0.1_eta1.0_YZ.png} 
        \end{tabular}
      \\
    \caption{Raw image (first row) and restored image with Huygens software (second row), Krishnan et al.'s 2D blind deconvolution method \cite{krishnan2011blind} (third row), Richardson-Lucy \cite{richardson1972bayesian} (fourth row), SplitGrad (fifth row), and our proposed method GENTLE (sixth row).
    For all 3D images, we display the same slice (90th along the Z-axis) in the plane XY (first column) and the same slice (50th along the Y-axis) in the plane YZ (second column).}
    \label{fig:restoration_muscle}
\end{figure}

\begin{figure}
    \centering
    \includegraphics[scale=0.25]{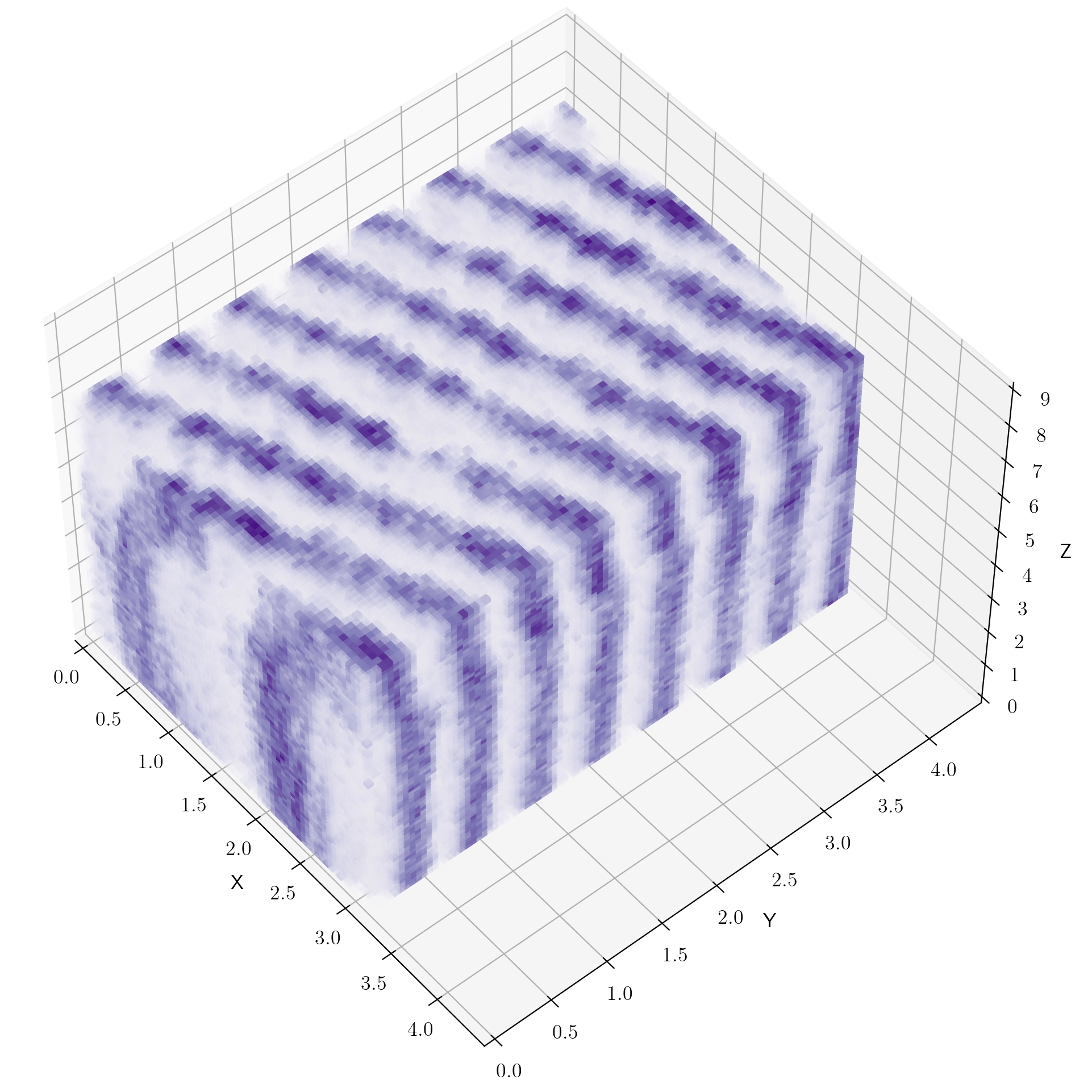}
    \includegraphics[scale=0.25]{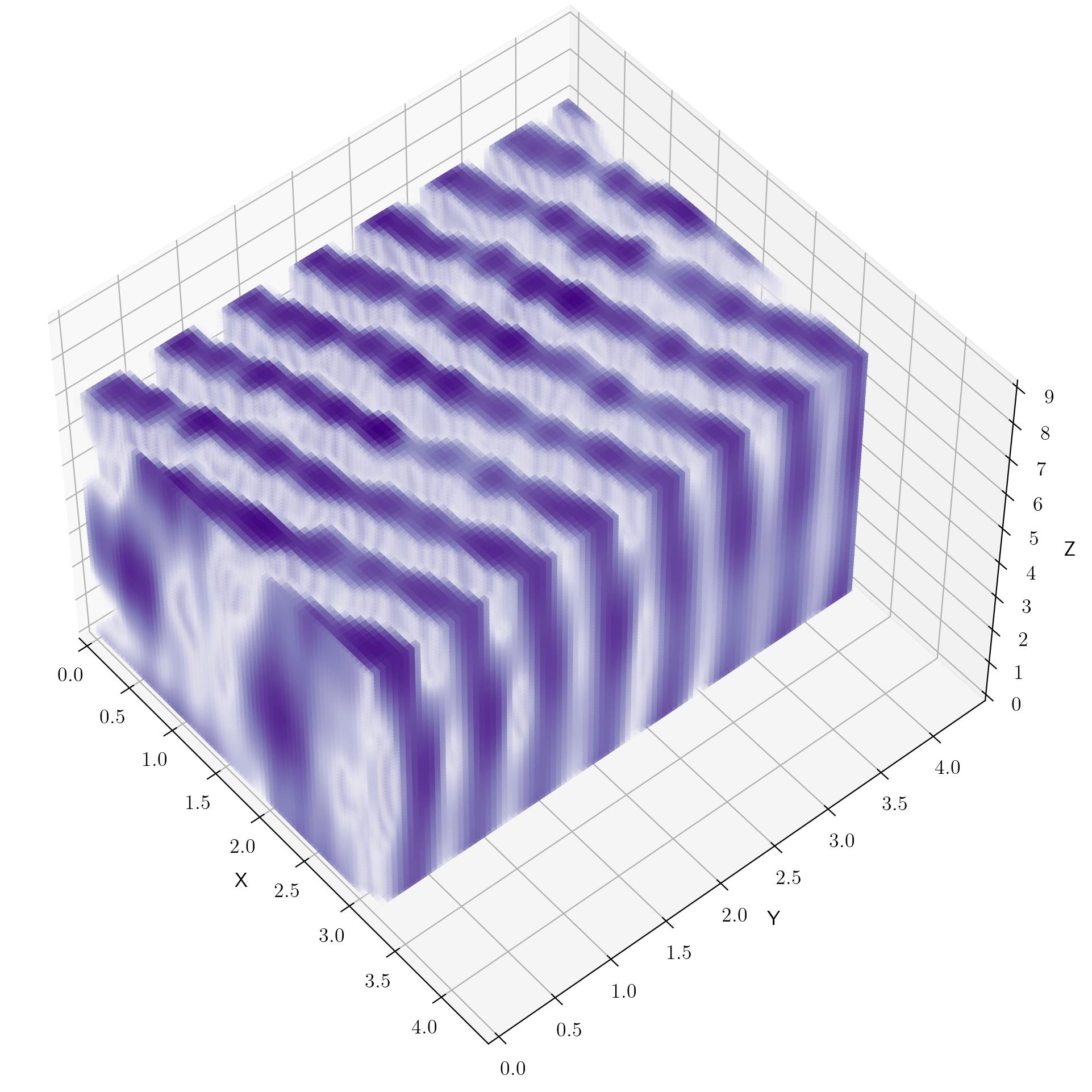}
    \caption{3D visualization of the raw volume and the restored volume using GENTLE.}
    \label{fig:3D_visualization}
\end{figure}

\begin{figure}
    \centering
    \includegraphics[scale=0.3]{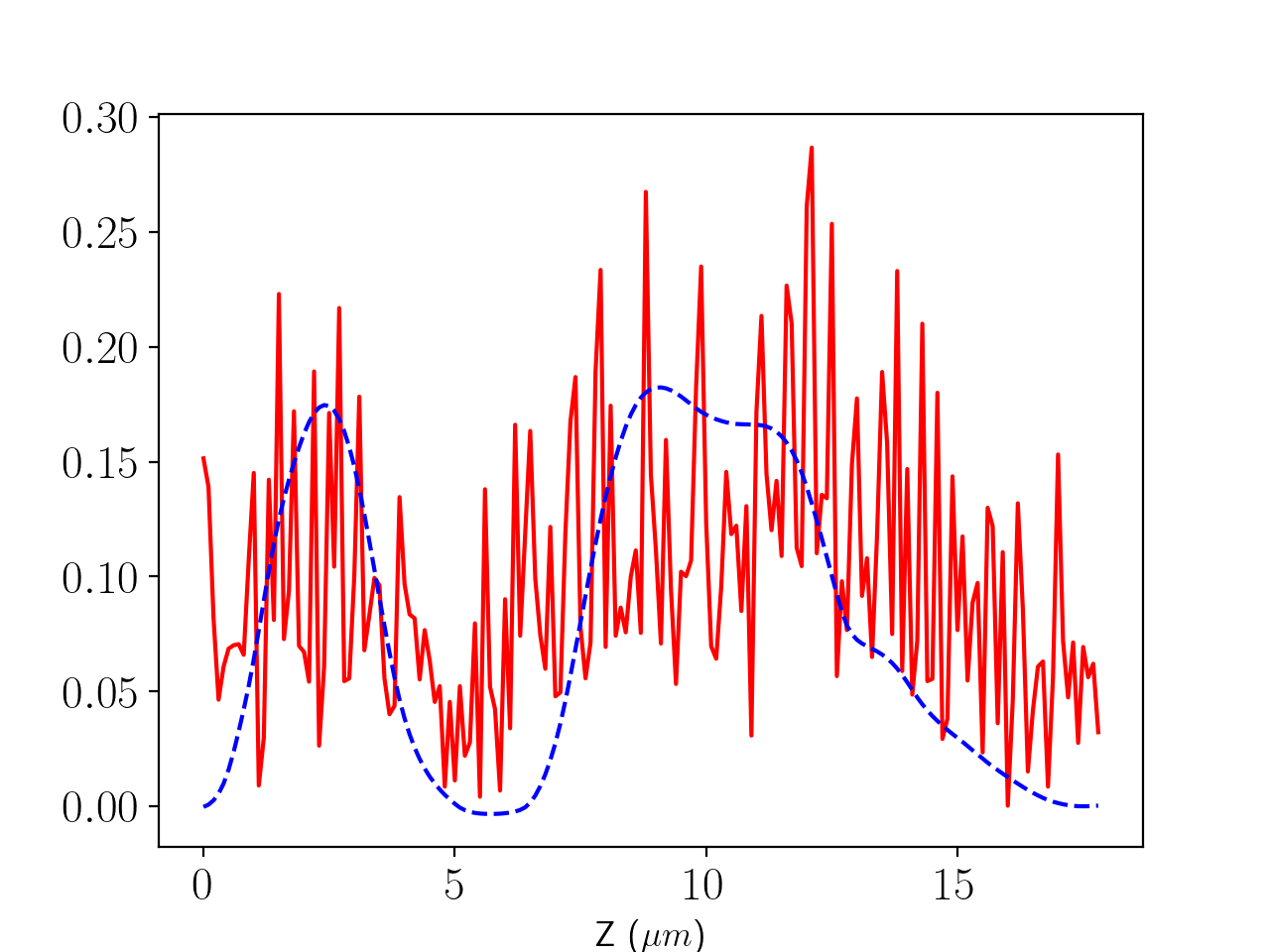}
     \includegraphics[scale=0.3]{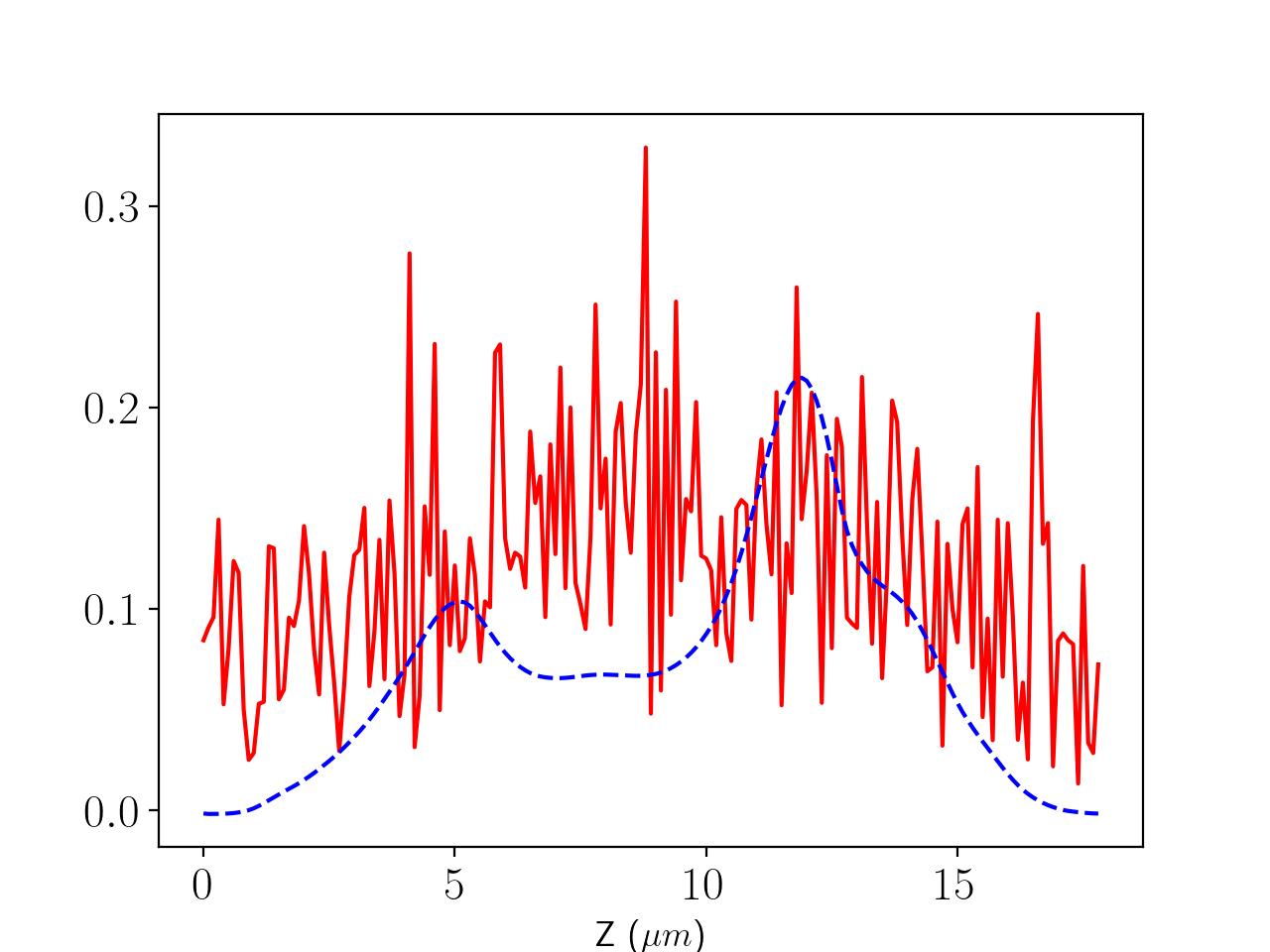}
      \includegraphics[scale=0.3]{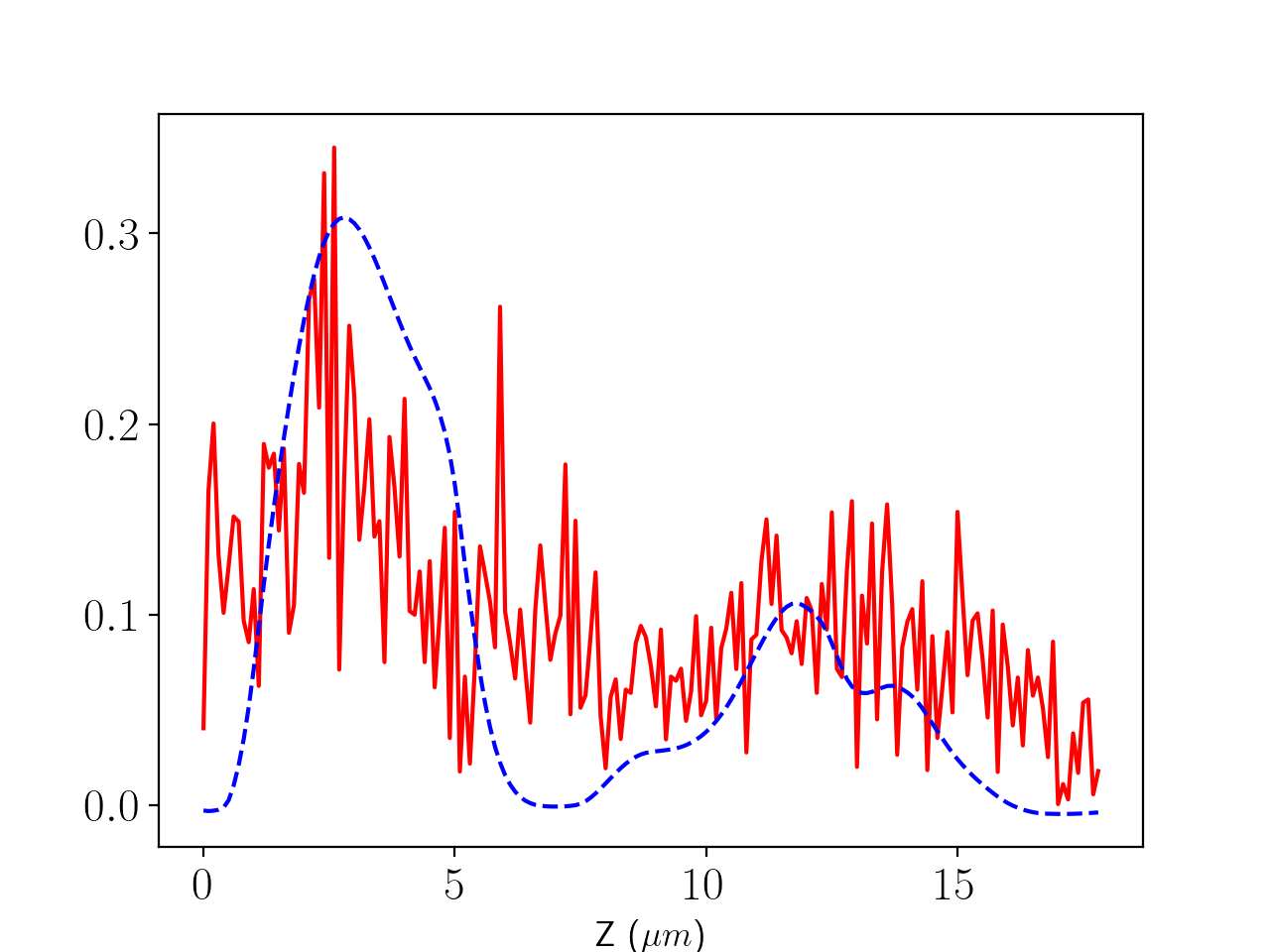}
    \caption{Profiles of the raw image (red plain line) and the restored image using GENTLE (blue dashed line) along the Z-axis, at three different positions $(X, Y)$: $(30,70)$, $(3,5)$, and $(130,25)$. }
    \label{fig:profiles_z_restoration}
\end{figure}

\section{Conclusion}

In this paper, we introduced an end-to-end approach for addressing the MPM image restoration problem. We decomposed the primary problem into two sub-problems: PSF estimation and image deconvolution, and for both, we proposed original formulation and optimization methods. Our results on simulated data, real beads imaged under optimal conditions, and mouse muscle samples, demonstrate the efficiency of our approach.

As highlighted in the paper, our method uses different noise models in its two distinct phases: initially a Gaussian model, and then a Poisson-Gaussian model. While the current setup has proven effective, considering the same heteroscedastic noise model throughout the methodology could be a promising next step to improve further the results.

Additionally, a future direction for this research could involve extending the PSF estimation strategy to other objects than beads. Given the generality of the method, it has potential applicability to any reference object with known geometrical structure.

On the restoration side, there is potential to investigate advanced regularization techniques to enhance visual outcomes as our framework is versatile, and not restricted to TV-based regularization.
Lastly, the heteroscedastic noise model we proposed for MPM deserves deeper exploration. A study analyzing how parameters $a$ and $b$ fluctuate based on different optical settings and depth in the sample could provide insights on the physical properties of the noise.

\section*{Acknowledgements}

This work was supported by the European Research Council Starting Grant MAJORIS
ERC-2019-STG-850925 and by the ANR Research and Teaching Chair BRIDGEABLE in Artificial Intelligence.

The authors would like to thank V\'eronique Blanquet, Laetitia Magnol et Alexis Parent\'e for preparing the mouse muscle samples with microbeads. 

\newpage
\bibliographystyle{abbrv}
\bibliography{biblio}

% \newpage
% \input{response_referees}
\end{document}